\definecolor{myblue}{rgb}{0.8,0.8,1}
\definecolor{myred}{rgb}{1,0.8,0.8}
\definecolor{mygreen}{rgb}{0.8,1,0.8}
\journal{TBA}
\newcommand{\diff}{\mathrm{d}}
\newcommand\norm[1]{\left\lVert#1\right\rVert}
\newtheorem{Assumption}{Assumption}
\newtheorem{theorem}{Theorem}
\titleformat*{\section}{\Large\bfseries}
\titleformat*{\subsection}{\large\bfseries}
\titleformat*{\subsubsection}{\bfseries}
\DeclareMathOperator*{\argmin}{argmin}
\newcommand{\rmes}{p}
\newcommand{\rcompr}{\tilde{p}}
\newcommand{\rcompd}{q}
\newcommand{\mfT}{{\mathfrak{T}}}
\newcommand{\mfO}{{\mathfrak{O}}}
\newcommand{\tT}{{t\in\mfT}}
\newcommand{\mcA}{{\mathcal{A}}}
\newcommand{\mcC}{{\mathcal{C}}}
\newcommand{\mcF}{{\mathcal{F}}}
\newcommand{\Ffct}{{\hat{G}}}
\newtheorem{proposition}{Proposition}
\newtheorem{lemma}{Lemma}
\newtheorem{corollary}{Corollary}
\newcommand\abs[1]{\left|#1\right|}
\begin{document}
\begin{frontmatter}

\title{\textbf{Latency and Liquidity Risk}}
\tnotetext[label0]{SJ would like to acknowledge support from the Natural Sciences and Engineering Research Council of Canada (grants RGPIN-2018-05705 and RGPAS-2018-522715). LSB acknowledges support from CONACyT, M\'exico and the Mathematical Institute, University of Oxford.}

%\author[label1,label2,label3]{Authors go here}
\author[label1,label2]{\'{A}lvaro Cartea}

\address[label1]{Mathematical Institute, University of Oxford}
\address[label2]{Oxford-Man Institute of Quantitative Finance}
\ead{alvaro.cartea@maths.ox.ac.uk}

\address[label4]{Department of Statistical Sciences, University of Toronto}

\author[label4]{Sebastian Jaimungal}
\ead{sebastian.jaimungal@utoronto.ca}

\author[label1]{Leandro S\'{a}nchez-Betancourt}
\ead{leandro.sanchezbetancourt@maths.ox.ac.uk}

\begin{abstract}
Latency (i.e., time delay) in electronic markets affects the efficacy of liquidity taking strategies. During the time liquidity takers process information and send marketable limit orders (MLOs) to the exchange, the limit order book (LOB) might undergo updates, so there is no guarantee that MLOs are filled. We develop a latency-optimal trading strategy that improves the marksmanship of liquidity takers. The interaction between the LOB and MLOs is modelled as a marked point process. Each MLO specifies a price limit so the order can receive worse prices and quantities than those the liquidity taker targets if the updates in the  LOB are against the interest of the trader. In our model,   the liquidity taker balances the tradeoff between missing trades and the costs of walking the book. We employ techniques of variational analysis to obtain the optimal price limit of each MLO the agent sends. The price limit of a MLO  is characterized as the solution to a new class of forward-backward stochastic differential equations (FBSDEs) driven by random measures. We prove the existence and uniqueness of the solution to the FBSDE and numerically solve it to illustrate the performance of the latency-optimal strategies. 
\end{abstract}

\begin{keyword}
%% keywords here, in the form: keyword \sep keyword
Marked point processes \sep high-frequency trading \sep algorithmic trading \sep latency \sep forward-backward stochastic differential equations.
%% MSC codes here, in the form: \MSC code \sep code
%% or \MSC[2008] code \sep code (2000 is the default)
\end{keyword}

\end{frontmatter}

\section{Introduction}
 
Speed to make decisions and to access the market is a  key element  in the success of trading strategies in electronic markets. Liquidity providers monitor and update their limit orders (LOs) resting in the limit order book (LOB),  and liquidity takers send orders that target LOs. The efficacy of the strategies of the makers and takers of liquidity depends on their latency in the marketplace. Latency is the time delay between an exchange streaming market data to a trader, the trader processing information and making a decision,  and the exchange receiving the instruction from the trader. Thus, due to latency, there is no guarantee that liquidity providers can place a LO in a desired queue position in the book or withdraw a stale quote before it is picked off by another trader. 

Furthermore, there are  no assurances that marketable limit orders (MLOs)  from liquidity takers, which aim at a quantity and price they observed in the LOB, hit the desired target.  A MLO is a liquidity taking order for immediate execution against the LOs resting in the book, and each  MLO specifies the quantity of the security (e.g., equity, currency pairs, futures, etc.) and a price limit to execute against LOs.\footnote{A marketable order and a market order differ in that the marketable limit order walks the LOB until it reaches the limit price specified by the trader, while a market order walks the LOB until it is filled in full.} Due to latency,  by the time the exchange processes a MLO, prices and quantities  could have  improved,  so the order is filled at a better price, or prices and quantities could have worsened, so the order is filled if the limit price allows, otherwise the order is rejected.

In this paper, we focus on how latency affects the marksmanship of liquidity takers and we develop a latency-optimal trading strategy that accounts for the time delays in the marketplace.  We frame the interaction between the LOB and MLOs as a marked point process (MPP). In our model,  the agent sends buy/sell MLOs at random times to partly camouflage her order flow, and before the order reaches the exchange, the LOB undergoes quantity and price updates.  We assume the agent sends fill-or-kill MLOs, that is, the orders are either filled in full or  rejected.\footnote{This is in contrast to an immediate-or-cancel order, which has the property that the order can be partially filled if there is liquidity in the LOB that meets the requirements of the MLO. The unfilled portion of the order is rejected. }  The price limit of the MLO consists of the best quote the agent observes at the time she decides to trade and a discretion to walk the LOB.

The LOB is a moving target, so liquidity takers hit or miss the LOs they are attempting to execute. Everything else being the same, the chances of filling a MLO increase if the agent is willing to receive quantities and prices that are worse than those of the best quotes the agent observes in the LOB when she decides to trade.   If the discretion to walk the book is unlimited, the MLO will be filled, but potentially at much worse prices than those of the best quotes the agent observed. On the other hand, if the updates in the LOB are in the  interest of the agent, the MLO will be filled at  better prices than those of the LOs that the agent targeted.

In our model, the agent balances the tradeoff between missing trades   and the costs from walking the LOB over a trading window (e.g., minutes, hours, days, etc.). For each liquidity taking order, the strategy   optimizes the discretion  of the MLO, while it penalizes both the number of missed trades and the costs accrued to the strategy over the trading horizon.   We employ techniques of variational analysis to obtain the optimal discretion  for each MLO the agent sends, which we  characterize  as the solution to a forward-backward stochastic differential equation (FBSDE).  We show existence and uniqueness of the solution to the forward and backward parts of the FBSDE and show existence and uniqueness of the solution to the full FBSDE. To  the best of our knowledge, uniqueness and existence of the resulting random-measure driven FBSDE is not covered  in the extant literature, and the particular form itself appears to be new.

In the agent's performance criterion, when the penalty for missing trades is linear in the expected number of rejected trades, we obtain the optimal strategy in closed-form -- the latency-optimal strategy consists of sending all MLOs with a fixed discretion. When the penalty for missing trades is quadratic in the expected  number of rejected trades, we solve the  FBSDE numerically. We illustrate the performance of the latency-optimal strategies for a range of model parameters and examine the tradeoff between costs from walking the book and number of missed trades. Finally, we discuss strategies that are cost-neutral to the agent. That is, the latency-optimal strategy is devised so the expected costs from walking the book to fill MLOs when the LOB moves against the agent's interests is the same as the expected benefits (i.e., negative costs) from executing trades at better prices than the ones the agent targets.

%Lit review
 
Several authors  address various aspects of latency in electronic markets.   \cite{MoallemiSaglam13} look at the cost of latency for liquidity takers in equity markets.  They compare the costs of  liquidating one stock with and without time delays in the marketplace to  compute the cost of latency. The work of \cite{SashaWaeber16} shows how to execute a large  order in electronic markets by employing the volume imbalance of the LOB to predict price changes and study the effect of latency in the efficacy of the execution strategy.   \cite{doi:10.1142/S2382626617500095} employ data from Nasdaq-Omx  and also find that as latency increases, the informational content in the volumes of the LOB diminishes.

\cite{CarteaLeo2018} employ proprietary foreign exchange data to show how latency and  volatility of the midprice of the security affect the fill ratio of liquidity taking strategies. The authors show how traders could employ latency-optimal strategies to improve fill ratios, while minimizing costs, and they show how to compute the  shadow price of latency in foreign exchange markets.   \cite{gao2018electronic}  use Markov decision processes to model the problem of a market maker with latency who trades in a LOB, where the size of the  quoted spread  is always one tick.  The authors find that as latency increases, the profits from making markets decrease.

Recent  literature on high-frequency trading and algorithmic trading discusses various characteristics of trading and how traders use speed  to obtain informational advantages, see e.g., \cite{sophie2018market}. Other strands of the literature discuss the relationship of market quality, the speed of market participants, and stochastic liquidity,  see for example  \cite{almgrensiam} and \cite{gueant2016financial} for trading in illiquid markets. \cite{barger2018optimal} model the rapid updates of the best quotes in the LOB to propose a model of stochastic price impact.
 
The remainder of the paper proceeds as follows. Section \ref{sec:optimal discretion to walk book} proposes the agent's performance criterion and characterizes the latency-optimal strategy as the solution to a FBSDE. Section \ref{sec:existence and uniqueness}  shows existence and uniqueness of the solution of the forward and backward part of the FBSDE, and existence and uniqueness of the solution to the full FBSDE.  Section  \ref{sec:optimality} shows that the candidate control we find is the global optimum and Section \ref{sec:numerical results}  discusses the performance of the strategy for various scenarios. We conclude in Section \ref{sec:conclusions} and collect some proofs in the Appendix.
 
\section{Optimal discretion to walk the book }\label{sec:optimal discretion to walk book}

\subsection{Latency: the LOB as a  moving target}
Liquidity takers in electronic markets face a moving target problem. Traders send orders that target a price and quantity  they observe in the LOB, but due to latency, when the order arrives in the exchange, the target could have moved. If prices and quantities worsen,  the agent's order is rejected, and  if prices and quantities  improve or do not worsen, the order is filled.

We frame the moving target problem as a MPP  $\mathcal{N}=\{(T_n,\,Z_n)\}_{(n\geq1)}$  in the probability space $\left(\Omega,\,\mathcal{F},\,\mathbb{P}\right)$. Here, $(T_n)$ is an increasing sequence of random points in $(0,T]\bigcup \{\infty\}$, which represent the times when the agent sends MLOs to the exchange, and $(Z_n)$  is a sequence of marks, which  represent the shock to the average price per share due to changes in  prices and quantities. 

We assume that each order is for one unit of the security or for a lot of securities, where the lots have a fixed size throughout the trading horizon. When the volume of the MLO is in lots of the security, the mark $Z$ represents a shock to the LOB commensurate with the volume of the MLO.

As in \cite{confortola2016}, we define the sample space to be  $\Omega={\bigcup}_n \{T_n>T\}$, where $T\in (0,\infty)$ denotes a fixed time horizon. The filtration $\left(\mathcal{F}_t\right)_{t\geq 0}$ is  generated by $\mathcal N$ and is the smallest filtration such that for each $n$, the point $T_n$ is a stopping time and the mark $Z_n$ is $\mathcal{F}_{T_n}$-measurable. We use predictable processes to mean the left-continuous version of a process, see Theorem 7.2.4 in \cite{Cohen15}.

The random measure  associated with  $\mathcal{N}$ is 
\begin{equation*}
\rmes(\diff t,\diff z)=\sum_{n\geq 1: T_n\leq T}\mathfrak{D}_{(T_n,Z_n)}(\diff t,\diff z)\,,
\end{equation*}
where $\mathfrak{D}$  denotes the Dirac measure, and we assume that
\begin{equation}\label{Assumption_S.I.}
\mathbb{E}\left[\left(\rmes\left([0,T],\,\mathbb{R}\right)\right)^2\right]<\infty\qquad\text{and}\qquad \mathbb{E}\left[\int_0^T\int_{\mathbb{R}}\abs{z}\,p(\diff z,\diff t)\right]<\infty\,.
\end{equation}

We denote by $\rcompr$ the predictable compensator of the random measure $\rmes$, which admits the following decomposition  
\begin{equation}\label{compensator}
\rcompr(\diff z,\diff t)=\phi_t(\diff z)\,\diff A_t\,.
\end{equation}
Here, the compensator has the property that for $\rcompd:=\rmes-\rcompr$ and any integrable and predictable process $H$, the stochastic integral  $\left(H\star \rcompd\right)_t=\int_0^t\int_{\mathbb{R}}H_s\, \rcompd(\diff s,\diff z)$ is a martingale. In \eqref{compensator}, the predictable process  $(A_t)_{t\in\mfT}$, where $\mfT:=[0,T]$, is the compensator of the counting process of the MLOs, which we denote by $N_t:=\rmes([0,t]\times \mathbb{R})$.
\begin{Assumption}
The process $A$  admits a bounded stochastic intensity so that we may write $A_t=\int_0^t \lambda_u\,\diff u$ for a predictable process $\left(\lambda_t\right)_{\tT}$ and $\exists\,\, \bar{\lambda}\in\mathbb{R}$, such that  $\forall\, (t,\omega)\in[0,T]\times\Omega$,  $\lambda_t(\omega)\leq \bar{\lambda}$.
\end{Assumption}

The density function of the marks is $\phi_t$, which has support in $\mathbb{R}$ and is bounded, and its cumulative distribution function is $\Phi_t$, which  we assume  is uniformly Lipchitz in $[0,T]\times \Omega$ with Lipschitz constant $k$.

Let $\left(\delta_t\right)_{\tT}$ be a predictable process that specifies the  cash per unit of the security (or lots of the security) the agent is willing to  walk the LOB to increase the chances of filling her liquidity taking order, i.e., $\delta$ is the discretion of the MLO. For example, in equity markets,  if the agent sends a buy  order  to lift the offer at the best ask $a_t$, the discretionary amount $\delta_t$ is the extra cash per share the order may walk the book, i.e., $a_t+\delta_t$ is the highest price the agent is willing to pay for one share of equity. 
Similarly, if the agent sends a sell order to hit the best bid $b_t$, the amount $\delta_t$ is the cash discount per share the order may walk the book, i.e.,  $b_t-\delta_t$ is the lowest price the agent is willing to accept to sell one share of equity.

In the examples above, the best bid and best ask prices ($b_t$ and $a_t$) refer to those the agent `observes' when she decides to trade, but due to latency, these prices could be stale. In addition, by the time the exchange processes the order of the agent, prices and quantities in the LOB could have borne further updates. Price changes could be against or in favour of the agent's interest.  When the price per unit of the security moves against the interest of the agent, the order is filled only if the discretion $\delta$ of the MLO is enough to cover the adverse change in price and quantity; we refer to this as a price deterioration. On the other hand, if the price per unit of the security moves in favour of the agent's trade interest, the order is filled at a better price; we refer to this as a price improvement

\subsection{Tradeoff: cost of walking the LOB and missed trades}

The agent must balance the costs of walking the LOB against the number of missed trades as a consequence of her latency in the marketplace. Clearly, if the agent sends orders with infinite discretion to walk the LOB, all orders are filled (we rule out cases in which the LOB is empty) and the costs accrued from walking the LOB are expected to be highest. On the other hand, everything else being equal, lowering discretion, lowers the strategy's cost but increases the number of missed trades.

We discuss the cost for MLOs with volume equal to one unit of the security -- the costs for MLOs where volume is in lots of the security are computed in a similar way. For buy orders, the cost of the strategy is the cash the agent pays for the security minus the price on the offer side of the LOB that the agent targets. Similarly, for sell orders, the cost of the strategy is the target price in the bid side of the LOB minus the cash received for the security. That is, the cost of the strategy is the extra cash paid to walk the LOB, which is zero if the order is not executed. We denote the controlled cost process by $C^\delta= (C^\delta)_{\tT}$ and  
\begin{equation}\label{eqn:cost}
C^{\delta}_t=\int_0^t \int_{\mathbb{R}} z\,\Ffct(\delta_s-z)\,\rmes(\diff z,\diff s)\,,
\end{equation} 
where $\Ffct(x)=1$ if $x\geq 0$ and $\Ffct(x)=0$ otherwise.  

The extra cost  for each filled trade is $z\,\Ffct(\delta_s-z)$, which  can be negative (price improvement),   positive (price deterioration), or zero. This cost is negative when the shock to the LOB is negative ($z<0$), in which case  the order is filled at a better price than that targeted by the agent -- the price improvement is $\abs{z}$. On the other hand, this cost  is positive when the shock to the LOB is positive ($z>0$), in which case the order is filled (because $\delta\geq z$) at a worse price than that targeted by the agent -- the price deterioration is $z$. Finally, when the shock to the LOB is zero ($z=0$) or the trade is missed, the  cost is zero.

The process  $D^\delta= (D^\delta_t)_{\tT}$ denotes the controlled number of misses and  
\begin{equation}
D^{\delta}_t= \int_0^t \int_{\mathbb{R}}G\left(\delta_s-z\right)\,\rmes(\diff z,\diff s)\,,
\end{equation}
where $G(x)=1-\Ffct(x)$.  Recall that the MLO is for one unit of the security or for lots of the security, which are of fixed size throughout the trading horizon. In the latter case, the number of misses is in lots of the security.

 \subsection{Performance criterion}

The agent's performance criterion  is 
\begin{equation}\label{Functional}
J(\delta)=\mathbb{E}\left[C^{\delta}_T +\alpha\,D^{\delta}_T+\gamma\,\left(D^{\delta}_T\right)^2\right]\,,
\end{equation}
where both $\alpha\geq 0$  and $\gamma\geq 0$ are penalty parameters for the total number of missed trades, and the set of admissible strategies is
\begin{equation}\label{eqn:set of admissible strats}
    \mcA:=\left\{\delta = \left(\delta_t\right)_{\tT}\left| \delta\text{ is }\mathcal{F}-\text{{predictable}  and }
\mathbb{E}\left[\sup_{\tT}\left(\delta_t\right)^2 \right]\,<\,\infty \right.\right\}\,.
\end{equation}
The agent wishes to find a control $\delta^*\in\mcA$ that minimizes the performance criterion \eqref{Functional}, that is, the agent solves the problem 
\begin{equation*}
\delta^*=\argmin_{\delta\in\mcA}J(\delta)\,.
\end{equation*}
Note that  $J(\delta)<\infty$ because $G\leq 1$ and \eqref{Assumption_S.I.} holds. We choose the  units of the parameters $\alpha,\,\gamma$, so that the performance criterion has the same units as those of the costs $C$.

In the performance criterion, the penalties for missing trades are not financial costs. Everything else being equal, an increase in the value of the penalty parameters  makes the strategy post orders with higher discretion to walk the LOB. In the extreme case where  one of the penalty parameters is arbitrarily large, the optimal strategy is to post orders with discretion to walk the LOB as deep as necessary to fill the trades, i.e., the MLO with infinite discretion is a market order.

\subsection{Variational Analysis Approach}

We employ techniques of variational analysis to obtain the optimal discretion strategy.  For ease of presentation, we write 
\begin{equation}\label{eqn:performance criterion short}
J(\delta)=J^{\text{C}}(\delta)+\alpha\,J^{\text{LP}}(\delta)+\gamma\,J^{\text{QP}}(\delta)\,,
\end{equation}
where $J^{\text{C}}(\delta)=\mathbb{E}\left[C^{\delta}_T\right]$,  $J^{\text{LP}}(\delta)=\mathbb{E}\left[D^{\delta}_T \right]$, and  $J^{\text{QP}}(\delta)=\mathbb{E}\left[\left(D^{\delta}_T\right)^2 \right]$.

Next, note that
\begin{align}\label{eqn:jc}
J^{\text{C}}(\delta)&=\mathbb{E}\left[\int_0^T \int_{\mathbb{R}} z\,\Ffct(\delta_t-z)\,\rmes(\diff z,\diff s)\right]
%\nonumber\\
%&=\mathbb{E}\left[\int_0^T \int_{\mathbb{R}} z\,\Ffct(\delta_t-z)\,\rcompr(\diff z,\diff s)\right]\\
%&
=\mathbb{E}\left[\int_0^T \int_{\mathbb{R}} z\,\Ffct(\delta_t-z)\,\phi_t(\diff z)\,\diff A_t\right]\,,
\end{align}
and the next proposition   provides expressions for  $J^{LP}(\delta)$ and $J^{QP}(\delta)$.

\begin{proposition}
The following equations hold
\begin{align}\label{eqn:jlp}
J^{\text{LP}}(\delta)&=\mathbb{E}\left[\int_0^T\int_{\mathbb{R}}G(\delta_t-z)\,\rcompr(\diff z,\,\diff t)\right]\,,\\ \label{eqn:lqp}
J^{\text{QP}}(\delta)&=\mathbb{E}\left[\int_0^T\int_{\mathbb{R}}\left(2\,D^{\delta}_{t^-}\,G(\delta_t-z)+G(\delta_t-z)\right)\,\rcompr(\diff z,\,\diff t)\right]\,.
\end{align}
\end{proposition}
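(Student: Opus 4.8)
The plan is to treat the two identities separately, since \eqref{eqn:jlp} is essentially the defining property of the compensator, whereas \eqref{eqn:lqp} requires an application of It\^o's formula to the pure-jump process $D^\delta$ together with one structural feature of $G$.

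First I would establish \eqref{eqn:jlp}. Because $\delta$ is $\mathcal{F}$-predictable, the map $(t,z)\mapsto G(\delta_t-z)$ is a predictable integrand, and it is bounded since $0\le G\le 1$. The second-moment condition in \eqref{Assumption_S.I.} gives $\mathbb{E}\left[N_T\right]<\infty$, so $G(\delta_\cdot-\cdot)$ is integrable against $\rmes$. Writing $\rcompd=\rmes-\rcompr$ and invoking the stated martingale property of the compensated integral $\left(G(\delta_\cdot-\cdot)\star\rcompd\right)$, its terminal expectation vanishes, whence
\begin{equation*}
J^{\text{LP}}(\delta)=\mathbb{E}\left[D^\delta_T\right]=\mathbb{E}\left[\int_0^T\int_{\mathbb{R}}G(\delta_t-z)\,\rmes(\diff z,\diff t)\right]=\mathbb{E}\left[\int_0^T\int_{\mathbb{R}}G(\delta_t-z)\,\rcompr(\diff z,\diff t)\right].
\end{equation*}

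For \eqref{eqn:lqp} I would exploit that $D^\delta$ is a finite-variation pure-jump process with $D^\delta_0=0$, so that It\^o's formula yields
\begin{equation*}
\left(D^\delta_T\right)^2=2\int_0^T D^\delta_{t^-}\,\diff D^\delta_t+\sum_{0<t\le T}\left(\Delta D^\delta_t\right)^2.
\end{equation*}
The key observation is that $G=1-\Ffct$ is $\{0,1\}$-valued, hence $G^2=G$; consequently every jump satisfies $\left(\Delta D^\delta_t\right)^2=\Delta D^\delta_t$ and, using that the points of $\mathcal{N}$ are simple, the quadratic-variation sum collapses to $D^\delta_T$ itself. Rewriting both surviving terms as integrals against $\rmes$ gives
\begin{equation*}
\left(D^\delta_T\right)^2=\int_0^T\int_{\mathbb{R}}\left(2\,D^\delta_{t^-}\,G(\delta_t-z)+G(\delta_t-z)\right)\rmes(\diff z,\diff t).
\end{equation*}
Since $D^\delta_{t^-}$ is left-continuous and therefore predictable, the whole integrand is predictable; taking expectations and swapping $\rmes$ for $\rcompr$ by the same martingale argument then produces \eqref{eqn:lqp}.

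The main obstacle is the one technical point in the quadratic case: to replace $\rmes$ by $\rcompr$ I must verify that the predictable integrand $2\,D^\delta_{t^-}\,G(\delta_t-z)+G(\delta_t-z)$ is integrable, so that its compensated integral against $\rcompd$ is a genuine martingale rather than merely a local one. This follows from \eqref{Assumption_S.I.}: since $D^\delta_{t^-}\le N_{t^-}\le N_T$ and $\int_0^T\int_{\mathbb{R}}2\,D^\delta_{t^-}\,G(\delta_t-z)\,\rmes(\diff z,\diff t)=2\int_0^T D^\delta_{t^-}\,\diff D^\delta_t\le 2\left(D^\delta_T\right)^2\le 2\left(N_T\right)^2$, the bound $\mathbb{E}[(N_T)^2]<\infty$ supplies the required integrability. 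The only other delicate point is that the identity $G^2=G$ and the simplicity of the jumps of $\mathcal{N}$ jointly guarantee the quadratic-variation term contributes exactly $D^\delta_T$, with no double-counting at the jump times.
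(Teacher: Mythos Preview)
Your proposal is correct and follows essentially the same route as the paper: both arguments use predictability to pass from $\rmes$ to $\rcompr$ in \eqref{eqn:jlp}, and for \eqref{eqn:lqp} both apply It\^o's formula to $(D^\delta)^2$ and exploit $G^2=G$ to reduce the jump term, then swap $\rmes$ for $\rcompr$ via predictability. The only cosmetic difference is that the paper invokes the jump integration formula $\diff h(D^\delta_t)=\int_{\mathbb{R}}\bigl(h(D^\delta_{t^-}+G)-h(D^\delta_{t^-})\bigr)\,\rmes(\diff z,\diff t)$ directly, whereas you write It\^o's formula with a separate quadratic-variation sum and then collapse it; your added integrability check using $\mathbb{E}[N_T^2]<\infty$ is a welcome justification that the paper leaves implicit.
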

\begin{proof}
Equation \eqref{eqn:jlp} follows from the predictability of the integrand. Next, we show   \eqref{eqn:lqp}. The  number of missed trades $D^{\delta}_t$ satisfy the SDE
\begin{equation*}
\diff D^{\delta}_t=\int_{\mathbb{R}}G(\delta_t-z)\,\rmes(\diff z,\,\diff t)\,.
\end{equation*}
Let $h(x)=x^2$ and use an integration formula (see \cite{Jeanblanc2009}) to write
\begin{align*}
\diff h(D^{\delta}_t)=\int_{\mathbb{R}} \left(h\left(D^{\delta}_{t^-}+G\left(\delta_t-z\right)\right)-h\left(D^{\delta}_{t^-}\right)\right)\,\rmes(\diff z,\,\diff t)\,.
\end{align*}
Then,
\begin{align*}
\diff\left(D^{\delta}_t\right)^2&=\int_{\mathbb{R}}\left(2\,D^{\delta}_{t^-}\,G(\delta_t-z)+G^2(\delta_t-z)\right)\,\rmes(\diff z,\,\diff t)\\
%\left(D^{\delta}_T\right)^2&=0+\int_0^T\int_{\mathbb{R}}\left(2\,D^{\delta}_{t^-}\,G(\delta_t-z)+G^2(\delta_t-z)\right)\,\rmes(\diff z,\,\diff t)\\
&=\int_{\mathbb{R}}\left(2\,D^{\delta}_{t^-}\,G(\delta_t-z)+G(\delta_t-z)\right)\,\rmes(\diff z,\,\diff t)\,,
\end{align*}
where the second equality holds because $G^2=G$. Integrate from zero to $T$, take expectations, and because the integrand $2\,D^{\delta}_{t^-}\,G(\delta_t-z)+G(\delta_t-z)$ is predictable, obtain
\begin{align*}
\mathbb{E}\left[\left(D^{\delta}_T\right)^2\right]&=\mathbb{E}\left[\int_0^T\int_{\mathbb{R}}\left(2\,D^{\delta}_{t^-}\,G(\delta_t-z)+G(\delta_t-z)\right)\,\rmes(\diff z,\,\diff t)\right]\\
&=\mathbb{E}\left[\int_0^T\int_{\mathbb{R}}\left(2\,D^{\delta}_{t^-}\,G(\delta_t-z)+G(\delta_t-z)\right)\,\rcompr(\diff z,\,\diff t)\right]\,.
\end{align*}
\end{proof}

\subsubsection{Optimal discretion to walk the LOB}
We employ G\^ateaux derivatives to obtain the latency-optimal strategy  that minimizes the performance criterion of the agent.  Let $w,\delta\,\in\,\mcA$. The directional derivative of $J$ at $\delta$ in the direction of $w$ is given by
\begin{equation}
\langle \mathcal{D}\,J(\delta),w\rangle=\lim_{\epsilon\to 0}\tfrac{1}{\epsilon}\left[J(\delta+\epsilon \,w)-J(\delta)\right]\,,
\end{equation}
when the limit exists. Now, let $\mcA'$ be the dual space of $\mcA$. If there is $A'\in \mcA'$ such that $\langle \mathcal{D}\,J(\delta),w\rangle=A'(w)$ for all $w\in \mcA$, then $A'$ is called the G\^ateaux derivative of $J$ at $\delta$. In this paper, the directional derivatives  are elements of the dual of $\mcA$, hence we refer to the directional derivatives as G\^ateaux derivatives. Note that it is trivial to show that $\mcA$ is a linear space over $\mathbb{R}$.

\begin{lemma}\label{Gat:CostLPandQP}
The G\^ateaux derivative at $\delta\in\mcA$ in the direction $w\in\mcA$  of the:
\begin{enumerate}[label=(\alph*)]
\item cost functional $J^{\text{C}}$  is
\begin{align*}
\langle \mathcal{D}\,J^{\text{C}}(\delta),w\rangle&=\mathbb{E}\left[\int_0^T  w_t\,\phi_t(\delta_t)\,\delta_t\,\diff A_t\right]\,;%\\
%&=\mathbb{E}\left[\int_0^T  \delta_t\,w_t\,\phi_t(\diff z)\,\diff A_t\right]\,.
\end{align*}
\item linear penalty functional $J^{\text{LP}}$ is 
\begin{align*}
\langle \mathcal{D}\,J^{\text{LP}}(\delta),w\rangle&=-\mathbb{E}\left[\int_0^T w_t\,\phi_t(\delta_t) \,\diff A_t\right]\,;
\end{align*}
\item quadratic penalty functional $J^{\text{QP}}$ is 
\begin{align*}
\langle \mathcal{D}\,J^{\text{QP}}(\delta),w\rangle&=-2\,\mathbb{E}\left[\int_0^T w_t\,\phi_t(\delta_t)\,\mathbb{E}_{t^-}\left[\int_t^T\int_{\mathbb{R}}G(\delta_s-z') \, \rcompr(\diff z',\,\diff s)\right]  \,\diff A_t\right]\\
&\quad-2\,\mathbb{E}\left[\int_0^T w_t\,\phi_t(\delta_t)\,D^{\delta}_{t^-}\, \diff A_t\right]-\mathbb{E}\left[\int_0^T w_t\,\phi_t(\delta_t)\, \diff A_t\right]\,.
\end{align*}
\end{enumerate}
\end{lemma}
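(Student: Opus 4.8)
The plan is to reduce each of the three functionals to an expectation of a time integral against the compensator $\rcompr(\diff z,\diff t)=\phi_t(\diff z)\,\diff A_t$, so that, once the mark is integrated out, $\delta$ enters only through functions that are smooth in their argument. Integrating against the density turns the discontinuous indicators into $\int_{-\infty}^{\delta_t}z\,\phi_t(z)\,\diff z$ for the cost and $1-\Phi_t(\delta_t)$ for the penalties, both differentiable in $\delta_t$ because $\phi_t$ is bounded and $\Phi_t$ is Lipschitz with constant $k$. I would then form the difference quotient $\tfrac1\epsilon\,[J^{\bullet}(\delta+\epsilon\,w)-J^{\bullet}(\delta)]$, take the increment inside the expectation, and pass to the limit, justifying the interchange of limit and expectation by dominated convergence using the bounded intensity ($\lambda_t\le\bar\lambda$), the integrability in \eqref{Assumption_S.I.}, and the Lipschitz bound on $\Phi_t$.

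For part (a), I write $J^{\text C}(\delta)=\mathbb{E}[\int_0^T(\int_{\mathbb R}z\,\Ffct(\delta_t-z)\,\phi_t(\diff z))\,\diff A_t]$ and observe that the inner integral equals $\int_{-\infty}^{\delta_t}z\,\phi_t(z)\,\diff z$; the fundamental theorem of calculus gives derivative $\delta_t\,\phi_t(\delta_t)$ in $\delta_t$, and the chain rule in the direction $w$ produces the stated expression. Part (b) is the same computation starting from \eqref{eqn:jlp}: the inner integral is $1-\Phi_t(\delta_t)$, whose derivative is $-\phi_t(\delta_t)$. These are routine once the dominated-convergence bound is in place; a mean-value form of the increment, controlled by $(\abs{\delta_t}+\abs{w_t})\,\norm{\phi_t}_\infty\,\abs{w_t}$, supplies the integrable dominating function.

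Part (c) is the crux. Using \eqref{eqn:lqp}, the integrand is $(2\,D^{\delta}_{t^-}+1)\,g_t(\delta_t)$, where $g_t(x):=\int_{\mathbb R}G(x-z)\,\phi_t(\diff z)=1-\Phi_t(x)$. The summand $+g_t(\delta_t)$ is exactly $J^{\text{LP}}$ and, by part (b), contributes the last term. For $2\,D^{\delta}_{t^-}\,g_t(\delta_t)$ I split the difference quotient into the piece that perturbs only the explicit $g_t(\delta_t)$ --- which, since $g_t'=-\phi_t$ and $D^{\delta+\epsilon w}_{t^-}\to D^{\delta}_{t^-}$, yields the middle term $-2\,\mathbb{E}[\int_0^T w_t\,\phi_t(\delta_t)\,D^{\delta}_{t^-}\,\diff A_t]$ --- and the piece carrying the increment $D^{\delta+\epsilon w}_{t^-}-D^{\delta}_{t^-}=\int_0^{t^-}\int_{\mathbb R}(G(\delta_s+\epsilon w_s-z)-G(\delta_s-z))\,\rmes(\diff z,\diff s)$. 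In the latter I exchange the $\diff A_t$ integration with the random-measure integration (Fubini), which brings out the forward weight $\int_{(s,T]}g_t(\delta_t)\,\diff A_t=\int_s^T\int_{\mathbb R}G(\delta_t-z')\,\rcompr(\diff z',\diff t)$; taking expectations and passing to the compensator via the predictable projection replaces $\rmes$ by $\rcompr$ and the forward weight by its $\mathcal F_{s^-}$-conditional expectation, while the Lipschitz bound identifies the compensated increment as $-\phi_s(\delta_s)\,w_s$ in the limit. Renaming $s\leftrightarrow t$ gives the first term.

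The hard part is the differentiation of $D^{\delta}_{t^-}$. The map $\epsilon\mapsto D^{\delta+\epsilon w}_{t^-}$ is piecewise constant with jumps --- each mark crosses the threshold $\delta_s$ at an isolated value of $\epsilon$ --- so it is nowhere pathwise differentiable, and the limit can only survive after compensation, where the density $\phi_t$ smooths the indicator. Two points need care: first, the forward weight $\int_{(s,T]}g_t(\delta_t)\,\diff A_t$ is not predictable, so the passage to $\rcompr$ must go through the predictable projection, which is exactly what generates the inner $\mathbb{E}_{t^-}[\,\cdot\,]$ in the statement and is legitimate because the outer factor $w_t\,\phi_t(\delta_t)\,\diff A_t$ is predictable; second, interchanging the $\epsilon\to0$ limit with the time integral and the expectation for this term must be justified, and here the modulus $\abs{\Phi_t(\delta_s+\epsilon w_s)-\Phi_t(\delta_s)}\le k\,\epsilon\,\abs{w_s}$, together with $G^2=G$ and the bounded intensity, provides the domination needed.
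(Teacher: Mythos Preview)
Your proposal is correct and follows essentially the same route as the paper's proof in Appendix~\ref{proof of lemma}: parts (a) and (b) integrate out the mark to obtain $\int_{-\infty}^{\delta_t}z\,\phi_t(\diff z)$ and $1-\Phi_t(\delta_t)$ and then apply the fundamental theorem of calculus with dominated convergence, while for part (c) your add--and--subtract splitting is exactly the paper's decomposition into QP1, QP2, QP3, and your handling of the $D^{\delta+\epsilon w}_{t^-}-D^{\delta}_{t^-}$ piece via Fubini followed by predictable projection (to produce the inner $\mathbb{E}_{t^-}[\,\cdot\,]$) and passage to the compensator mirrors the paper's QP2 computation line for line.
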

\begin{proof}
See \ref{proof of lemma}.
\end{proof}

The next theorem provides the G\^ateaux derivative of the performance criterion of the agent and provides a characterization of the optimal discretion to walk the LOB.

\begin{theorem}\label{Gat:J}
The G\^ateaux derivative of the functional $J$ at $\delta\in\mcA$ in the direction of $w\in\mcA$ is
\begin{align*}
&\langle \mathcal{D}\,J(\delta),w\rangle=
%\langle \mathcal{D}\,J^{\text{C}}(\delta),w\rangle+\alpha\,\langle \mathcal{D}\,J^{\text{LP}}(\delta),w\rangle+\gamma\,\langle \mathcal{D}\,J^{\text{QP}}(\delta),w\rangle\\
\mathbb{E}\left[\int_0^T  w_t\,\phi_t(\delta_t)\,\left(\delta_t-2\,\gamma\,D^{\delta}_{t^-}-\gamma-\alpha-2\,\gamma\,\mathbb{E}_{t^-}\left[\int_t^T\int_{\mathbb{R}}G(\delta_s-z')\,\rcompr(\diff z',\,\diff s)\right]\right)\,\diff A_t\right]\,,
\end{align*}
and vanishes  in every direction $w\in\mcA$   if and only if there is a process $\delta^*\in\mcA$ such that
\begin{align}\label{First_App_FBSDE}
\delta^*_t&=2\,\gamma\,\mathbb{E}_{t^-}\left[D^{\delta^*}_{T}\right]+\gamma+\alpha\,,
\end{align}
almost everywhere in $\mfT\times\Omega$.  
\end{theorem}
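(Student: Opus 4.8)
The plan is to establish the two assertions in turn: first the closed form of $\langle \mathcal{D}\,J(\delta),w\rangle$, and then the variational characterisation \eqref{First_App_FBSDE}. For the formula, I would exploit that the G\^ateaux derivative is linear in the functional, so that by \eqref{eqn:performance criterion short},
\[
\langle \mathcal{D}\,J(\delta),w\rangle=\langle \mathcal{D}\,J^{\text{C}}(\delta),w\rangle+\alpha\,\langle \mathcal{D}\,J^{\text{LP}}(\delta),w\rangle+\gamma\,\langle \mathcal{D}\,J^{\text{QP}}(\delta),w\rangle\,.
\]
I would then insert the three expressions supplied by Lemma \ref{Gat:CostLPandQP}, observe that each term carries the common factor $w_t\,\phi_t(\delta_t)\,\diff A_t$, and collect the coefficients of that factor. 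This is routine bookkeeping and reproduces the stated integrand with bracket $\delta_t-2\,\gamma\,D^{\delta}_{t^-}-\gamma-\alpha-2\,\gamma\,\mathbb{E}_{t^-}\left[\int_t^T\int_{\mathbb{R}}G(\delta_s-z')\,\rcompr(\diff z',\diff s)\right]$.

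Before addressing the equivalence, I would record the identity that rewrites this bracket in the form of \eqref{First_App_FBSDE}. Splitting $D^{\delta}_T=D^{\delta}_{t^-}+\int_{[t,T]}\int_{\mathbb{R}}G(\delta_s-z)\,\rmes(\diff z,\diff s)$, taking $\mathbb{E}_{t^-}[\cdot]$, and using the predictability of $D^{\delta}_{t^-}$ together with the compensator property of $\rcompd=\rmes-\rcompr$ (the integrand $G(\delta_s-z)$ is predictable and bounded, as $G\le 1$) yields
\[
\mathbb{E}_{t^-}\left[D^{\delta}_T\right]=D^{\delta}_{t^-}+\mathbb{E}_{t^-}\left[\int_t^T\int_{\mathbb{R}}G(\delta_s-z)\,\rcompr(\diff z,\diff s)\right]\,,
\]
where the fact that $A_t=\int_0^t\lambda_u\,\diff u$ is absolutely continuous, hence atomless at $t$, lets me identify the $\rcompr$-integral over $[t,T]$ with that over $(t,T]$. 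Consequently the bracket equals $\delta_t-2\,\gamma\,\mathbb{E}_{t^-}[D^{\delta}_T]-\gamma-\alpha$, and its vanishing $\diff A_t\otimes\diff\mathbb{P}$-a.e.\ is exactly \eqref{First_App_FBSDE}.

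It then remains to prove that $\langle \mathcal{D}\,J(\delta),w\rangle=0$ for every $w\in\mcA$ if and only if this bracket, which I denote $K_t$, vanishes $\diff A_t\otimes\diff\mathbb{P}$-a.e. Sufficiency is immediate: if $\delta=\delta^*$ solves \eqref{First_App_FBSDE} then $K_t\equiv0$ and the integral vanishes for every $w$. For necessity the hypothesis reads $\mathbb{E}\!\left[\int_0^T w_t\,\phi_t(\delta_t)\,K_t\,\diff A_t\right]=0$ for all $w\in\mcA$, and I would apply a fundamental-lemma-of-the-calculus-of-variations argument tailored to this predictable setting: since $\phi_t$ is supported on all of $\mathbb{R}$ the weight $\phi_t(\delta_t)$ is strictly positive, so testing against the admissible direction $w_t=\mathrm{sgn}(K_t)\,\mathbbm{1}_{\{|K_t|\le M\}}$ turns the integrand nonnegative and forces $|K_t|\,\mathbbm{1}_{\{|K_t|\le M\}}=0$; letting $M\uparrow\infty$ gives $K_t=0$ a.e.

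The step I expect to be the main obstacle is this last one, for two reasons. First, the chosen test direction must lie in $\mcA$, i.e.\ be predictable and satisfy $\mathbb{E}[\sup_{\tT}w_t^2]<\infty$; the truncation at level $M$ secures square-integrability (and $A_T\le\bar\lambda\,T$ keeps the integral finite), but predictability of $w$ requires predictability of $K$. Second, and relatedly, one must verify that $t\mapsto\mathbb{E}_{t^-}\!\left[\int_t^T\int_{\mathbb{R}}G(\delta_s-z)\,\rcompr(\diff z,\diff s)\right]$ is a predictable process, which I would obtain through the predictable projection, so that $K$ is predictable and the variational argument is legitimate. Everything else reduces to the linearity of the G\^ateaux derivative and the compensator identity above.
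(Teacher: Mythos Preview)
Your approach is correct and essentially matches the paper's. The paper also combines Lemma \ref{Gat:CostLPandQP} with the decomposition \eqref{eqn:performance criterion short} to obtain the derivative, and then uses the same compensator identity to rewrite the bracket as $K_t=\delta_t-2\,\gamma\,\mathbb{E}_{t^-}[D^{\delta}_T]-\gamma-\alpha$. For the necessity direction the paper's test function is simply $w_t=K_t$ itself rather than your truncated sign, so the integrand becomes $K_t^2\,\phi_t(\delta_t)$ and the contradiction follows in one step without a limit in $M$; admissibility of this $w$ is argued from $\delta\in\mcA$ together with $D^{\delta}_T\le N_T$ and \eqref{Assumption_S.I.}. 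Your flagged concern about the predictability of $K$ is exactly the right technical point and applies equally to the paper's choice of $w$; the paper asserts it without elaboration.
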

\begin{proof}
By Lemma \ref{Gat:CostLPandQP} and the performance criterion \eqref{eqn:performance criterion short}, the G\^ateaux derivative of $J$ vanishes at
\begin{align}
\delta^*_t&=2\,\gamma\,\mathbb{E}_{t^-}\left[\int_t^T\int_{\mathbb{R}}G(\delta^*_s-z')\,\rcompr(\diff z',\,\diff s)\right]+2\,\gamma\,\left(D^{\delta^*}_{t^-}+\tfrac{1}{2}\right)+\alpha\nonumber\\
&=2\,\gamma\,\mathbb{E}_{t^-}\left[\int_t^T\int_{\mathbb{R}}G(\delta^*_s-z')\,\rmes(\diff z',\,\diff s)\right]+2\,\gamma\,\left(D^{\delta^*}_{t^-}+\tfrac{1}{2}\right)+\alpha\nonumber\\
&=2\,\gamma\,\mathbb{E}_{t^-}\left[D^{\delta^*}_{T}-D^{\delta^*}_{t^-}\right]+2\,\gamma\,\left(D^{\delta^*}_{t^-}+\tfrac{1}{2}\right)+\alpha\nonumber\\
&=2\,\gamma\,\mathbb{E}_{t^-}\left[D^{\delta^*}_{T}\right]+\gamma+\alpha\label{FBSDE_App_raw}\,.
\end{align}
Now we  show that if the G\^ateaux derivative at $\delta$ vanishes in every direction $w$, the control $\delta$ satisfies \eqref{FBSDE_App_raw}. We proceed by contradiction.
Suppose there exists $\hat{\delta}\in\mcA$ such that $\langle \mathcal{D}\,J(\hat{\delta}),w\rangle=0$ for all $w\in\mcA$ and there is $(\mathbb{T},\,\mfO)\in \mathcal{B}(\mfT)\times\mcF_T$ with $\mathbb{L}(\mathbb{T})\,\mathbb{P}(\mfO)>0$ such that $\hat{\delta}_t(\omega)\neq \delta^*_t(\omega) $ for $t\in \mathbb{T}$, and $\omega\in\mfO$, where  $\mathbb{L}(\mathbb{T})$ denote the Lebesgue measure of $\mathbb{T}\in\mathcal{B}(\mfT)$, and $\mathcal{B}(\mfT)$ is the Borel sigma-algebra of $\mfT$. Thus, on  $\mathbb{T}\times \mfO$ we have
\begin{equation*}
\hat\delta_t(\omega)-2\,\gamma\,\mathbb{E}_{t^-}\left[D^{\hat\delta}_T\right](\omega)-\gamma-\alpha \neq 0\,.
\end{equation*}
Hence, $w_t=\hat\delta_t-2\,\gamma\,\mathbb{E}_{t^-} [D^{\hat\delta}_T ]-\gamma-\alpha$ is predictable  and $\mathbb{E} [\sup_{\tT}(w_t)^2 ]<\infty$. Furthermore, the G\^ateaux derivative of $\hat\delta$ in the direction of $w$  satisfies the inequality $\langle \mathcal{D}\,J(\hat{\delta}),w\rangle>0$, which is a contradiction. Therefore,  there is no $(\mathbb{T},\,\mfO)\in \mathcal{B}(\mfT)\times\mcF$ with $\mathbb{L}(\mathbb{T})\,\mathbb{P}(\mfO)>0$ such that $\hat{\delta}_t(\omega)\neq \delta^*_t(\omega)$ for $t\in \mathbb{T}$ and $\omega\in\mfO$. 
\end{proof}

If the value of the quadratic penalty parameter $\gamma$ is zero, the candidate optimal control in \eqref{First_App_FBSDE}   has the simple closed-form expression
\begin{align}\label{simple control gamma zero}
\delta^*_t&= \alpha\,,
\end{align}
which is independent of the number of missed trades. Thus, for $\gamma=0$ the agent sends all MLOs with discretion  $\alpha$ to walk the LOB.

\section{Existence and Uniqueness of the FBSDE}\label{sec:existence and uniqueness}
To the best of our knowledge, the  FBSDE in  \eqref{First_App_FBSDE} is a new class of random measure driven FBSDEs, and there are no uniqueness or existence results in the extant literature. Therefore, in this section we prove existence and uniqueness of the solution of the FBSDE. For FBSDEs in a semimartingale setting see \cite{Antonelli93}. For fully coupled FBSDEs in the Brownian motion case see \cite{Peng99}. For an account of Brownian motion and Poisson processes in FBSDEs, see \cite{Zhen1999}. \cite{Jianming2000}, \cite{Confortola2013}, \cite{confortola2016}, and \cite{bandini2016PhD} study the framework of BSDEs and MPPs. For the study of FBSDEs that arise from vanishing G\^ateaux derivatives in stochastic games stemming from algorithmic trading problems, see \cite{JaimungalPhilippe18} and \cite{JaimungalPhilippe19}.

To streamline the results in this section, we start with a lemma that is useful to prove existence and uniqueness of the solution to the FBSDE \eqref{First_App_FBSDE}.

\begin{lemma}
Let
\begin{align*}
\mcC:=\left\{ U = (U_t)_{\tT} \left| U\text{ is }\mathcal{F}-\text{{adapted} }\; \&\;
\mathbb{E}\left[\sup_{\tT}\left(U_t\right)^2 \right]\,<\,\infty \right. \right\}\,.
\end{align*}
The spaces $(\mcA,\norm{\cdot}_{\infty})$, $(\mcA,\norm{\cdot})$, $(\mcC,\norm{\cdot}_{\infty})$, and $(\mcC,\norm{\cdot})$ are Banach spaces, where \begin{equation*}
\norm{\delta}_{\infty}=\mathbb{E}\left[\sup_{\tT}\abs{\delta_t}\right]\qquad\text{and}\qquad \norm{\delta}=\mathbb{E}\left[\int_0^T\abs{\delta_t}\,\diff t\right]\,.
\end{equation*}
\end{lemma}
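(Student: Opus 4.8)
The plan is to verify, for each of the four spaces, the two defining properties of a Banach space: that it is a normed vector space, and that it is complete. I would treat $\mcA$ and $\mcC$ in parallel (the only difference being predictability versus adaptedness) and likewise treat the two norms in parallel, giving the details for $\norm{\cdot}_{\infty}$ and indicating the analogous steps for $\norm{\cdot}$.

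I would begin with the algebraic and metric structure. Both $\mcA$ and $\mcC$ are closed under addition and scalar multiplication: predictability (resp.\ adaptedness) is preserved under these operations, and the inequality $\mathbb{E}[\sup_{\tT}(\delta_t+w_t)^2]\le 2\,\mathbb{E}[\sup_{\tT}(\delta_t)^2]+2\,\mathbb{E}[\sup_{\tT}(w_t)^2]$ shows the square-integrability of the running supremum is preserved, so both are linear spaces over $\mathbb{R}$. For each candidate norm I would check finiteness, homogeneity, the triangle inequality, and definiteness. Finiteness follows from Jensen's (or the Cauchy--Schwarz) inequality, $\norm{\delta}_{\infty}=\mathbb{E}[\sup_{\tT}\abs{\delta_t}]\le (\mathbb{E}[\sup_{\tT}(\delta_t)^2])^{1/2}<\infty$, together with $\norm{\delta}\le T\,\norm{\delta}_{\infty}$; homogeneity is immediate; the triangle inequality follows from subadditivity of $\sup_{\tT}$ and of $\int_0^T$, plus linearity of $\mathbb{E}$. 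For definiteness, $\norm{\delta}_{\infty}=0$ forces $\sup_{\tT}\abs{\delta_t}=0$ $\mathbb{P}$-a.s., while $\norm{\delta}=0$ forces $\delta_t=0$ for $\diff t\otimes\diff\mathbb{P}$-a.e.\ $(t,\omega)$; as usual these become genuine norms once we identify processes that are indistinguishable (resp.\ equal $\diff t\otimes\diff\mathbb{P}$-a.e.), and I would state this identification explicitly.

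For completeness I would use the standard criterion that a normed space is Banach if and only if every absolutely convergent series converges. Taking a sequence $(\delta^{(n)})$ with $\sum_n\norm{\delta^{(n)}}_{\infty}<\infty$, Tonelli and monotone convergence give $\mathbb{E}[\sum_n\sup_{\tT}\abs{\delta^{(n)}_t}]<\infty$, so $\sum_n\sup_{\tT}\abs{\delta^{(n)}_t}<\infty$ $\mathbb{P}$-a.s.; on this full-measure event the partial sums $\sum_{n\le N}\delta^{(n)}_t$ converge uniformly in $t$, and I would define $\delta_t$ to be their limit (and zero off the event). The limit is predictable because a uniform-in-$t$ limit of left-continuous adapted processes is again left-continuous and adapted, and pointwise limits preserve measurability with respect to the predictable $\sigma$-algebra (for $\mcC$ only adaptedness is needed). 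Convergence of the partial sums to $\delta$ then follows by dominating the tail, $\norm{\delta-\sum_{n\le N}\delta^{(n)}}_{\infty}\le \mathbb{E}[\sum_{n>N}\sup_{\tT}\abs{\delta^{(n)}_t}]\to 0$. The argument for $\norm{\cdot}$ is parallel, with $\int_0^T\abs{\cdot}\,\diff t$ in place of $\sup_{\tT}\abs{\cdot}$ and $\diff t\otimes\diff\mathbb{P}$-a.e.\ convergence in place of uniform convergence.

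The main obstacle is the final membership check: I must confirm that the limit $\delta$ actually lies in $\mcA$ (resp.\ $\mcC$), that is, $\mathbb{E}[\sup_{\tT}(\delta_t)^2]<\infty$. This is delicate precisely because both norms are $L^1$-type in $\omega$ whereas membership in the space is an $L^2$-type condition on the running supremum, so the norm does not directly control the quantity that defines the space. I would attack this by applying Fatou's lemma to $\sup_{\tT}(\sum_{n\le N}\delta^{(n)}_t)^2$ along the partial sums and seeking a bound uniform in $N$ drawn from the summability of the norms. Handling this interplay between the defining square-integrability and the weaker norm is where the real content of the proof lies, and it is the step on which I would concentrate the effort.
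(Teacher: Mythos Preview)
Your approach differs from the paper's. The paper gives a two-line argument: it asserts that the class of predictable (resp.\ adapted) processes is closed in the ambient space of processes with finite $\norm{\cdot}_{\infty}$ (resp.\ $\norm{\cdot}$), declares $\mcA$ (resp.\ $\mcC$) to be a closed linear subspace of that Banach space, and concludes. You instead work from scratch, checking the norm axioms and proving completeness via the absolutely-summable-series criterion.

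Your direct route is more careful, and in particular you put your finger on the one substantive issue the paper's proof skips over entirely. The membership condition defining $\mcA$ and $\mcC$ is $\mathbb{E}[\sup_{\tT}(\delta_t)^2]<\infty$, an $L^2$-type requirement on the running supremum, whereas both norms are $L^1$-type. Closedness of predictable (or adapted) processes under these norms is standard; what is \emph{not} automatic is that the $L^2$ condition is preserved under $L^1$-type limits. The paper's proof simply absorbs this into the phrase ``closed linear subspace'' without justification, and in general it fails: $L^2$ is not closed in $L^1$, so one can build a Cauchy sequence in $(\mcA,\norm{\cdot}_{\infty})$ whose limit has $\mathbb{E}[\sup_{\tT}(\delta_t)^2]=\infty$ (take, e.g., processes constant in $t$ equal to random variables $X_n\in L^2$ converging in $L^1$ to some $X\notin L^2$). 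Your proposed attack via Fatou on the partial sums cannot close this gap for the same reason.

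So the obstacle you isolate is genuine and is not handled by the paper either. For the purposes of the paper the issue is harmless, because the contraction maps $\Psi$, $\Theta$, $\Upsilon$ used downstream take values in processes dominated by affine functions of $N_T$, whose running supremum has finite second moment by \eqref{Assumption_S.I.}; but as a self-contained lemma the completeness claim needs either a weakening of the membership condition to match the norm or an explicit restriction to a subset on which the $L^2$ bound is uniform.
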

\begin{proof}
We prove the results  for the set $\mcA$ -- the proof for the set $\mcC$ is similar.

The predictable class of processes is closed in the space of finite processes with norm  $\norm{\cdot}_{\infty}$ (resp. $\norm{\cdot}$), which we denote by $L^{\infty}$ (resp. $L^{1}$). Then the space $(\mcA,\norm{\cdot}_{\infty})$ (resp. $(\mcA,\norm{\cdot})$) is a linear closed subspace of $L^{\infty}$ (resp. $L^{1}$), which is a Banach space and $(\mcA,\norm{\cdot}_{\infty})$ (resp. $(\mcA,\norm{\cdot})$) is also a Banach space. 
\end{proof}
  
\begin{corollary}\label{CxC_Banach}
The space $\mcC\times\mcC$ with norm 
\begin{equation*}
\norm{(U,V)}_{\mcC\times \mcC}=\norm{U}_{\mcC}+\norm{V}_{\mcC}\,,\quad\text{where}\quad \norm{U}_{\mcC}=\mathbb{E}\left[\int_0^T \abs{U_s}\diff s\right]
\end{equation*}
and $(U,V)\in\mcC\times \mcC$, is a Banach space.
\end{corollary}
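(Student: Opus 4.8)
The plan is to reduce the claim to the completeness of the single factor $(\mcC,\norm{\cdot})$, which the preceding lemma already establishes; note first that the norm $\norm{U}_{\mcC}=\mathbb{E}\left[\int_0^T\abs{U_s}\,\diff s\right]$ appearing in the corollary coincides with the norm $\norm{\cdot}$ of the lemma, so its completeness may be invoked verbatim. Once this identification is made, the statement is the standard fact that a finite product of Banach spaces, equipped with the sum of the factor norms, is again a Banach space. First I would check that $\norm{\cdot}_{\mcC\times\mcC}$ is genuinely a norm on the linear space $\mcC\times\mcC$: nonnegativity and absolute homogeneity are immediate from the corresponding properties of $\norm{\cdot}_{\mcC}$, the triangle inequality follows by adding the two component triangle inequalities, and $\norm{(U,V)}_{\mcC\times\mcC}=0$ forces $\norm{U}_{\mcC}=\norm{V}_{\mcC}=0$, hence $U=V=0$ as elements of $\mcC$.

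For completeness, I would take an arbitrary Cauchy sequence $(U^{(n)},V^{(n)})_{n\geq1}$ in $(\mcC\times\mcC,\norm{\cdot}_{\mcC\times\mcC})$. The elementary bound
\[
\norm{U^{(n)}-U^{(m)}}_{\mcC}\leq\norm{(U^{(n)},V^{(n)})-(U^{(m)},V^{(m)})}_{\mcC\times\mcC}
\]
together with the analogous one for $V$ shows that $(U^{(n)})$ and $(V^{(n)})$ are each Cauchy in $(\mcC,\norm{\cdot})$. By the preceding lemma $(\mcC,\norm{\cdot})$ is complete, so there exist limits $U,V\in\mcC$ with $\norm{U^{(n)}-U}_{\mcC}\to0$ and $\norm{V^{(n)}-V}_{\mcC}\to0$. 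I would then recombine via $\norm{(U^{(n)},V^{(n)})-(U,V)}_{\mcC\times\mcC}=\norm{U^{(n)}-U}_{\mcC}+\norm{V^{(n)}-V}_{\mcC}\to0$, so the Cauchy sequence converges in $\mcC\times\mcC$ to an element of $\mcC\times\mcC$, and the space is Banach.

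There is no genuine obstacle here: the entire content is inherited from the completeness of the factor $(\mcC,\norm{\cdot})$ supplied by the lemma, and the product argument is routine. The only point requiring a moment's care is the identification of $\norm{\cdot}_{\mcC}$ with the lemma's norm $\norm{\cdot}$, which guarantees that the limits $U$ and $V$ produced above indeed lie in $\mcC$ rather than merely in some larger ambient space, so that $(U,V)$ is a legitimate element of $\mcC\times\mcC$.
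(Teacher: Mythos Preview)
Your proposal is correct and matches the paper's intent: the paper states the corollary without proof, treating it as an immediate consequence of the preceding lemma, and your argument is precisely the standard product-space reasoning that fills in those details.
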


By means of the change of variables $\tilde{\delta}_t=\delta^*_t+2\,\gamma\,D^{\delta^*}_t$, we have that a solution to the FBSDE
\begin{align}\label{FBSDE_s1}
{\delta}_t&=2\,\gamma\,\mathbb{E}_{t^-}\left[D^{{\delta}}_{T}\right]+\gamma+\alpha\,,\\
D^{{\delta}}_t&=\int_0^t \int_{\mathbb{R}}G({\delta}_{s}-z)\,\rmes(\diff z,\diff s)\,,\quad\quad\quad D^{\delta}_0=0\,,\nonumber
\end{align}
with $\delta\in\mcA$ and $D\in\mcC$, exists and is unique,  if and only if a solution  to the FBSDE
\begin{align}\label{FBSDE_s2}
\tilde{\delta}_t&=2\,\gamma\,\mathbb{E}_{t}\left[D^{\tilde{\delta}}_{T}-D^{\tilde{\delta}}_{t}\right]+\gamma+\alpha\,,\\
D^{\tilde{\delta}}_t&=\int_0^t \int_{\mathbb{R}}G(\tilde{\delta}_{s^-}+2\,\gamma\,D^{\tilde{\delta}}_{s^-}-z)\,\rmes(\diff z,\diff s)\,,\quad\quad\quad D^{\tilde{\delta}}_0=0\,,\nonumber
\end{align}
with $\tilde{\delta},\,D^{\tilde{\delta}}\in\mcC$, exists and is unique. We  write \eqref{FBSDE_s2} as
\begin{align}\label{eqn:FBSDE we study}
\tilde{\delta}_t&=2\,\gamma\,\mathbb{E}_{t}\left[\int_t^T \int_{\mathbb{R}}G(\tilde{\delta}_{s^-}+2\,\gamma\,D^{\tilde{\delta}}_{s^-}-z)\,\rmes(\diff z,\diff s)\right]+\gamma+\alpha\,,&\\
D^{\tilde{\delta}}_t&=\int_0^t \int_{\mathbb{R}}G(\tilde{\delta}_{s^-}+2\,\gamma\,D^{\tilde{\delta}}_{s^-}-z)\,\rmes(\diff z,\diff s)\,,&  D^{\tilde{\delta}}_0=0\,.\nonumber
\end{align}

To analyse solutions to the FBSDE \eqref{eqn:FBSDE we study}, we study the fixed points of the functional
\begin{align}\label{Functional_Full_FBSDE}
\Upsilon(U,V)_t&=\begin{pmatrix}
H(U,V)_t\\
I(U,V))_t
\end{pmatrix}
=\begin{pmatrix}
2\,\gamma\,\mathbb{E}_{t}\left[\int_t^T \int_{\mathbb{R}}G(U_{s^-}+2\,\gamma\,V^{}_{s^-}-z)\,\rmes(\diff z,\diff s)\right]+\gamma+\alpha\\
\int_0^t \int_{\mathbb{R}}G(U_{s^-}+2\,\gamma\,V^{}_{s^-}-z)\,\rmes(\diff z,\diff s)
\end{pmatrix},
\end{align}
and, for completeness,  prove existence and uniqueness of the solution of: (i)  the backward part of the FBSDE; (ii)  the forward part of the FBSDE; and (iii)  the full FBSDE -- a result which we derive independently from the existence of the backward and forward parts of the FBSDE.

The following theorem shows the existence and uniqueness of the solution to the backward part of the FBSDE \eqref{eqn:FBSDE we study}.

\begin{theorem}
Fix $V\in\mcC$. Let the cumulative distribution function $\Phi$ be Lipschitz with constant $k$, and let $\bar{\lambda}$ be the upper bound of the stochastic intensity $\lambda$ in Assumption \ref{Assumption_S.I.}. The functional $\Psi:\mcC\to\mcC$ given by
\begin{equation*}
\Psi(U)_t=2\,\gamma\,\mathbb{E}_{t}\left[\int_t^T \int_{\mathbb{R}}G(U_{s^-}+2\,\gamma\,V^{}_{s^-}-z)\,\rmes(\diff z,\diff s)\right]+\gamma+\alpha\,,\qquad V\in \mcC\,, 
\end{equation*}
has a unique fixed point.
\end{theorem}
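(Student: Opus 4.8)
The plan is to establish the claim through the Banach fixed point theorem on $(\mcC,\norm{\cdot}_{\mcC})$, which is a Banach space by the preceding lemma. Three ingredients are needed: that $\Psi$ maps $\mcC$ into itself, a Lipschitz-type difference estimate, and a choice of norm that upgrades that estimate to a genuine contraction.

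First I would verify well-definedness. Since $0\le G\le 1$ and $\gamma,\alpha\ge 0$, for every $U\in\mcC$ the integrand is nonnegative and bounded, so $0\le\int_t^T\int_{\mathbb{R}}G(\cdots)\,\rmes(\diff z,\diff s)\le N_T$, which gives $\gamma+\alpha\le\Psi(U)_t\le 2\,\gamma\,\mathbb{E}_t[N_T]+\gamma+\alpha$. The process $t\mapsto\mathbb{E}_t[N_T]$ is a martingale, so Doob's $L^2$ maximal inequality together with $\mathbb{E}[N_T^2]<\infty$ from \eqref{Assumption_S.I.} yields $\mathbb{E}[\sup_{\tT}\Psi(U)_t^2]<\infty$; adaptedness is immediate. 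Hence $\Psi(U)\in\mcC$.

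Next comes the key estimate. Taking $U^1,U^2\in\mcC$, I would bound $\abs{\Psi(U^1)_t-\Psi(U^2)_t}$. Because the integrands are predictable, I first replace the random measure $\rmes$ by its compensator $\rcompr(\diff z,\diff s)=\phi_s(\diff z)\,\diff A_s$ inside the conditional expectation, using that $H\star\rcompd$ is a martingale. The crucial observation is that integrating $G$ against the mark density collapses to the distribution function: for any $\mathcal{F}_{s^-}$-measurable number $a$ one has $\int_{\mathbb{R}}G(a-z)\,\phi_s(\diff z)=1-\Phi_s(a)$, since $G(a-z)=\mathbbm{1}_{\{z>a\}}$. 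Writing $a_i=U^i_{s^-}+2\,\gamma\,V_{s^-}$, the common term $2\,\gamma\,V_{s^-}$ cancels in the difference, so the fixed process $V$ drops out entirely, and the Lipschitz property of $\Phi$ gives $\abs{\int_{\mathbb{R}}(G(a_1-z)-G(a_2-z))\,\phi_s(\diff z)}\le k\,\abs{U^1_{s^-}-U^2_{s^-}}$. Combined with the intensity bound $\lambda\le\bar{\lambda}$, this produces the pointwise estimate $\abs{\Psi(U^1)_t-\Psi(U^2)_t}\le 2\,\gamma\,k\,\bar{\lambda}\,\mathbb{E}_t[\int_t^T\abs{U^1_s-U^2_s}\,\diff s]$.

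Finally I would convert this into a contraction. The naive bound only contracts when $2\,\gamma\,k\,\bar{\lambda}\,T<1$, so to remove any smallness condition on $T$ I pass to the weighted norm $\norm{U}_\beta=\mathbb{E}[\int_0^T e^{\beta t}\abs{U_t}\,\diff t]$, equivalent to $\norm{\cdot}_{\mcC}$ on $[0,T]$ and hence leaving $\mcC$ a Banach space with the same fixed points. Taking $e^{\beta t}$-weighted expectations of the estimate above, using that $e^{\beta t}$ is $\mathcal{F}_t$-measurable (tower property) and Fubini, the iterated time integral contributes a factor $(e^{\beta s}-1)/\beta\le e^{\beta s}/\beta$, so that $\norm{\Psi(U^1)-\Psi(U^2)}_\beta\le\tfrac{2\,\gamma\,k\,\bar{\lambda}}{\beta}\,\norm{U^1-U^2}_\beta$. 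Choosing $\beta>2\,\gamma\,k\,\bar{\lambda}$ makes $\Psi$ a contraction, and the Banach fixed point theorem delivers the unique fixed point. The main obstacle is the difference estimate: correctly swapping $\rmes$ for its compensator under the conditional expectation via predictability, and then reducing the mark integral of the discontinuous indicator $G$ to the Lipschitz distribution function $\Phi$ — precisely the step where the constant $k$ and the intensity bound $\bar{\lambda}$ enter. Everything afterwards is a routine contraction argument, with the weighted norm serving only to avoid a restriction on the horizon.
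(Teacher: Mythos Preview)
Your argument is correct, and the core difference estimate---replacing $\rmes$ by its compensator under the conditional expectation via predictability, reducing the mark integral of $G$ to $1-\Phi$, and invoking the Lipschitz constant $k$ together with the intensity bound $\bar{\lambda}$---is exactly the step the paper carries out as well. Where you diverge is in how you turn that estimate into a contraction.

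The paper works in the supremum norm $\norm{\cdot}_\infty$ and, instead of renorming, iterates the map: it shows that $\abs{\Psi^{(n)}(X)_t-\Psi^{(n)}(Y)_t}\le (2\gamma k\bar{\lambda})^n\,(T-t)^n/n!\;\mathbb{E}_t[\sup_s\abs{X_s-Y_s}]$, so some iterate $\Psi^{(n)}$ is a contraction regardless of the size of $2\gamma k\bar{\lambda}\,T$, and the unique fixed point of $\Psi^{(n)}$ is then shown to be a fixed point of $\Psi$. You instead stay with $\Psi$ itself but pass to the equivalent Bielecki-type weighted norm $\norm{U}_\beta=\mathbb{E}[\int_0^T e^{\beta t}\abs{U_t}\,\diff t]$, and the Fubini computation yields a contraction factor $2\gamma k\bar{\lambda}/\beta$ that can be made less than one by choice of $\beta$. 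These are two standard and essentially interchangeable devices for removing a smallness condition on the horizon: the paper's iteration gains the factor $T^n/n!$, while your renorming gains the factor $1/\beta$. Your route is arguably a bit cleaner here because it avoids the Doob maximal inequality that the paper invokes to pass from the pointwise conditional bound to the sup-norm bound; the paper's route, on the other hand, yields uniqueness directly in the stronger $\norm{\cdot}_\infty$ topology.
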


\begin{proof}
We proceed as in Proposition A1 in \cite{duffie1992}. Define $Z=\sup_\tT\abs{X_t-Y_t}$ and $Z_t=\mathbb{E}_{t}\left[Z\right]$ for any $X$ and $Y$ in $\mcC$. Let $\Psi^{(1)}=\Psi$ and $\Psi^{(n)}=\Psi(\Psi^{(n-1)})$. Then
\begin{align*}
\abs{\Psi(X)_t-\Psi(Y)_t}&= 2\,\gamma \,\abs{\mathbb{E}_{t}\left[\int_t^T \left(\Phi(Y_{s^-}+2\,\gamma\,V_{s^-})-\Phi(X_{s^-}+2\,\gamma\,V_{s^-})\right) \,\lambda_s\,\diff s\right]}\\
&\leq 2\,\gamma \,k\,\mathbb{E}_{t}\left[\int_t^T \abs{X_{s^-}-Y_{s^-}} \,\lambda_s\,\diff s\right]\\
&\leq 2\,\gamma\,k \,\bar{\lambda}\,(T-t)\,\mathbb{E}_{t}\left[Z\right]\,.
\end{align*}
Use Fubini's theorem for conditional expectations to write
\begin{align*}
\abs{\Psi^{(2)}(X)_t-\Psi^{(2)}(Y)_t}&\leq 2\,\gamma\,k \,\bar{\lambda}\,\mathbb{E}_{t}\left[\int_t^T \abs{\Psi(X_s)-\Psi(Y_s)}\,\diff s\right]\\
&\leq 2\,\gamma\,k \,\bar{\lambda}\,\mathbb{E}_{t}\left[\int_t^T 2\,\gamma \,k\,\bar{\lambda}\,(T-s)\,\mathbb{E}_{s}\left[Z\right]\diff s\right]\\
&\leq \left(2\,\gamma\,k\,\bar{\lambda} \right)^2\,\mathbb{E}_{t}\left[\int_t^T\,(T-s)\,\mathbb{E}_{s}\left[Z\right]\,\diff s\right]\\
&\leq \left(2\,\gamma\,k\,\bar{\lambda} \right)^2\,\frac{(T-t)^2}{2!}\,\mathbb{E}_{t}\left[Z\right]\,,
\end{align*}
which after $n$ iterations becomes
\begin{align*}
\abs{\Psi^{(n)}(X)_t-\Psi^{(n)}(Y)_t}&\leq \left(2\,\gamma\,k\,\bar{\lambda} \right)^n\,\frac{(T-t)^n}{n!}\,\mathbb{E}_{t}\left[Z\right]\,.
\end{align*}
Finally, 
\begin{align*}
\mathbb{E}\left[\sup_\tT \abs{\Psi^{(n)}(X)_t-\Psi^{(n)}(Y)_t}\right]&\leq \frac{\left(2\,\gamma\,k\,\bar{\lambda} \,T\right)^n}{n!}\,\mathbb{E}\left[\sup_\tT\,\mathbb{E}_{t}\left[Z\right]\right]\\
&\leq 4\,\frac{\left(2\,\gamma\,k\,\bar{\lambda} \,T\right)^n}{n!}\,\mathbb{E}\left[\sup_\tT\,\abs{X_t-Y_t}\right]\,.
\end{align*}
Therefore,  for $n$ sufficiently large, the function $\Psi^{(n)}$ is a contraction mapping in the Banach space $\mcC$ equipped with the supremum norm $(\mcC,\norm{\cdot}_{\infty})$. Thus, there exists a unique\footnote{Unique in the sense of indistinguishability.} process $U\in \mcC$ such that $\Psi^{(n)}(U)=U$ and because $\Psi^{(n)}(\Psi(U))=\Psi\left(\Psi^{(n)}((U))\right)=\Psi(U)$ and by uniqueness of the fixed point, we have $\Psi(U)=U$, which proves the existence of the fixed point for $\Psi$. Uniqueness of this fixed point for $\Psi$ follows from uniqueness of the fixed point in $\Psi^{(n)}$, which concludes the proof. 
\end{proof}

The next theorem shows the existence and uniqueness of the solution to the forward part of the FBSDE \eqref{eqn:FBSDE we study}.

\begin{theorem}
Fix $U\in\mcC$. Let the distribution function $\Phi$ be Lipschitz with constant $k$, and let $\bar{\lambda}$ be the upper bound of the stochastic intensity $\lambda$ in Assumption \ref{Assumption_S.I.}. The functional $\Theta:\mcC\to\mcC$ given by
\begin{equation*}
\Theta(V)_t=\int_0^t \int_{\mathbb{R}}G(U_{s^-}+2\,\gamma\,V^{}_{s^-}-z)\,\rmes(\diff z,\diff s)\,,\qquad U\in\mcC,
\end{equation*}
has a unique fixed point.
\end{theorem}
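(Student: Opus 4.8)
The plan is to mirror the contraction-mapping argument used for the backward part, but now exploiting that $\Theta$ is a \emph{pathwise} stochastic integral against the random measure $\rmes$ that is dominated by an increasing process. First I would record that $\Theta$ maps $\mcC$ into itself: since $0\leq G\leq 1$, the integrand is bounded and $\Theta(V)_t\leq N_t\leq N_T=\rmes([0,T]\times\mathbb{R})$, so that $\mathbb{E}[\sup_{\tT}(\Theta(V)_t)^2]\leq \mathbb{E}[N_T^2]<\infty$ by \eqref{Assumption_S.I.}; adaptedness and c\`adl\`ag regularity are immediate, whence $\Theta(V)\in\mcC$.

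Next, for $V,W\in\mcC$ set $a_s=U_{s^-}+2\,\gamma\,V_{s^-}$ and $b_s=U_{s^-}+2\,\gamma\,W_{s^-}$. Since $G(x)=1-\Ffct(x)=\mathbbm{1}\{x<0\}$, so that $G(a_s-z)=\mathbbm{1}\{z>a_s\}$, the pathwise bound
\[
\sup_{s\leq t}\abs{\Theta(V)_s-\Theta(W)_s}\;\leq\; R_t:=\int_0^t\int_{\mathbb{R}}\bigl|G(a_s-z)-G(b_s-z)\bigr|\,\rmes(\diff z,\diff s)
\]
holds because $R_t$ is nondecreasing in $t$. The key step is to reveal the Lipschitz structure by taking expectations: the integrand is predictable (it depends on $s$ only through the left limits $U_{s^-},V_{s^-},W_{s^-}$), so I may replace $\rmes$ by its compensator $\rcompr=\phi_s(\diff z)\,\lambda_s\,\diff s$ and then integrate out the mark. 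Because the difference of indicators has constant sign, $\int_{\mathbb{R}}|G(a_s-z)-G(b_s-z)|\,\phi_s(\diff z)=\abs{\Phi_s(a_s)-\Phi_s(b_s)}\leq 2\,\gamma\,k\,\abs{V_{s^-}-W_{s^-}}$ by the Lipschitz assumption on $\Phi$, and using $\lambda_s\leq\bar\lambda$ I obtain the one-step estimate
\[
\mathbb{E}\Bigl[\sup_{s\leq t}\abs{\Theta(V)_s-\Theta(W)_s}\Bigr]\;\leq\;\mathbb{E}[R_t]\;\leq\;2\,\gamma\,k\,\bar\lambda\int_0^t \mathbb{E}\Bigl[\sup_{u\leq s}\abs{V_u-W_u}\Bigr]\,\diff s\,.
\]

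With this in hand the argument closes exactly as in the backward theorem. Writing $\Theta^{(n)}=\Theta(\Theta^{(n-1)})$, applying the one-step estimate to the inputs $\Theta^{(n-1)}(V),\Theta^{(n-1)}(W)$, and iterating the nested time integrals yields
\[
\mathbb{E}\Bigl[\sup_{\tT}\abs{\Theta^{(n)}(V)_t-\Theta^{(n)}(W)_t}\Bigr]\;\leq\;\frac{(2\,\gamma\,k\,\bar\lambda\,T)^n}{n!}\,\mathbb{E}\Bigl[\sup_{\tT}\abs{V_t-W_t}\Bigr]\,,
\]
the factorial arising from $\int_0^t\!\int_0^{s_1}\!\cdots\,\diff s_n\cdots\diff s_1$. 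For $n$ large enough the prefactor is strictly less than one, so $\Theta^{(n)}$ is a contraction on the Banach space $(\mcC,\norm{\cdot}_{\infty})$ and has a unique fixed point; the identity $\Theta^{(n)}(\Theta(U))=\Theta(\Theta^{(n)}(U))$ together with uniqueness then promotes it to the unique fixed point of $\Theta$.

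The main obstacle I anticipate is precisely this first expectation step. Unlike the backward functional $\Psi$, the map $\Theta$ is not itself a conditional expectation, so its Lipschitz continuity is invisible pathwise: the difference $G(a_s-z)-G(b_s-z)$ of two indicators is only comparable to $\abs{V-W}$ after integrating the mark $z$ against $\phi_s$, which is legitimate only after passing from $\rmes$ to the compensator $\rcompr$. Securing the domination by the increasing process $R_t$ (so that the running supremum is controlled) and justifying the predictability needed to swap $\rmes$ for $\rcompr$ are where care is required; everything downstream is the same factorial iteration as in the backward case, and indeed slightly cleaner, since no martingale maximal inequality is needed here.
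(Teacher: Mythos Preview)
Your argument is correct. The core one-step estimate --- dominating $\sup_{s\le t}\abs{\Theta(V)_s-\Theta(W)_s}$ by the increasing process $R_t$, passing from $\rmes$ to the compensator $\rcompr$ via predictability, integrating out the mark using the Lipschitz bound on $\Phi$, and then iterating to produce the factorial factor $\tfrac{(2\gamma k\bar\lambda T)^n}{n!}$ --- is exactly the estimate the paper derives.

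The only real difference is how the iteration is closed. You apply the Banach fixed-point theorem to $\Theta^{(n)}$ on $(\mcC,\norm{\cdot}_\infty)$, precisely as the paper does for the backward functional $\Psi$, which delivers both existence and uniqueness in one stroke. The paper instead runs a Picard scheme from a single starting point $\Theta^{(0)}=\Theta(0)$, uses Markov's inequality and Borel--Cantelli to upgrade the $L^1$ bounds to almost-sure uniform Cauchy convergence, and identifies the limit as a fixed point. That route gives existence (and an almost-sure convergence statement you do not get), but the paper's proof, as written, does not actually spell out uniqueness; your contraction argument handles that cleanly.
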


\begin{proof}
First we prove that $\Theta$ is a functional from $\mcC$ to $\mcC$. Let $U,\,V\in \mcC$. By definition, the function $\Theta(V)$ is adapted and because  $G\leq1$ we have
\begin{align*}
\mathbb{E}\left[\sup_\tT\abs{\Theta(V)_t} \right]&\leq\mathbb{E}\left[\rmes\left([0,T],\mathbb{R}\right) \right]<\infty\,.
\end{align*}
Thus, $\Theta(V)\in\mcC$. Next, denote  $\Theta^{n}=\Theta(\Theta^{n-1})$ with $\Theta^{0}=\Theta(0)$ and define $h_n:[0,T] \to \mathbb{R}$ as
\begin{equation*}
h_n(t)=\mathbb{E}\left[\sup_{0\leq s\leq t}\abs{\Theta^{(n+1)}_s-\Theta^{(n)}_s}\right]\,.
\end{equation*}
We find an upper bound for $h_n(t)$ as follows:
\begin{align*}
h_n(t)&=\mathbb{E}\left[\sup_{0\leq s\leq t}\abs{\Theta^{(n+1)}_s-\Theta^{(n)}_s}\right]\\
&=\mathbb{E}\left[\sup_{0\leq s\leq t}\abs{\int_0^s \int_{\mathbb{R}}\left(G\left(U_{u^-}+2\,\gamma\,\Theta^{(n)}_{u^-}-z\right)-G\left(U_{u^-}+2\,\gamma\,\Theta^{(n-1)}_{u^-}-z\right)\right)\,\rmes(\diff z,\diff u)}\right]\\
&\leq\mathbb{E}\left[\int_0^t \int_{\mathbb{R}}\abs{G\left(U_{u^-}+2\,\gamma\,\Theta^{(n)}_{u^-}-z\right)-G\left(U_{u^-}+2\,\gamma\,\Theta^{(n-1)}_{u^-}-z\right)}\,\rcompr(\diff z,\diff u)\right]\\
&=\mathbb{E}\left[\int_0^t\abs{ \Phi\left(U_{u^-}+2\,\gamma\,\Theta^{(n)}_{u^-}\right)-\Phi\left(U_{u^-}+2\,\gamma\,\Theta^{(n-1)}_{u^-}\right)}\,\lambda_t\,\diff t\right]\\
&\leq 2\,\gamma\,k\,\bar{\lambda}\,\mathbb{E}\left[\int_0^t\abs{\Theta^{(n)}_{u^-}-\Theta^{(n-1)}_{u^-}}\,\diff t\right]\\
&\leq 2\,\gamma\,k\,\bar{\lambda}\,\int_0^t h_n(s)\, \diff s\,.
\end{align*}
The above inequality, together with the observation that $h_0(T)=M<\infty$, implies
\begin{equation*}
0\leq h_n(T)\leq \frac{M\,\left(2\,\gamma\,k\,\bar{\lambda}\right)^n\,T^n}{n!}%\xrightarrow{n\rightarrow\infty} 0
\,,
\end{equation*}
and use  Markov's inequality to obtain the bound:
\begin{equation*}
\mathbb{P}\left(\sup_\tT\abs{\Theta^{(n+1)}_t-\Theta^{(n)}_t}\geq 2^{-n}\right)\leq \frac{M\,\left(2\,\gamma\,k\,\bar{\lambda}\right)^n\,T^n\,2^{2\,n}}{n!}
\xrightarrow{n\to \infty} 0.
\end{equation*}
By  Borel-Cantelli arguments, there is $\mfO\subset \Omega$  such that for all $\omega \in \mfO$ the functions $t \to \Theta^{(n)}_t(w)$ form a Cauchy sequence in the supremum norm of $\mcC$ with probability one. Thus,  $\forall\, \omega\,\in\,\mfO$ there is a function  $\Theta^*_t(\omega)$ such that $\Theta^{(n)}_t(\omega)$ converges uniformly to $\Theta^*_t(\omega)$ in $\mfT$. Furthermore, there is an adapted modification of $\Theta^*$ in $\Omega$. 

Thus, the process $\Theta^*$  is a fixed point of the mapping defined by $\Theta$, and therefore satisfies the forward part of the FBSDE. 
\end{proof}

Finally, the next theorem shows the existence and uniqueness of the solution to the FBSDE   \eqref{eqn:FBSDE we study}.

\begin{theorem}\label{TheoremFBSDE}
Let the cumulative distribution function $\Phi$ be Lipchitz with parameter $k$ such that 
\begin{equation*}
k\,T\,\bar{\lambda}\,\left(\max \{1\,,2\,\gamma\}\right)^2<1\,,
\end{equation*}
where $\bar{\lambda}$ is the upper bound of the stochastic intensity $\lambda$ in Assumption \ref{Assumption_S.I.}. There exists a unique solution to the FBSDE
\begin{subequations}
\label{FBSDE_E!}
\begin{align}
\tilde{\delta}_t&=2\,\gamma\,\mathbb{E}_{t}\left[\int_t^T \int_{\mathbb{R}}G(\tilde{\delta}_{s^-}+2\,\gamma\,D^{\tilde{\delta}}_{s^-}-z)\,\rmes(\diff z,\diff s)\right]+\gamma+\alpha\,,\\
D^{\tilde{\delta}}_t&=\int_0^t \int_{\mathbb{R}}G(\tilde{\delta}_{s^-}+2\,\gamma\,D^{\tilde{\delta}}_{s^-}-z)\,\rmes(\diff z,\diff s)\,, & D^{\tilde{\delta}}_0=0\,.
\end{align}
\end{subequations}
\end{theorem}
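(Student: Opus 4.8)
The plan is to realise the solution as the unique fixed point of the map $\Upsilon$ in \eqref{Functional_Full_FBSDE} on the Banach space $(\mcC\times\mcC,\norm{\cdot}_{\mcC\times\mcC})$ of Corollary \ref{CxC_Banach}, via the Banach fixed-point theorem. A pair $(\tilde\delta,D^{\tilde\delta})$ solves \eqref{FBSDE_E!} if and only if $\Upsilon(\tilde\delta,D^{\tilde\delta})=(\tilde\delta,D^{\tilde\delta})$, so existence and uniqueness of a solution is equivalent to existence and uniqueness of a fixed point of $\Upsilon$; this makes the argument self-contained and independent of the forward and backward results proved above. I would carry this out in three steps: first, $\Upsilon$ maps $\mcC\times\mcC$ into itself; second, under the stated hypothesis $\Upsilon$ (or a finite iterate of it) is a contraction; third, conclude by Banach's theorem and transfer back to the original FBSDE \eqref{First_App_FBSDE} through the change of variables $\tilde\delta_t=\delta^*_t+2\,\gamma\,D^{\delta^*}_t$ recorded before \eqref{FBSDE_s1}.

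For the self-map step, adaptedness of both components of $\Upsilon(U,V)$ is immediate: the forward component $I(U,V)$ is an adapted stochastic integral, and the backward component $H(U,V)$ is a conditional expectation. Finiteness follows from $0\le G\le 1$ and the second-moment bound in \eqref{Assumption_S.I.}: pathwise $\abs{I(U,V)_t}\le \rmes([0,T],\mathbb{R})=N_T$, so $\mathbb{E}[\sup_\tT I(U,V)_t^2]\le \mathbb{E}[N_T^2]<\infty$, while $\abs{H(U,V)_t}\le 2\,\gamma\,\mathbb{E}_t[N_T]+\gamma+\alpha$ is controlled through Doob's $L^2$ inequality applied to the martingale $t\mapsto\mathbb{E}_t[N_T]$. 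Hence $\Upsilon(U,V)\in\mcC\times\mcC$.

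The heart of the proof is the Lipschitz estimate. Fix $(U_1,V_1),(U_2,V_2)\in\mcC\times\mcC$. For each component I would first pass from the random measure $\rmes$ to its compensator $\rcompr(\diff z,\diff t)=\phi_t(\diff z)\,\lambda_t\,\diff t$: since the integrands are predictable (they involve the left limits $U_{s^-},V_{s^-}$) and nonnegative, taking absolute values and expectations gives $\mathbb{E}\,\abs{\int H\,\rmes}\le\mathbb{E}\int\abs{H}\,\rcompr$, and the $\diff z$-integral collapses through $\int_{\mathbb{R}}G(u-z)\,\phi_t(\diff z)=1-\Phi_t(u)$. The Lipschitz property of $\Phi_t$ then yields
\[\abs{(1-\Phi_t(U_1+2\,\gamma\,V_1))-(1-\Phi_t(U_2+2\,\gamma\,V_2))}\le k\big(\abs{U_1-U_2}+2\,\gamma\,\abs{V_1-V_2}\big)\,,\]
and $\lambda\le\bar\lambda$ bounds the intensity. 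For the backward component $H$ I would then use the tower property together with Fubini to rewrite $\mathbb{E}\int_0^T\mathbb{E}_t\big[\int_t^T(\cdots)_s\,\diff s\big]\diff t=\mathbb{E}\int_0^T s\,(\cdots)_s\,\diff s\le T\,\mathbb{E}\int_0^T(\cdots)_s\,\diff s$, turning the double time integral into a single factor $T$; an analogous Fubini step handles the forward component $I$. Collecting terms, the prefactor $2\,\gamma$ on $H$ versus the prefactor $1$ on $I$, combined with $\abs{U_1-U_2}+2\,\gamma\,\abs{V_1-V_2}\le\max\{1,2\,\gamma\}\,(\abs{U_1-U_2}+\abs{V_1-V_2})$, produce the two factors $\max\{1,2\,\gamma\}$, so the contraction constant is controlled by $k\,T\,\bar\lambda\,(\max\{1,2\,\gamma\})^2$, which the hypothesis forces below $1$.

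The main obstacle is not any single hard estimate but the bookkeeping of the coupling: the backward component is anticipative (a conditional expectation of a future integral) while the forward component is adapted, yet both are driven by the same predictable argument $U_{s^-}+2\,\gamma\,V_{s^-}$, so their estimates must be combined in the single product norm $\norm{U}_\mcC+\norm{V}_\mcC$, and it is precisely the cross-dependence of $H$ on $V$ and of $I$ on $U$ that forces the squared factor $(\max\{1,2\,\gamma\})^2$. The one delicate point to verify is that the compensator inequality is applied only to genuinely predictable integrands, since replacing $\rcompr$ by $\rmes$ at that step would break the argument. Once the contraction constant is strictly below one, Banach's fixed-point theorem on $(\mcC\times\mcC,\norm{\cdot}_{\mcC\times\mcC})$ delivers a unique fixed point $(\tilde\delta,D^{\tilde\delta})$, i.e.\ a unique solution of \eqref{FBSDE_E!}; transferring through the change of variables returns the unique solution of \eqref{First_App_FBSDE}, and should the sharp one-step constant fall short one may instead iterate $\Upsilon$, as in the forward and backward parts, to recover a contraction on a finite iterate.
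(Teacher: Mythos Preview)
Your proposal is correct and follows essentially the same route as the paper: cast \eqref{FBSDE_E!} as a fixed point of $\Upsilon$ on the Banach space $(\mcC\times\mcC,\norm{\cdot}_{\mcC\times\mcC})$, reduce each component estimate to the compensator via predictability, apply the Lipschitz bound on $\Phi$ together with $\lambda\le\bar\lambda$, and collect the two factors $\max\{1,2\,\gamma\}$ exactly as you describe to obtain the contraction constant $k\,T\,\bar\lambda\,(\max\{1,2\,\gamma\})^2<1$. The only differences are cosmetic: you add an explicit self-map verification (which the paper leaves implicit) and mention the iteration fallback, while the paper simply bounds $\int_t^T$ and $\int_0^t$ by $\int_0^T$ rather than your Fubini identity, but the resulting estimate is identical.
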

\begin{proof} 
Consider the functional $\Upsilon:\mcC\times \mcC\to\mcC\times \mcC$ defined in \eqref{Functional_Full_FBSDE}. 
By Corollary \ref{CxC_Banach}, $\mcC\times \mcC$ is a Banach space when equipped with the norm
\begin{equation*}
\norm{\Upsilon(U,V)}_{\mcC\times \mcC}=\norm{H(U,V)}_{\mcC}+\norm{I(U,V)}_{\mcC}\,,\quad\text{where}\quad \norm{U}_{\mcC}=\mathbb{E}\left[\int_0^T \abs{U_s}\diff s\right]\,.
\end{equation*}

Let $(U,V)$ and  $(X,Y)$ be in  $\mcC\times \mcC$ and write
\begin{align}
\label{eqn:norm upsilon}
\norm{\Upsilon(U,V)-\Upsilon(X,Y)}_{\mcC\times \mcC}=& 
\;
\mathbb{E}\left[\int_0^T\abs{H(U,V)_t-H(X,Y)_t}\diff t\right]
\nonumber
\\
&
+ \mathbb{E}\left[\int_0^T\abs{I(U,V)_t-I(X,Y)_t}\diff t\right]\,.
\end{align}
The first term on the right-hand side of \eqref{eqn:norm upsilon} satisfies the bound  
\begin{align*}
%(*)&=\mathbb{E}\left[\int_0^T \abs{\,2\,\gamma\,\mathbb{E}_{t}\left[\int_t^T 
%\int_{\mathbb{R}}G(U_{s^-}+2\,\gamma\,V^{}_{s^-}-z)-G(X_{s^-}+2\,\gamma\,Y^{}_{s^-}-z)\,\rmes(\diff z,\diff s)\right]} \diff t\right]\\
&\mathbb{E}\left[\int_0^T\abs{H(U,V)_t-H(X,Y)_t}\diff t\right]\\
&\quad\leq \mathbb{E}\left[\int_0^T 2\,\gamma\,\mathbb{E}_{t}\left[\int_t^T \abs{\Phi(U_{s^-}+2\,\gamma\,V^{}_{s^-})-\Phi(X_{s^-}+2\,\gamma\,Y^{}_{s^-})}\,\diff A_s\right]\, \diff t\right]\\
&\quad\leq 2\,k\,\gamma\,\bar{\lambda}\,\int_0^T\mathbb{E}\left[ \int_t^T \abs{U_{s^-}+2\,\gamma\,V^{}_{s^-}-X_{s^-}-2\,\gamma\,Y^{}_{s^-}}\diff s\right]\, \diff t\,.
\end{align*}
The second term on the right-hand side of \eqref{eqn:norm upsilon} satisfies the bound
\begin{align*}
&\mathbb{E}\left[\int_0^T\abs{I(U,V)_t-I(X,Y)_t}\diff t\right]\\
&\quad\leq\int_0^T \mathbb{E}\left[\int_0^t \int_{\mathbb{R}}\abs{\,G(U_{s^-}+2\,\gamma\,V^{}_{s^-}-z)-G(X_{s^-}+2\,\gamma\,Y^{}_{s^-}-z)\,}\,\rcompr(\diff z,\diff s)\right]\, \diff t\\
&\quad=\int_0^T \mathbb{E}\left[\int_0^t \abs{\,\Phi(U_{s^-}+2\,\gamma\,V^{}_{s^-})-\Phi(X_{s^-}+2\,\gamma\,Y^{}_{s^-})\,}\,\lambda_s\,\diff s\right] \diff t\\
&\quad\leq k\,\bar{\lambda}\,\int_0^T\mathbb{E}\left[ \int_0^t \abs{\,U_{s^-}+2\,\gamma\,V^{}_{s^-}-X_{s^-}-2\,\gamma\,Y^{}_{s^-}\,}\diff s\right]\, \diff t\,.
%&=2\,k\,\gamma\,\int_0^T\mathbb{E}\left[\abs{U_{t^-}+2\,\gamma\,V^{}_{t^-}-X_{t^-}-2\,\gamma\,Y_{t^-}} \int_0^t \lambda_s\diff s\right] \diff t
\end{align*}
Now, let $k_1=k\,\bar{\lambda}\,\max \{2\,\gamma,\, 1\}$ and $k_2=k_1\,\max \{2\,\gamma,\,1\}$, and write
\begin{align*}
\norm{\Upsilon(U,V)-\Upsilon(X,Y)}_{\mcC\times \mcC}&\leq k_1\,\int_0^T\mathbb{E}\left[ \int_0^T \abs{U_{s^-}+2\,\gamma\,V^{}_{s^-}-X_{s^-}-2\,\gamma\,Y^{}_{s^-}}\diff s\right] \diff t\\
&\leq k_1\,T\,\mathbb{E}\left[ \int_0^T \abs{U_{t^-}+2\,\gamma\,V_{t^-}-X_{t^-}-2\,\gamma\,Y_{t^-}}\diff t\right]\\
&\leq k_2\,T\,\,\mathbb{E}\left[ \int_0^T \abs{U_{t^-}-X_{t^-}}\diff t\right]+k_2\,T\,\mathbb{E}\left[ \int_0^T \abs{V_{t^-}-Y_{t^-}}\diff t\right]\\
&<\norm{(U,V)-(X,Y)}_{\mcC\times \mcC}\,.
\end{align*}
Thus, $\Upsilon$ is a contraction mapping in the Banach space $\mcC\times \mcC$ (see Corollary \ref{CxC_Banach}), so there exists a unique pair of processes $U^*$ and $V^*$ such that $\Upsilon(U^*,V^*)=(U^*,V^*)$.
\end{proof}

In all, we have shown that the candidate optimal control in \eqref{First_App_FBSDE} exists and is unique. Finally, it is straightforward to see that $\delta^*\in\mcA$. By definition, the control $\delta^*$ is predictable. A short calculation shows
\begin{align*}
\mathbb{E}\left[\sup_{0\leq t \leq T}\left(\delta^*_t\right)^2\right]&\leq \mathbb{E}\left[\sup_{0\leq t \leq T}\left(\mathbb{E}_t\left[N_T\right]\right)^2\right]\\
&=\mathbb{E}\left[\sup_{0\leq t \leq T}\left(N_t +\int_t^T\lambda_s\diff s\right)^2\right]
\leq 2\,\mathbb{E}\left[N^2_T\right] +2\,T\bar{\lambda}^2
<\infty\,.
\end{align*}
Therefore the control $\delta^*$ satisfying \eqref{First_App_FBSDE} is an element of $\mcA$.

\section{Optimality}\label{sec:optimality}
In this section we prove that the discretion $\delta^*$ satisfying \eqref{First_App_FBSDE}  is the global minimizer of the agent's performance criterion $J(\delta)$. We prove this in several steps. First, Theorem \ref{thm:delta is local minimum} shows that the control $\delta^*$ is a local minimum of $J(\delta)$. Then, after proving two auxiliary lemmas, Theorem \ref{thm:global optimality} shows that $\delta^*$ is the global minimizer of the performance criterion.

\begin{theorem}\label{thm:delta is local minimum} The control $\delta^*$ satisfying \eqref{First_App_FBSDE}  is a local minimum of the agent's performance criterion $J(\delta)$.
\end{theorem}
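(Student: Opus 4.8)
The plan is to reduce the statement to a one–parameter, second–order analysis along rays emanating from $\delta^*$. Fix an arbitrary direction $w\in\mcA$ and set $\varphi_w(\epsilon)=J(\delta^*+\epsilon\,w)$; this is well defined because $\mcA$ is a linear space, so $\delta^*+\epsilon\,w\in\mcA$. By Theorem \ref{Gat:J} the first–order term vanishes, $\varphi_w'(0)=\langle \mathcal{D}\,J(\delta^*),w\rangle=0$, so it suffices to establish that the second–order term $\varphi_w''(0)$ is non-negative for every $w$, and then to control the Taylor remainder uniformly over directions so as to upgrade this into a genuine neighbourhood statement $J(\delta)\ge J(\delta^*)$ for $\norm{\delta-\delta^*}$ small.

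First I would differentiate the G\^ateaux–derivative expression of Theorem \ref{Gat:J} once more in $\epsilon$. Writing the bracket there as $g(\delta)_t=\delta_t-2\,\gamma\,\mathbb{E}_{t^-}[D^{\delta}_T]-\gamma-\alpha$, the product rule produces two contributions: one in which the outer multiplier $\phi_t(\delta^*_t+\epsilon\,w_t)$ is differentiated, and one in which $g$ is differentiated. The first contribution is multiplied by $g(\delta^*)_t$, which vanishes identically by \eqref{First_App_FBSDE}; this is the crucial simplification, since it annihilates the indefinite terms involving the derivative of the density. What survives is
\[
\varphi_w''(0)=\mathbb{E}\left[\int_0^T w_t\,\phi_t(\delta^*_t)\,\partial_\epsilon g(\delta^*+\epsilon\,w)_t\big|_{\epsilon=0}\,\diff A_t\right].
\]
To evaluate $\partial_\epsilon g$ I would split $\mathbb{E}_{t^-}[D^{\delta}_T]=D^{\delta}_{t^-}+\mathbb{E}_{t^-}[\int_t^T(1-\Phi_s(\delta_s))\,\diff A_s]$: the forward part differentiates cleanly to $-\mathbb{E}_{t^-}[\int_t^T\phi_s(\delta^*_s)\,w_s\,\diff A_s]$, where only the bounded density appears, while the past part $D^{\delta}_{t^-}$ is handled exactly by the compensator/Fubini technique already used to establish Lemma \ref{Gat:CostLPandQP} in \ref{proof of lemma}. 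All interchanges of limit, integral and expectation are justified by dominated convergence, using the boundedness of $\phi$, the bound $\lambda\le\bar\lambda$ of Assumption \ref{Assumption_S.I.}, the Lipschitz bound on $\Phi$, and $\mathbb{E}[N_T^2]<\infty$.

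The last step is to read off the sign of $\varphi_w''(0)$. It reduces to the manifestly non-negative diagonal term $\mathbb{E}[\int_0^T \phi_t(\delta^*_t)\,w_t^2\,\lambda_t\,\diff t]$ together with a quadratic contribution from the penalty of the form $2\,\gamma\,\mathbb{E}[\int_0^T w_t\,\phi_t(\delta^*_t)\,\mathbb{E}_{t^-}[\int_t^T\phi_s(\delta^*_s)\,w_s\,\diff A_s]\,\diff A_t]$, which by the tower property is a perfect-square contribution in the continuous increasing process $M_t=\int_0^t\phi_s(\delta^*_s)\,w_s\,\diff A_s$ and hence also non-negative. Any residual cross term is estimated in absolute value by a constant of the form $k\,T\,\bar\lambda\,(\max\{1,2\,\gamma\})^2$ times the diagonal term, exactly as in the contraction estimate of Theorem \ref{TheoremFBSDE}; under the smallness condition $k\,T\,\bar\lambda\,(\max\{1,2\,\gamma\})^2<1$ it is strictly dominated, so $\varphi_w''(0)\ge 0$, with strict inequality whenever $w$ is not a.e. zero on the support of $\lambda$.

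I expect the main obstacle to be the second differentiation of the miss–counting process $D^{\delta}$, which is not differentiable pathwise: along the ray $\epsilon\mapsto D^{\delta^*+\epsilon w}_{t}$ the trajectories are piecewise–constant step functions that jump by $\pm1$ whenever an atom of $\rmes$ crosses the moving threshold $\delta^*_s+\epsilon\,w_s$. The resolution, which I would emphasise, is never to differentiate $D$ pathwise but to pass to the predictable compensator $\rcompr$ first, so that the indicator $G$ is replaced by the Lipschitz distribution function $\Phi$ with bounded density $\phi$, and then to move the outer $t$–integration across the inner $u$–integration by a Fubini argument (the adjoint identity already implicit in Lemma \ref{Gat:CostLPandQP}). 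This smoothing under the expectation is what renders $\varphi_w$ twice differentiable despite the roughness of the underlying trajectories, and it is also what lets one bound the Taylor remainder uniformly in $w$, thereby passing from directional convexity at $\delta^*$ to the asserted local minimality.
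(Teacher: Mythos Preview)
Your approach is essentially the paper's: compute the second G\^ateaux derivative, observe that the term carrying $\phi_t'(\delta_t)$ is multiplied by the bracket $\delta_t-2\,\gamma\,\mathbb{E}_{t^-}[D^{\delta}_T]-\gamma-\alpha$, which vanishes identically at $\delta^*$ by \eqref{First_App_FBSDE}, and then check that the surviving piece is non-negative.

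One point where you over-complicate matters: there is no ``residual cross term'' requiring the smallness condition of Theorem \ref{TheoremFBSDE}. Once you recombine your forward contribution over $[t,T]$ with whatever the past part $D^{\delta}_{t^-}$ produces via the compensator/Fubini swap, the inner integral runs over the full interval $[0,T]$, and by the tower property the cross term collapses to exactly $2\,\gamma\,\mathbb{E}\big[(\int_0^T \phi_t(\delta^*_t)\,w_t\,\diff A_t)^2\big]\ge 0$. Thus the second variation at $\delta^*$ is non-negative unconditionally; the paper records this directly (see the explicit formula in Appendix \ref{SecondGD}) and never invokes the contraction constant here. Your care about bounding the Taylor remainder and about never differentiating $D^{\delta}$ pathwise is sound, and in fact goes beyond what the paper writes: the paper simply asserts that a non-negative second G\^ateaux derivative yields a local minimum without making the remainder estimate explicit.
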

\begin{proof}
Recall that the G\^ateaux derivative  $\langle \mathcal{D}\,J(\delta^*),w\rangle$ vanishes in every direction $w\in\mcA$.  The second G\^ateaux derivative\footnote{See Appendix \ref{SecondGD} for details of the second G\^ateaux derivative.} at $\delta\in\mcA$  in the directions $\nu,w\in\mcA$  is 
\begin{subequations}
\begin{eqnarray}\label{eqn:second derivative}
\langle \mathcal{D}^2\,J(\delta),\,w,\,\nu\rangle &=&\mathbb{E}\left[\int_0^T w_t\,\nu_t\, \phi_t'(\delta_t)\left(\delta_t-2\,\gamma\,\mathbb{E}_{t^-}\left[D^{\delta}_T\right]-\gamma-\alpha\right)\diff A_t\right]\\ \label{eqn:second derivative b}
&& +\mathbb{E}\left[\int_0^T w_t\,\phi_t(\delta_t)\left(\nu_t+2\,\gamma\,\mathbb{E}_{t^-}\left[\int_0^T \phi_s(\delta_s)\,\nu_s\,\diff A_s\right]\right)\diff A_t\right]\,.
\end{eqnarray}
\end{subequations}
 This G\^ateaux derivative is non-negative at $\delta=\delta^*$ because the expression on the right-hand side of \eqref{eqn:second derivative}  is zero at $\delta=\delta^*$ and the expression in \eqref{eqn:second derivative b} is non-negative for every $\nu,\,w\,\in \mcA$. Therefore, $\delta^*$ is a local minimum.
\end{proof}

\begin{lemma}\label{LemmaContinuity}
Let $\delta\in\mcA$  and  $J(\delta)=y_0<\infty$. Let $\bar{\lambda}<\infty$ be the bound  for the stochastic intensity $\lambda$ in Assumption \ref{Assumption_S.I.}, and let  $\bar{N}<\infty$ be a bound for the number of trade attempts. Assume the function
\begin{equation*}
r_t(x)=\int_{-\infty}^x z\,\phi_t(\diff z)
\end{equation*}
is  Lipschitz in $x$ uniformly on $[0,T]\times \Omega$, with Lipschitz constant $\tilde{k}$.

Given $\epsilon>0$, define
\begin{equation}\label{eqn:eta}
\eta=\frac{\epsilon}{2}\left(
(\tilde{k}+4\,\bar{N}\,k)\,\bar{\lambda}\,T+2\,\bar{\lambda}^2\,k\,T^2
\right)^{-1}>0\,,
\end{equation}
then, for all $w\in\mcA$ such that $\norm{\delta-w}_{\infty}<\eta$, we have  $\abs{y_0-J(w)}<\epsilon$.
\end{lemma}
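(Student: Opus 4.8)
The plan is to show that $J$ is globally Lipschitz in $\delta$ with respect to $\norm{\cdot}_\infty$, with an explicit constant $L$, and then to take $\eta=\epsilon/(2L)$. I would start from the decomposition $J=J^{\text{C}}+\alpha\,J^{\text{LP}}+\gamma\,J^{\text{QP}}$ in \eqref{eqn:performance criterion short} and bound each of the three differences $\abs{J^{\bullet}(\delta)-J^{\bullet}(w)}$ separately by a constant times $\norm{\delta-w}_\infty$. Throughout I would use the compensator $\rcompr(\diff z,\diff t)=\phi_t(\diff z)\,\lambda_t\,\diff t$, the bound $\lambda\leq\bar\lambda$ of Assumption \ref{Assumption_S.I.}, the elementary estimate $\mathbb{E}\big[\int_0^T\abs{\delta_t-w_t}\,\diff t\big]\leq T\,\norm{\delta-w}_\infty$, and the identities $\int_{\mathbb{R}}z\,\Ffct(\delta_t-z)\,\phi_t(\diff z)=r_t(\delta_t)$ and $\int_{\mathbb{R}}G(\delta_t-z)\,\phi_t(\diff z)=1-\Phi_t(\delta_t)$, which rewrite the cost and miss functionals in terms of the scalar maps $r_t$ and $\Phi_t$.

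The cost and linear-penalty terms are the routine part. Writing $J^{\text{C}}(\delta)-J^{\text{C}}(w)=\mathbb{E}\big[\int_0^T(r_t(\delta_t)-r_t(w_t))\,\lambda_t\,\diff t\big]$ and using the Lipschitz constant $\tilde k$ of $r_t$ gives $\abs{J^{\text{C}}(\delta)-J^{\text{C}}(w)}\leq\tilde k\,\bar\lambda\,T\,\norm{\delta-w}_\infty$; the same argument with the Lipschitz constant $k$ of $\Phi_t$ gives $\abs{J^{\text{LP}}(\delta)-J^{\text{LP}}(w)}\leq k\,\bar\lambda\,T\,\norm{\delta-w}_\infty$.

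The main obstacle is the quadratic-penalty term $J^{\text{QP}}(\delta)=\mathbb{E}\big[\int_0^T(2\,D^{\delta}_{t^-}+1)(1-\Phi_t(\delta_t))\,\lambda_t\,\diff t\big]$, because the perturbation in $\delta$ propagates through the forward process $D^{\delta}$. I would first establish the auxiliary sensitivity estimate $\mathbb{E}\big[\abs{D^{\delta}_{t}-D^{w}_{t}}\big]\leq k\,\bar\lambda\,T\,\norm{\delta-w}_\infty$: bound $\abs{D^{\delta}_{t}-D^{w}_{t}}$ by $\int_0^t\int_{\mathbb{R}}\abs{G(\delta_s-z)-G(w_s-z)}\,\rmes(\diff z,\diff s)$, observe that the integrand is predictable so $\rmes$ may be replaced by its compensator $\rcompr$ under the expectation (exactly the mechanism used in the proposition behind \eqref{eqn:jlp}--\eqref{eqn:lqp}), and then use $\int_{\mathbb{R}}\abs{G(\delta_s-z)-G(w_s-z)}\,\phi_s(\diff z)=\abs{\Phi_s(\delta_s)-\Phi_s(w_s)}\leq k\,\abs{\delta_s-w_s}$. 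With this in hand I would split the integrand of $J^{\text{QP}}(\delta)-J^{\text{QP}}(w)$ as $(2\,D^{\delta}_{t^-}+1)\big((1-\Phi_t(\delta_t))-(1-\Phi_t(w_t))\big)+2\,(D^{\delta}_{t^-}-D^{w}_{t^-})(1-\Phi_t(w_t))$ and control the two pieces using, respectively, the a.s. bound $D^{\delta}_{t^-}\leq N_{t^-}\leq\bar N$ with the Lipschitz property of $\Phi_t$, and the sensitivity estimate together with $0\leq 1-\Phi_t(w_t)\leq 1$. This yields $\abs{J^{\text{QP}}(\delta)-J^{\text{QP}}(w)}\leq\big((2\bar N+1)\,k\,\bar\lambda\,T+2\,\bar\lambda^2\,k\,T^2\big)\,\norm{\delta-w}_\infty$; the a.s. boundedness $N_T\leq\bar N$ (the extra hypothesis of this lemma) is precisely what keeps the first piece finite and uniform.

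Finally I would collect the three bounds into $\abs{J(\delta)-J(w)}\leq L\,\norm{\delta-w}_\infty$, where, after absorbing the penalty weights $\alpha,\gamma$ and the crude bound $2\bar N+2\leq 4\bar N$ into a single constant, $L=(\tilde k+4\,\bar N\,k)\,\bar\lambda\,T+2\,\bar\lambda^2\,k\,T^2$, the reciprocal of which appears in \eqref{eqn:eta}. With $\eta=\epsilon/(2L)$, any $w\in\mcA$ with $\norm{\delta-w}_\infty<\eta$ then satisfies $\abs{y_0-J(w)}\leq L\,\eta=\epsilon/2<\epsilon$, as claimed. The only delicate point is the predictability justification that licenses the substitution of $\rcompr$ for $\rmes$ in the sensitivity estimate; everything else reduces to the Lipschitz assumptions on $r_t$ and $\Phi_t$ and the uniform bounds $\lambda\leq\bar\lambda$, $N_T\leq\bar N$.
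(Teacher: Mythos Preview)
Your proposal is correct and follows essentially the same route as the paper's proof: decompose $J=J^{\text{C}}+\alpha J^{\text{LP}}+\gamma J^{\text{QP}}$, bound the first two pieces via the Lipschitz constants $\tilde k$ and $k$ of $r_t$ and $\Phi_t$, and handle $J^{\text{QP}}$ by the same add-and-subtract splitting together with the sensitivity estimate $\mathbb{E}\big[\abs{D^{\delta}_{t}-D^{w}_{t}}\big]\leq k\,\bar\lambda\,T\,\norm{\delta-w}_\infty$ and the a.s.\ bound $D_{t^-}\le\bar N$. The only cosmetic difference is the order of the splitting (the paper writes $G(\delta_t-z)\,(D^{\delta}_{t^-}-D^{w}_{t^-})+D^{w}_{t^-}\,(G(\delta_t-z)-G(w_t-z))$ whereas you swap the roles of $\delta$ and $w$), which yields the same constants; the final absorption of $\alpha,\gamma$ into the constant $L$ is handled equally loosely in both arguments.
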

\begin{proof}
Consider $w\in\mcA$ s.t. $\norm{\delta-w}_{\infty}<\eta$. Recall, $J=J^C+\alpha\,J^{LP}+\gamma\,J^{QP}$ and observe that 
\begin{equation*}
\abs{J(\delta)-J(w)}\leq \abs{J^C(\delta)-J^C(w)}+\alpha\,\abs{J^{LP}(\delta)-J^{LP}(w)}+\gamma\,\abs{J^{QP}(\delta)-J^{QP}(w)}\,.
\end{equation*}
Next, we bound each term on the right-hand side of the inequality. Firstly,
\begin{align*}
\abs{J^C(\delta)-J^C(w)}&=\abs{\mathbb{E}\left[\int_0^T\int_{\mathbb{R}}z\,\left(\Ffct(\delta_t-z)-\Ffct(w_t-z)\right)\,\phi_t(\diff z)\,\lambda_t\,\diff t\right]}
\\
&
\leq \tilde{k}\,\mathbb{E}\left[\int_0^T \abs{\delta_t-w_t} \diff A_t \right]
\\
&
\leq \tilde{k}\,\bar{\lambda}\,T\,\mathbb{E}\left[\sup_\tT\abs{\delta_t-w_t}\right].
\end{align*}
Secondly,
\begin{equation*}
\abs{J^{LP}(\delta)-J^{LP}(w)}
\leq \mathbb{E}\left[\int_0^T\abs{\Phi(\delta_t)-\Phi(w_t)}\, \diff A_t\right]
\leq \bar{\lambda}\,k\,T\,\mathbb{E}\left[\sup_\tT\abs{\delta_t-w_t}\right].
\end{equation*}
Finally,
\begin{align}
\abs{J^{QP}(\delta)-J^{QP}(w)} &\leq 
\mathfrak{A}^{\delta,w}
+ k\,\mathbb{E}\left[\int_0^T\abs{\delta_t-w_t} \diff A_t\right]
%\\ 
\label{eqn: ineq third term}
%&
\leq
\mathfrak{A}^{\delta,w}+ k\,\bar{\lambda}\,T\,\mathbb{E}\left[\sup_\tT\abs{\delta_t-w_t}\right],
\end{align}
where
\[
\mathfrak{A}^{\delta,w} :=\abs{\mathbb{E}\left[\int_0^T\int_{\mathbb{R}}\left(2\,G(\delta_t-z)\,D^{\delta}_{t^-}-2\,G(w_t-z)\,D^{w}_{t^-}\right)\phi_t(\diff z)\, \diff A_t\right]}.
\]
Next, we bound  the first term on the right-hand side of inequality \eqref{eqn: ineq third term}:
\begin{align*}
\mathfrak{A}^{\delta,w}
% &=\abs{\mathbb{E}\left[\int_0^T\int_{\mathbb{R}}\left(2\,G(\delta_t-z)(D^{\delta}_{t^-}-D^{w}_{t^-})+2\,D^{w}_{t^-}(G(\delta_t-z)-G(w_t-z))\right)\phi_t(\diff z) \,\diff A_t\right]}
% \\
&\leq \abs{\mathbb{E}\left[\int_0^T\int_{\mathbb{R}}2\,G(\delta_t-z)\left(D^{\delta}_{t^-}-D^{w}_{t^-}\right)\,\phi_t(\diff z) \,\diff A_t\right]}
\\\nonumber &\quad\quad\quad+\abs{\mathbb{E}\left[\int_0^T\int_{\mathbb{R}}2\,D^{w}_{t^-}\left(G(\delta_t-z)-G(w_t-z)\right)\,\phi_t(\diff z) \,\diff A_t\right]}\\
&\quad\quad\leq 2\,\bar{\lambda}\,\mathbb{E}\left[\int_0^T\abs{D^{\delta}_{t^-}-D^{w}_{t^-}}\diff t\right]+2\,\bar{N}\,k\,\bar{\lambda}\,\mathbb{E}\left[\int_0^T \abs{\delta_t-w_t}\diff t\right]\\
&\quad\quad\leq 2\,\bar{\lambda}\,\mathbb{E}\left[\int_0^T\abs{\int_0^{t^-}\int_{\mathbb{R}}G(\delta_s-z)-G(w_s-z)\,\rmes(\diff z,\diff s) }\diff t\right]+2\,\bar{N}\,k\,T\,\bar{\lambda}\,\mathbb{E}\left[\sup_\tT \abs{\delta_t-w_t}\right]\\
&\quad\quad\leq 2\,\bar{\lambda}\,k\,\int_0^T\mathbb{E}\left[\int_0^{t^-}\abs{\delta_s-w_s}\,\lambda_t\diff s\right] \diff t+2\,\bar{N}\,k\,T\,\bar{\lambda}\,\mathbb{E}\left[\sup_\tT \abs{\delta_t-w_t}\right]\\
&\quad\quad\leq 2\,\bar{\lambda}^2\,k\,T^2\,\mathbb{E}\left[\sup_\tT \abs{\delta_t-w_t}\right]+2\,\bar{N}\,k\,T\,\bar{\lambda}\,\mathbb{E}\left[\sup_\tT \abs{\delta_t-w_t}\right]\\
&\quad\quad= (2\,\bar{\lambda}^2\,k\,T^2+2\,\bar{N}\,k\,T\,\bar{\lambda})\,\mathbb{E}\left[\sup_\tT \abs{\delta_t-w_t}\right]\,.
\end{align*}
Hence,
\begin{align*}
\abs{J^{QP}(\delta)-J^{QP}(w)}&\leq \left(2\,\bar{\lambda}^2\,k\,T^2+2\,\bar{N}\,k\,T\,\bar{\lambda} +k\,T\,\bar{\lambda}\right)\,\mathbb{E}\left[\sup_\tT \abs{\delta_t-w_t}\right]\,,
\end{align*}
and since $\norm{\delta-w}_{\infty}\le\eta$,  we have
\begin{align*}
\abs{J(\delta)-J(w)}&\leq \abs{J^C(\delta)-J^C(w)}+\alpha\,\abs{J^{LP}(\delta)-J^{LP}(w)}+\gamma\,\abs{J^{QP}(\delta)-J^{QP}(w)}\\
&\leq \left(  \tilde{k}\,\bar{\lambda}\,T+ \alpha\,k\,\bar{\lambda}\,T+2\,\gamma\,\bar{\lambda}^2\,k\,T^2+k\,\gamma\,T\,\bar{\lambda}+2\,\gamma\,\bar{N}\,k\,T\,\bar{\lambda}\right)\,\mathbb{E}\left[\sup_\tT \abs{\delta_t-w_t}\right]\\
&\leq \left(\bar{z}\,k\,T\,\bar{\lambda}+2\,\bar{\lambda}^2\,k\,T^2+4\,k\,T\,\bar{\lambda}\right)\,\eta\\
&=\frac{\epsilon}{2}<\epsilon\,,
\end{align*}
where the last equality follows from the choice of $\eta$ in \eqref{eqn:eta}, and the proof is complete.
\end{proof}

Before proving the main result of this section, which shows that our candidate control is the global minimum of the performance criterion $J(\delta)$, we prove the following auxiliary lemma. 
\begin{lemma}\label{LemmaGlobalMinimum}
If the functional $J(\delta)$ has a global minimum $\hat{\delta}\in \mcA$, then
\begin{equation}
\langle \mathcal{D}\,J(\hat{\delta}),w\rangle\geq 0\,,\quad\quad \forall w \in \mcA\,.
\end{equation}
\end{lemma}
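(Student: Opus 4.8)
The plan is to apply the standard first-order necessary condition for a minimizer, using that $\mcA$ is a linear space over $\mathbb{R}$. Since $\hat\delta\in\mcA$ and any direction $w\in\mcA$, the perturbed control $\hat\delta+\epsilon\,w$ lies in $\mcA$ for every $\epsilon\in\mathbb{R}$; this admissibility fact is essentially all we need beyond the definition of the G\^ateaux derivative.

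First I would fix an arbitrary $w\in\mcA$ and exploit global minimality: $J(\hat\delta+\epsilon\,w)\geq J(\hat\delta)$ for all $\epsilon\in\mathbb{R}$. Dividing by $\epsilon>0$, the difference quotient $\tfrac{1}{\epsilon}\big(J(\hat\delta+\epsilon\,w)-J(\hat\delta)\big)$ is nonnegative for every $\epsilon>0$. Next I would let $\epsilon\downarrow 0$. The G\^ateaux derivative $\langle \mathcal{D}\,J(\hat\delta),w\rangle$ exists---its value is the explicit linear functional computed in Theorem \ref{Gat:J}---so the two-sided limit in its definition agrees with the one-sided limit from the right. Passing to the limit in the nonnegative difference quotients preserves the sign and yields $\langle \mathcal{D}\,J(\hat\delta),w\rangle\geq 0$. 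As $w\in\mcA$ was arbitrary, the inequality holds in every direction, which is the claim.

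The only point requiring care is reconciling the two-sided definition of the G\^ateaux derivative with the one-sided inequality supplied by minimality: one must note that existence of the two-sided limit (guaranteed by the closed-form expression in Theorem \ref{Gat:J}) permits restricting to $\epsilon\downarrow 0$ without altering the derivative's value. There is no substantive obstacle here; the content is entirely the linearity of $\mcA$ (so that the perturbation stays admissible) together with the limit-of-nonnegatives argument.

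Looking ahead, I would remark that because the derivative in Theorem \ref{Gat:J} is linear in $w$, replacing $w$ by $-w$ upgrades this inequality to the equality $\langle \mathcal{D}\,J(\hat\delta),w\rangle=0$ in every direction; this is precisely what is needed to combine with the FBSDE characterization of Theorem \ref{Gat:J} and the uniqueness result of Theorem \ref{TheoremFBSDE} in the subsequent global-optimality argument, so I would state the lemma in the weaker $\geq 0$ form here and defer that sharpening to where it is used.
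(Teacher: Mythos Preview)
Your proof is correct and follows essentially the same approach as the paper: both use admissibility of $\hat\delta+\epsilon\,w$ (linearity of $\mcA$), global minimality, and existence of the G\^ateaux derivative to control the sign. The only cosmetic difference is that the paper argues by contradiction---assuming $\langle \mathcal{D}\,J(\hat\delta),\hat w\rangle<0$ and producing a small $\hat\epsilon>0$ with $J(\hat\delta+\hat\epsilon\,\hat w)<J(\hat\delta)$---whereas you argue directly by passing to the limit in nonnegative difference quotients; these are contrapositives of one another.
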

\begin{proof}
The proof is by contradiction. Suppose there is $\hat{w}\in\mcA$ such that $\langle \mathcal{D}\,J(\hat{\delta}),\hat{w}\rangle=\hat{\eta}<0$. Set $\epsilon=\abs{\nicefrac{\hat{\eta}}{2}}>0$, and because
\begin{equation}
\hat{\eta}= \lim_{\epsilon\to 0}\frac{J(\hat{\delta}+\epsilon \,\hat{w})-J(\hat{\delta})}{\epsilon}\,,
\end{equation}
there exists $\rho>0$ such that if $\abs{\epsilon}<\rho$, then
\begin{equation}
\abs{\frac{J(\hat{\delta}+\epsilon \,\hat{w})-J(\hat{\delta})}{\epsilon}-\hat{\eta}}<\abs{\hat{\eta}/2}\,.
\end{equation}
Now, fix $\hat{\epsilon}$ such that $0<\hat{\epsilon}<\rho$, then 
\begin{equation}
\frac{J(\hat{\delta}+\hat{\epsilon} %\,\hat{w})-J(\hat{\delta})}{\hat{\epsilon}}<\abs{\frac{\hat{\eta}}{2}}+\hat{\eta}=\frac{\hat{\eta}}{2}<0\,.
\,\hat{w})-J(\hat{\delta})}{\hat{\epsilon}}<\abs{ \hat{\eta}/2}+\hat{\eta}= \hat{\eta}/2<0\,.
\end{equation}
Therefore,
\begin{equation}\label{OptimalGlobalEq}
J(\hat{\delta}+\hat{\epsilon} \,\hat{w}) <\hat{\epsilon}\,\frac{\hat{\eta}}{2}+J(\hat{\delta})<J(\hat{\delta})\,,
\end{equation}
and because $\hat{\delta}, \hat{w}\in \mcA$, the control  $\hat{\delta}+\hat{\epsilon} \,\hat{w}$ is in the set $\mcA$, and by \eqref{OptimalGlobalEq}, we have the inequality $J(\hat{\delta}+\hat{\epsilon} \,\hat{w})<J(\hat{\delta})$, which contradicts $\hat{\delta}$ being a global minimizer.
\end{proof}

\begin{theorem}\textbf{Global optimality.}\label{thm:global optimality}
If $J$ has a global minimum at $\hat{\delta}\in \mcA$, then $\hat{\delta}=\delta^*$ a.e. in $\mfT\times \Omega$, with $\delta^*$ solving \eqref{First_App_FBSDE}.
\end{theorem}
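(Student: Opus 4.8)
The plan is to chain together three earlier results: the first-order necessary condition for a global minimum (Lemma \ref{LemmaGlobalMinimum}), the characterization of the vanishing Gâteaux derivative (Theorem \ref{Gat:J}), and the uniqueness of the FBSDE solution (Theorem \ref{TheoremFBSDE}). The whole argument is a short deduction once these ingredients are in place; no fresh estimates are needed.

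First I would apply Lemma \ref{LemmaGlobalMinimum} to the global minimizer $\hat{\delta}$ to obtain $\langle \mathcal{D}\,J(\hat{\delta}),w\rangle\geq 0$ for every $w\in\mcA$. The key observation is that the directional derivative is \emph{odd} in its direction: since the defining limit $\langle \mathcal{D}\,J(\hat\delta),w\rangle=\lim_{\epsilon\to0}\tfrac{1}{\epsilon}[J(\hat\delta+\epsilon w)-J(\hat\delta)]$ is two-sided, replacing $\epsilon$ by $-\epsilon$ gives $\langle \mathcal{D}\,J(\hat\delta),-w\rangle=-\langle \mathcal{D}\,J(\hat\delta),w\rangle$ (equivalently, the explicit expression in Theorem \ref{Gat:J} is linear in $w_t$). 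Because $\mcA$ is a linear space, $-w\in\mcA$ whenever $w\in\mcA$, so the inequality applied to $-w$ yields $\langle \mathcal{D}\,J(\hat{\delta}),w\rangle\leq 0$. Combining the two bounds forces $\langle \mathcal{D}\,J(\hat{\delta}),w\rangle=0$ for all $w\in\mcA$; that is, the Gâteaux derivative vanishes in every direction at $\hat\delta$.

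Next I would feed this into Theorem \ref{Gat:J}. Its contradiction argument shows precisely that a control at which the Gâteaux derivative vanishes in every direction must satisfy the fixed-point relation \eqref{First_App_FBSDE} almost everywhere in $\mfT\times\Omega$. Hence $\hat{\delta}$ is a solution of the FBSDE. Finally, via the change of variables $\tilde{\delta}_t=\delta_t+2\,\gamma\,D^{\delta}_t$ introduced before Theorem \ref{TheoremFBSDE}, uniqueness of solutions to \eqref{First_App_FBSDE} is equivalent to uniqueness of solutions to \eqref{eqn:FBSDE we study}, which Theorem \ref{TheoremFBSDE} establishes under the standing Lipschitz bound $k\,T\,\bar{\lambda}\,(\max\{1,2\,\gamma\})^2<1$. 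Since $\delta^*$ also solves \eqref{First_App_FBSDE}, uniqueness gives $\hat{\delta}=\delta^*$ a.e. in $\mfT\times\Omega$, as claimed.

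The only points requiring care are the odd symmetry of the directional derivative (so that the sign-flip argument is legitimate) and confirming that the parameter condition guaranteeing uniqueness in Theorem \ref{TheoremFBSDE} is assumed to hold here, since without it the final identification of $\hat\delta$ with $\delta^*$ could fail. Neither is a genuine obstacle: the first is immediate from the two-sided limit, and the second is the standing hypothesis under which the candidate $\delta^*$ was constructed. Everything else is a direct concatenation of the previously proved statements.
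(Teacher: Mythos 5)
Your proposal is correct and follows essentially the same route as the paper's proof: both rest on Lemma \ref{LemmaGlobalMinimum} together with the sign-flip argument (linearity of the G\^ateaux derivative in the direction, with $-w\in\mcA$), the characterization of vanishing derivatives in Theorem \ref{Gat:J}, and uniqueness of the solution to \eqref{First_App_FBSDE}. The only difference is organizational: the paper argues by contradiction, producing a direction $\hat{w}$ with $\langle \mathcal{D}\,J(\hat{\delta}),\hat{w}\rangle>0$ from the assumption $\hat{\delta}\neq\delta^*$ and then flipping its sign, whereas you argue directly (derivative vanishes in every direction, hence $\hat{\delta}$ solves \eqref{First_App_FBSDE}, hence equals $\delta^*$ by Theorem \ref{TheoremFBSDE}), which has the minor merit of making the appeal to uniqueness explicit where the paper leaves it implicit.
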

\begin{proof}
The proof is by contradiction.  Suppose the global minimum $\hat{\delta}\in \mcA$, but it is not true that $\hat{\delta}=\delta^*$ a.e. in $\mfT\times \Omega$, with $\delta^*$ solving \eqref{First_App_FBSDE}, i.e.,   there exists $(\mathbb{T},\,\mfO)\in \mathcal{B}(\mfT)\times\mcF_T$ with $\mathbb{L}(\mathbb{T})\,\mathbb{P}(\mfO)>0$ such that $\hat{\delta}\ne\delta^*$ on $\mathbb{T}\times\mfO$. First, by Lemma \ref{LemmaGlobalMinimum} 
\begin{equation}
\langle \mathcal{D}\,J(\hat{\delta}),w\rangle\geq 0\,,\qquad  \forall w \in \mcA\,,
\end{equation}
and because $\hat{\delta}\neq\delta^*$ on $\mathbb{T}\times\mfO$,  there exists $\hat{w}\in\mcA$ such that $\langle \mathcal{D}\,J(\hat{\delta}),\hat{w}\rangle> 0$.
Now, take $\tilde{w}=-\hat{w} \in \mcA$, then
%and notice that the G\^ateaux derivative $\langle \mathcal{D}\,J(\hat{\delta}),\tilde{w}\rangle$ is bounded above, i.e.,   
\begin{align*}
\langle \mathcal{D}\,J(\hat{\delta}),\tilde{w}\rangle&= \mathbb{E}\left[\int_0^T  \tilde{w}_t\,\phi_t(\hat{\delta}_t)\left(\hat{\delta}_t-2\,\gamma\,\mathbb{E}_{t^-}\left[D^{\hat{\delta}}_{T}\right]-\alpha-\gamma \right)\,\diff A_t\right]\\
&= -\mathbb{E}\left[\int_0^T  \hat{w}_t\,\phi_t(\hat{\delta}_t)\left(\hat{\delta}_t-2\,\gamma\,\mathbb{E}_{t^-}\left[D^{\hat{\delta}}_{T}\right]-\alpha-\gamma \right)\,\diff A_t\right]\\
&=-\langle \mathcal{D}\,J(\hat{\delta}),\hat{w}\rangle\\
&<0\,,
\end{align*}
which contradicts Lemma \ref{LemmaGlobalMinimum}. Therefore, if there is a global minimum at $\hat{\delta}\in \mcA$, then $\hat{\delta}=\delta^*$ a.e. in $\mfT\times \Omega$.
\end{proof}

\section{Performance of strategy}\label{sec:numerical results} 
The expectation that appears in  \eqref{First_App_FBSDE} is conditional on the information $\mathcal{F}_{t^-}$, therefore the process $\delta^*$ is  a sub-martingale. Here, we study a slight variation of the FBSDE in \eqref{First_App_FBSDE} and derive a partial-integro differential equation for the optimal control. 

To this end, fix the optimal control $\delta^*\in \mcA$ and define the  process $(\check{\delta}_t)_{\tT}$, where
\begin{align*}
\check{\delta}_t=2\,\gamma\,\mathbb{E}_{t}\left[D^{\delta^*}_{T}\right]+\gamma+\alpha\,.
\end{align*}
Observe that $\delta^*$ in \eqref{First_App_FBSDE} is the c\`agl\`ad (LCRL) version of the c\`adl\`ag (RCLL) process $\check{\delta}$, and $\delta^*_t=\check{\delta}_{t^-}$. Define the dynamics of the missed trades $D^{\delta^*}$ as a function of the process $\check{\delta}$:
\begin{align*}
D^{{\check{\delta}}}_t
%&=\int_0^t \int_{\mathbb{R}}G(\check{\delta}_{s^-}-z)\,\rmes(\diff z,\diff s)\\
&=\int_0^t \int_{\mathbb{R}}G(\check{\delta}_{s^-}-z)\,\rcompr(\diff z,\diff s)+\int_0^t \int_{\mathbb{R}}G(\check{\delta}_{s^-}-z)\,\rcompd(\diff z,\diff s)\,,
\end{align*}
and recall that $\rcompd=\rmes-\rcompr$ is the compensated random measure of $\mathcal N$. 

\begin{Assumption}\label{Assumption Markov Intensity}
The stochastic intensity $\left(\lambda_t\right)_{t\in\mfT}$ has the Markov property, furthermore, the quadratic co-variation between the process $\lambda$ and $D^{\check{\delta}}$ is zero.
\end{Assumption}

By Assumption \ref{Assumption Markov Intensity},  we derive the Markov property of $\check{\delta}$, which we use to  write  $\check{\delta}_t=h(t,D^{\check{\delta}}_t,\lambda_t)$ for a differentiable function $h$ with respect to the first argument. Then the process $D^{\check{\delta}}$ is given by
\begin{equation*}
D^{\check{\delta}}_t = \int_0^t \int_{\mathbb{R}}G(h(s,D^{\check{\delta}}_{s^-},\lambda_t)-z)\,\rcompr(\diff z,\diff s)+\int_0^t \int_{\mathbb{R}}G(h(s,D^{\check{\delta}}_{s^-},\lambda_t)-z)\,\rcompd(\diff z,\diff s)\,,
\end{equation*}
and because $\check{\delta}$ is a martingale, the function $h$ is the solution of a PIDE that we characterize in the following theorem.

\begin{theorem}\label{PIDE Theorem}
Let $\check{\delta}_t=h(t,D^{\check{\delta}}_t,\lambda_t)$. Under Assumptions \ref{Assumption_S.I.} and \ref{Assumption Markov Intensity}, the function $h$  satisfies the PIDE
\begin{align}\label{eqn:PIDE Theorem}
0=\partial_t h(t,D,\lambda)+\mathcal{L}^{\lambda}_t h(t,D,\lambda)+\left(\int_{h(t,D,\lambda)}^{\infty} \lambda\, \phi_t(z) \, \diff z\right)\left(h(t,D+1,\lambda)-h(t,D,\lambda)\right)\,,
\end{align}
with boundary and terminal conditions
\begin{align*}
&\lim_{D \to \infty} h(t,D,\lambda)=\infty \qquad\text{ and }\qquad h(T,D,\lambda)=2\,\gamma\,D+\gamma+\alpha\,.
\end{align*}
Here, $\mathcal{L}^{\lambda}_t h(t,D,\lambda)$ is the infinitesimal generator of the arrival intensity process $\lambda$ acting on the function $h$.
\end{theorem}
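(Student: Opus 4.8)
The plan is to read off the PIDE from the martingale property of $\check{\delta}$, already established above, by applying the change-of-variables formula for marked point processes to the representation $\check{\delta}_t=h(t,D^{\check{\delta}}_t,\lambda_t)$ and then forcing the finite-variation part of the resulting decomposition to vanish.

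First I would set up the differential. Assumption \ref{Assumption Markov Intensity} does two jobs here: the Markov property of $\lambda$ justifies the functional form $\check{\delta}_t=h(t,D^{\check{\delta}}_t,\lambda_t)$, and the vanishing quadratic covariation $[\lambda,D^{\check{\delta}}]\equiv0$ guarantees that $\lambda$ and $D^{\check{\delta}}$ never jump together, so no cross term appears in $\diff h$. Applying the integration formula used in the Proposition above (see \cite{Jeanblanc2009}) I would split $\diff h(t,D^{\check{\delta}}_t,\lambda_t)$ into three pieces: the smooth-in-time contribution $\partial_t h\,\diff t$; the contribution from the motion of $\lambda$, with drift $\mathcal{L}^{\lambda}_t h\,\diff t$ and a martingale remainder; and the jump contribution from $D^{\check{\delta}}$.

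Next I would process the jump term. Since $D^{\check{\delta}}$ increments by $G(\check{\delta}_{s^-}-z)\in\{0,1\}$ at the atoms of $\rmes$, its contribution to $\diff h$ is $\int_{\mathbb{R}}\big(h(t,D^{\check{\delta}}_{t^-}+G(\check{\delta}_{t^-}-z),\lambda_t)-h(t,D^{\check{\delta}}_{t^-},\lambda_t)\big)\,\rmes(\diff z,\diff t)$. Writing $\rmes=\rcompd+\rcompr$ isolates the $\rcompd$-integral, which is a martingale, from the predictable compensator drift. Using $\rcompr(\diff z,\diff t)=\phi_t(\diff z)\,\lambda_t\,\diff t$ together with the identity $G(\check{\delta}_{t^-}-z)=\mathbbm{1}\{z>\check{\delta}_{t^-}\}$ — so that only marks exceeding the current discretion trigger a jump, and the jump is then exactly $+1$ — the compensator drift collapses to $\big(\int_{\check{\delta}_{t^-}}^{\infty}\lambda_t\,\phi_t(z)\,\diff z\big)\big(h(t,D^{\check{\delta}}_{t^-}+1,\lambda_t)-h(t,D^{\check{\delta}}_{t^-},\lambda_t)\big)\,\diff t$, in which $\check{\delta}_{t^-}=h(t,D^{\check{\delta}}_{t^-},\lambda_{t^-})$.

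Collecting terms, $\check{\delta}$ is the sum of a predictable finite-variation drift and a local martingale. Because $\check{\delta}$ is a martingale, the drift must vanish $\diff\mathbb{P}\times\diff t$-almost everywhere; equating it to zero and relabelling the predictable state as $(t,D,\lambda)$ yields \eqref{eqn:PIDE Theorem}. The terminal condition follows from $\check{\delta}_T=2\,\gamma\,D^{\check{\delta}}_T+\gamma+\alpha$, hence $h(T,D,\lambda)=2\,\gamma\,D+\gamma+\alpha$, and the boundary behaviour $\lim_{D\to\infty}h=\infty$ is inherited from this affine, increasing terminal profile propagated backward. The main obstacle I anticipate is not the algebra but the rigorous justification that the two remainder pieces — the $\lambda$-part beyond $\mathcal{L}^{\lambda}_t h\,\diff t$ and the $\rcompd$-integral — are \emph{true} martingales rather than merely local ones, so that setting the drift to zero is legitimate; this relies on the integrability in \eqref{Assumption_S.I.}, the bound $\lambda_t\le\bar{\lambda}$ of Assumption \ref{Assumption_S.I.}, and enough regularity of $h$ (differentiability in $t$ and membership in the domain of $\mathcal{L}^{\lambda}$) for the change-of-variables formula to apply.
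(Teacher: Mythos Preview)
Your proposal is correct and follows essentially the same route as the paper: apply It\^o's formula to $\check{\delta}_t=h(t,D^{\check{\delta}}_t,\lambda_t)$ and set the drift to zero because $\check{\delta}$ is a martingale, which yields \eqref{eqn:PIDE Theorem} together with the stated terminal and boundary conditions. The only notable difference is that the paper, after deriving the PIDE, additionally sketches a comparison-principle argument (bounding the nonlinear integral term by $\bar{\lambda}$) aimed at existence and uniqueness of the PIDE solution, which you do not address; conversely, your write-up spells out the compensator computation and the role of Assumption~\ref{Assumption Markov Intensity} in more detail than the paper does.
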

\begin{proof}

Apply It\^o's formula to $\check{\delta}_t=h(t,D_t^{\check\delta},\lambda_t)$ and note that the drift term (i.e.,  the $\diff t$-term)  vanishes because $\check{\delta}$ is a martingale. Existence and uniqueness of a solution to this PIDE follow from a comparison principle. Specifically, we have
\begin{align*}
0&=\partial_t h(t,D,\lambda)+\mathcal{L}^{\lambda}_t h(t,D,\lambda)+\left(\int_{h(t,D,\lambda)}^{\infty} \lambda_t\, \phi_t(z)\,\diff z\right)\,\left(h(t,D+1,\lambda)-h(t,D,\lambda)\right)\\
&\leq \partial_t h(t,D,\lambda)+\mathcal{L}^{\lambda}_t h(t,D,\lambda)+\left(\int_{-\infty}^{\infty} \lambda_t\, \phi_t(z)\,\diff z\right)\,\left(h(t,D+1,\lambda)-h(t,D,\lambda)\right)\\
&= \partial_t h(t,D,\lambda)+\mathcal{L}^{\lambda}_t h(t,D,\lambda)+ \lambda_t \,\left(h(t,D+1,\lambda)-h(t,D,\lambda)\right)\\
&\leq \partial_t h(t,D,\lambda)+\mathcal{L}^{\lambda}_t h(t,D,\lambda)+ \bar{\lambda} \,\left(h(t,D+1,\lambda)-h(t,D,\lambda)\right)\,. 
\end{align*}
\end{proof}

We use the continuity of $h$ in $t$ to write $\delta^*_t=h(t,D_{t^-}^{\check\delta},\lambda_{t^-})$, see   characterization for $h$ in \eqref{eqn:PIDE Theorem} to compute $\delta^*_t=\check{\delta}_{t^-}$.

\subsection{Poisson arrival of trades}
We solve the PIDE in \eqref{eqn:PIDE Theorem} numerically  to illustrate the performance of the latency-optimal strategy. Assume the agent sends MLOs  according to a homogeneous Poisson process with intensity $\lambda =100$,  the linear penalty parameter is $\alpha=0$, the quadratic penalty parameter  $\gamma$ takes values in $\{0.01,\,0.03,\,0.1\}$, the marks (price and quantity shocks to the LOB) are iid normal  $Z_n\sim \mathbf N(0.2,1)$, $n=1,2,\dots$, and the trading horizon is  $T=1$.

Figure \ref{Fig:DeltaStar h function}  shows the discretion $\delta^*$  as a function of the number of missed trades. The left panel shows three surfaces, one for each value of the quadratic penalty parameter $\gamma$. The higher the value of the quadratic penalty parameter for missing trades, the higher is the optimal discretion employed in the strategy.  The right panel shows the optimal discretion when the number of missed trades is $D^{{\delta^*}}\in\{4,\,8,\,12\}$, and the quadratic penalty parameter is $\gamma\in\{0.01,0.03,0.1\}$. Blue denotes cases with $\gamma=0.01$, green for $\gamma=0.03$, and  red for $\gamma=0.1$. Solid lines are for $D^{\delta^*}=4$, dashed lines are for $D^{\delta^*}=8$, and dash-dotted lines are for $D^{\delta^*}=12$.

\begin{figure}%[H]
\begin{center}
\includegraphics[width=0.45\textwidth]{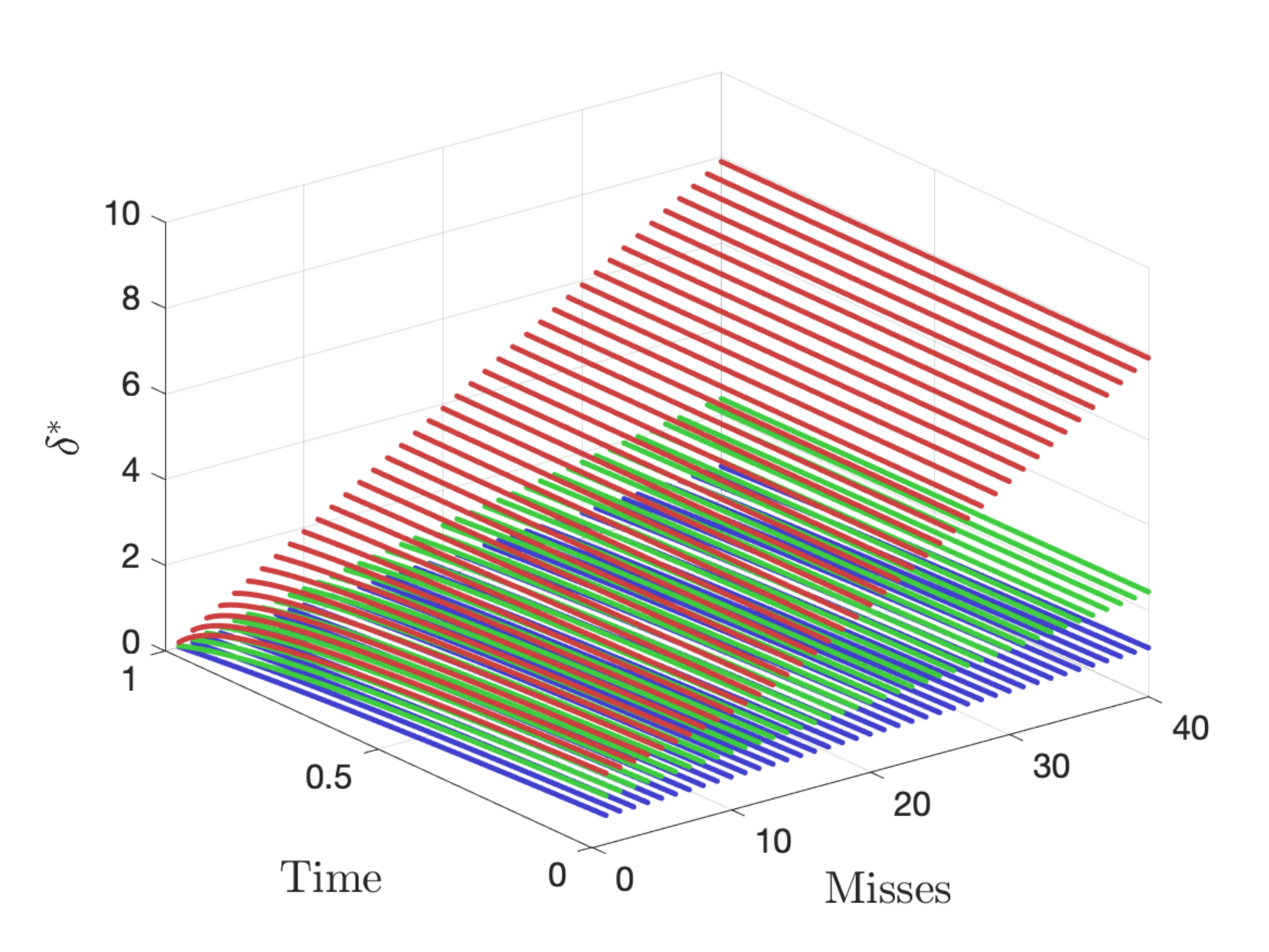}
\includegraphics[width=0.44\textwidth]{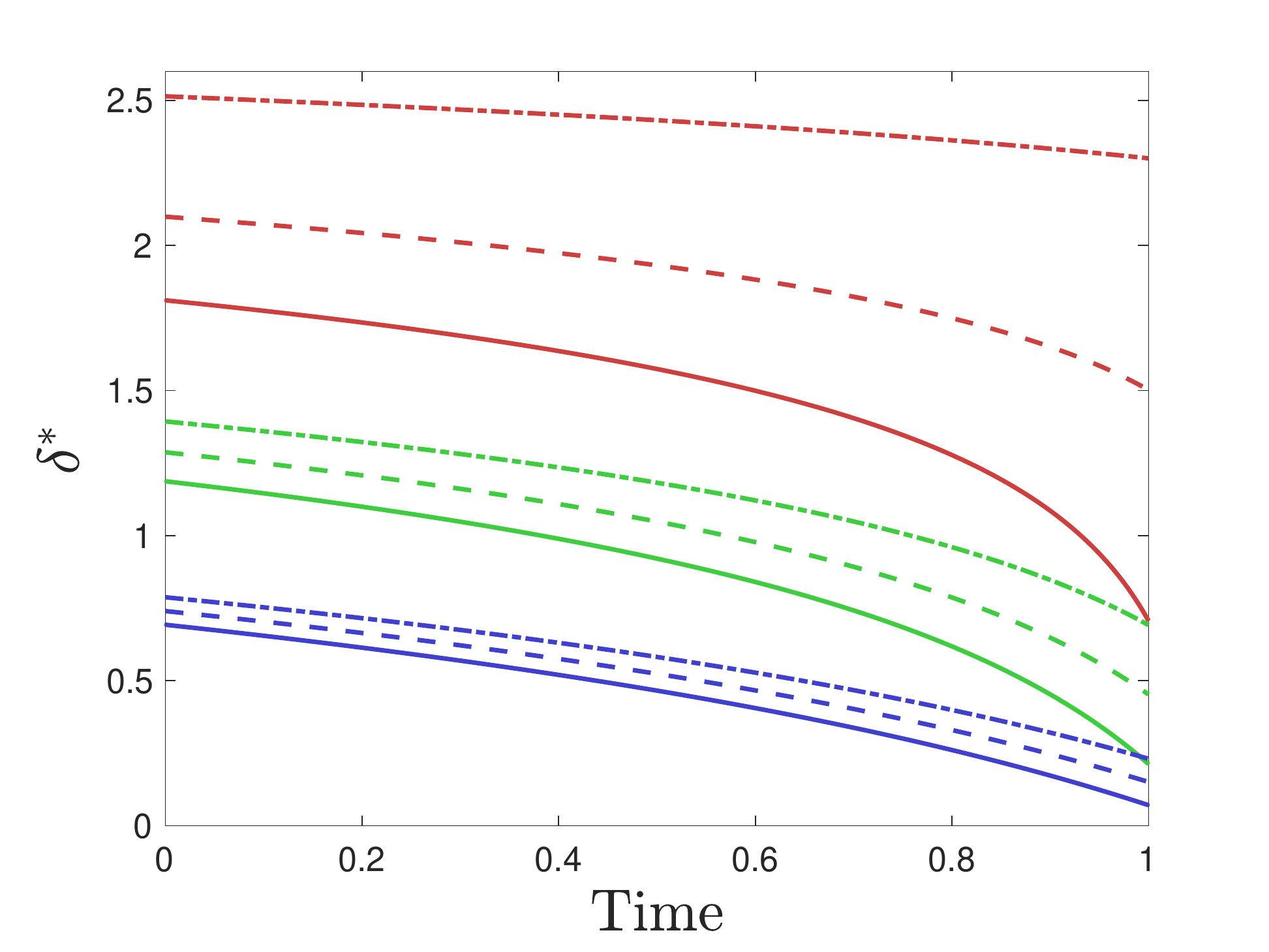}
\caption{Left panel: Optimal strategy $\delta^*$ as a function of time and the number of missed trades  for $\gamma=0.01$ (bottom surface), $\gamma=0.03$ (middle surface), and $\gamma=0.1$ (top surface). The remaining parameters are: $\lambda=100$, $\alpha=0$, and $Z_n\sim \mathbf N(0.2,1)$ for every $n$. Right panel: Optimal strategy for various values of missed trades; blue curves are for $D=4$, green curves are for $D=8$, and red curves are for $D=12$.}
\label{Fig:DeltaStar h function}
\end{center}
\end{figure}
We perform 10,000 simulations of the agent's trading activity and Figure \ref{Fig:Paths_Simulation} shows three sample paths. The top panel shows the optimal discretion of the agent's orders and the cumulative costs accrued from walking the book and from receiving price improvements. The bottom panel shows the number of missed trades and the number of trade attempts. Clearly, as the number of missed trades increases (decreases), the optimal strategy is to increase (decrease) the discretion of the MLOs to walk the LOB.
\begin{figure}[H]
\begin{center}
\includegraphics[width=0.80\textwidth]{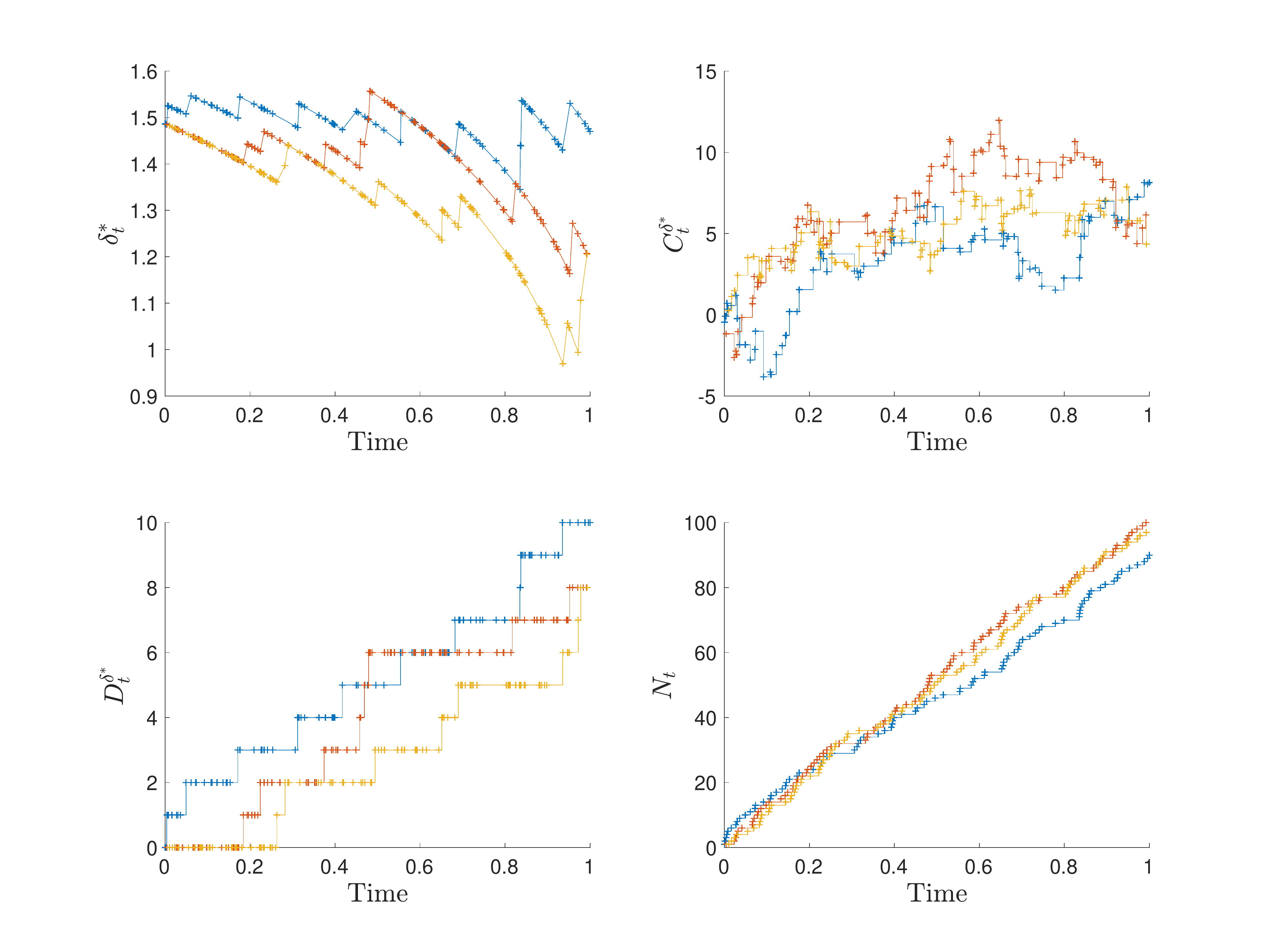}
\caption{Sample paths for the optimal discretion $\delta^*$ (top left panel), number of missed trades $D^{\delta^*}$ (lower left panel), cost of strategy $C^{\delta^*}$ (top right panel), and number of trade attempts $N$ (lower right panel) for three simulations of the MPP. Parameters: $\alpha=0$, $\gamma = 0.07$, $\lambda =100$, $T=1$. }
\label{Fig:Paths_Simulation}
\end{center}
\end{figure}

Figure \ref{Fig:Histograms_DiffGammas} reports various cost metrics of the optimal strategy  for three values of the quadratic penalty parameter $\gamma$. The top panel shows histograms of the  cost incurred by the strategy to fill trades, i.e., $C^{\delta^*}_T$,  and the average cost of walking the LOB to fill trades, i.e.,  $C^{\delta^*}_T/(N_T-D^{\delta^*}_T)$.  Recall that the cost is negative (positive) when the trade is executed with price improvement (deterioration). The Figure shows that as the value of the quadratic penalty parameter increases: (i) the average cost of walking the book to fill trades increases, the total cost increases, and the average number of misses decreases, see bottom panels; (ii) the costs of walking the LOB increase because the strategy fills more orders (i.e., misses fewer trades), see the bottom-left panel. The bottom-right panel shows that the average ratio of missed trades to trade attempts decreases when the penalty for missing trades increases.

\begin{figure}[H]
\begin{center}
\includegraphics[width=0.36\textwidth]{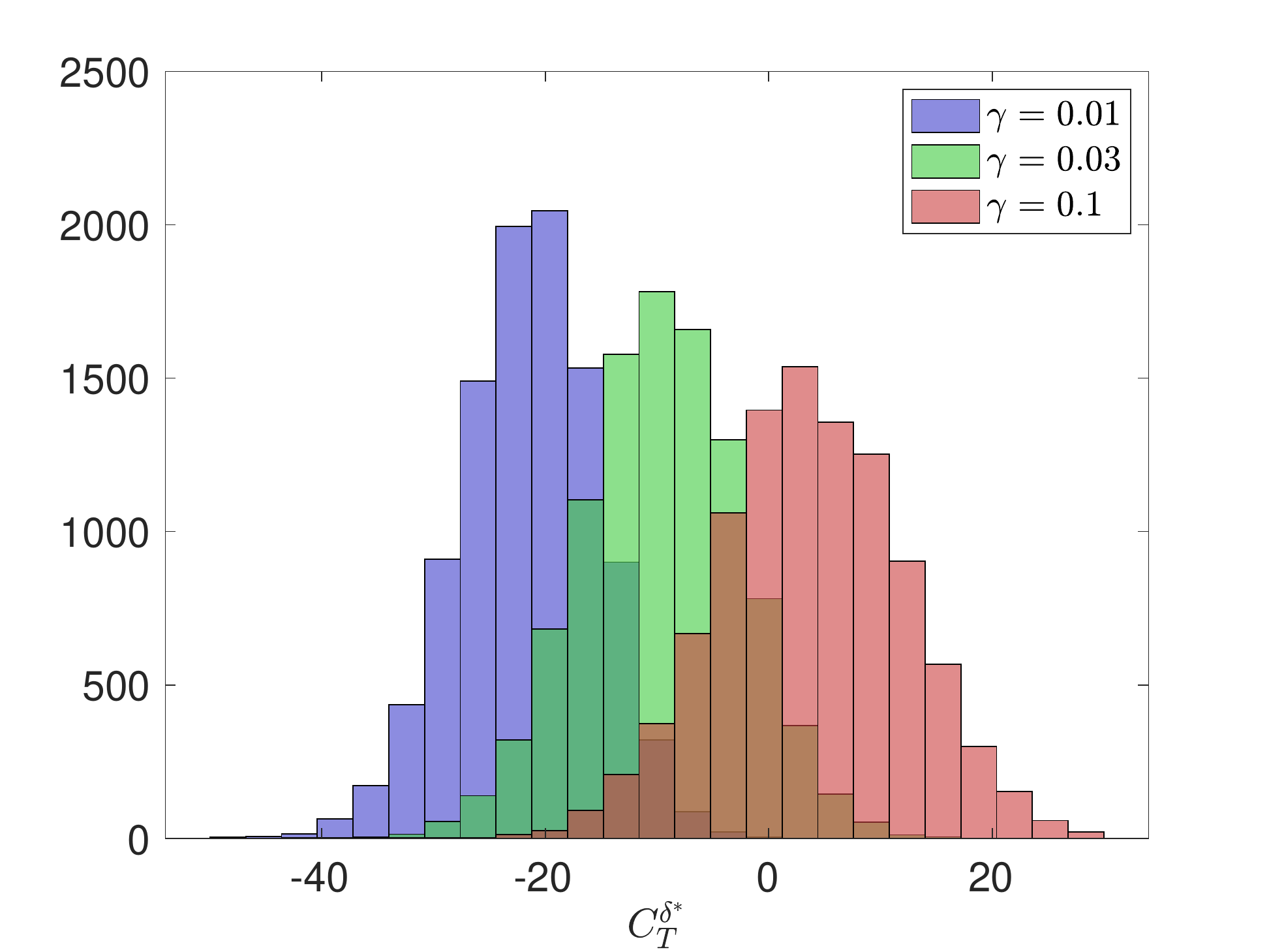}
\includegraphics[width=0.36\textwidth]{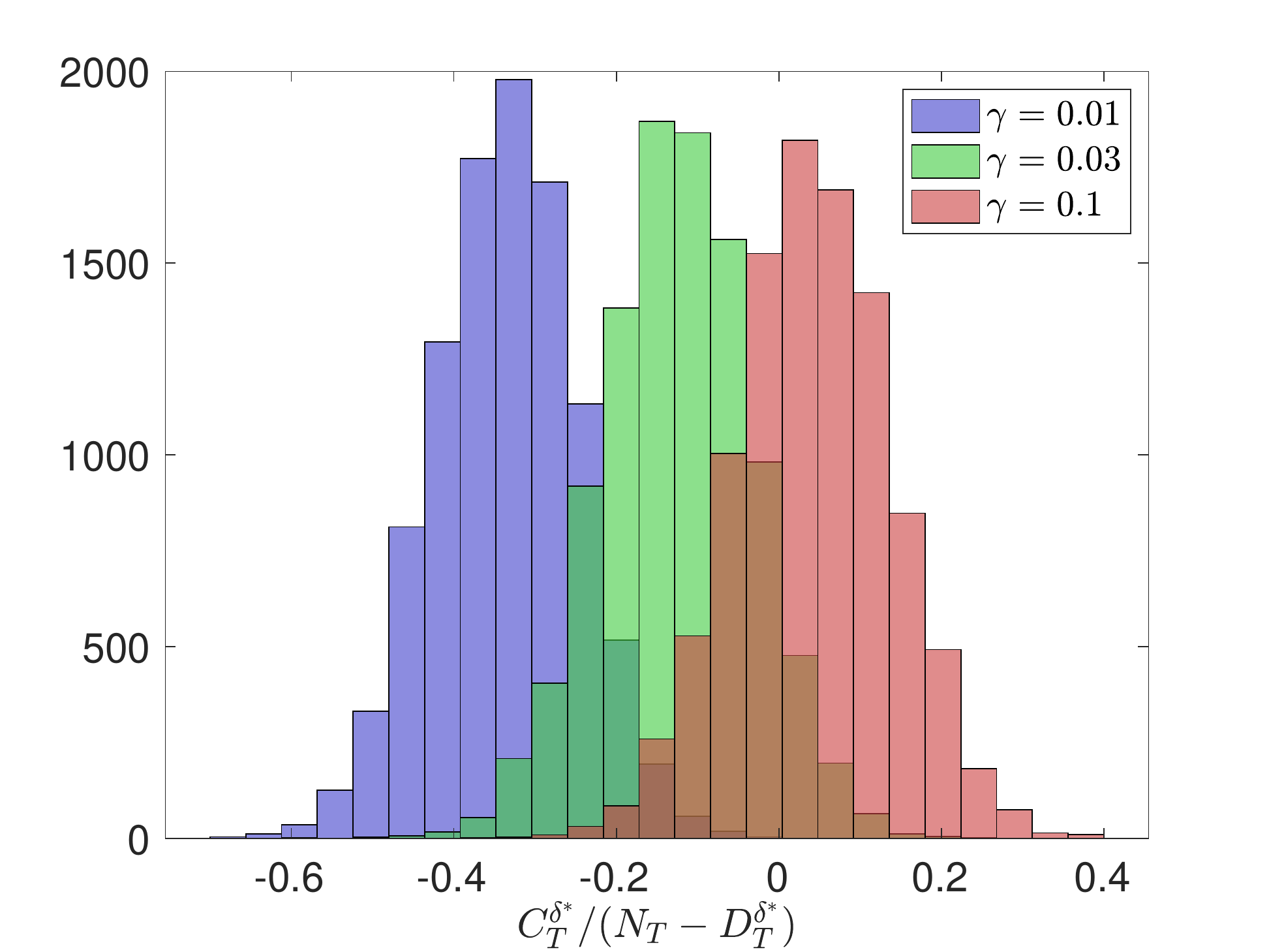}
\includegraphics[width=0.36\textwidth]{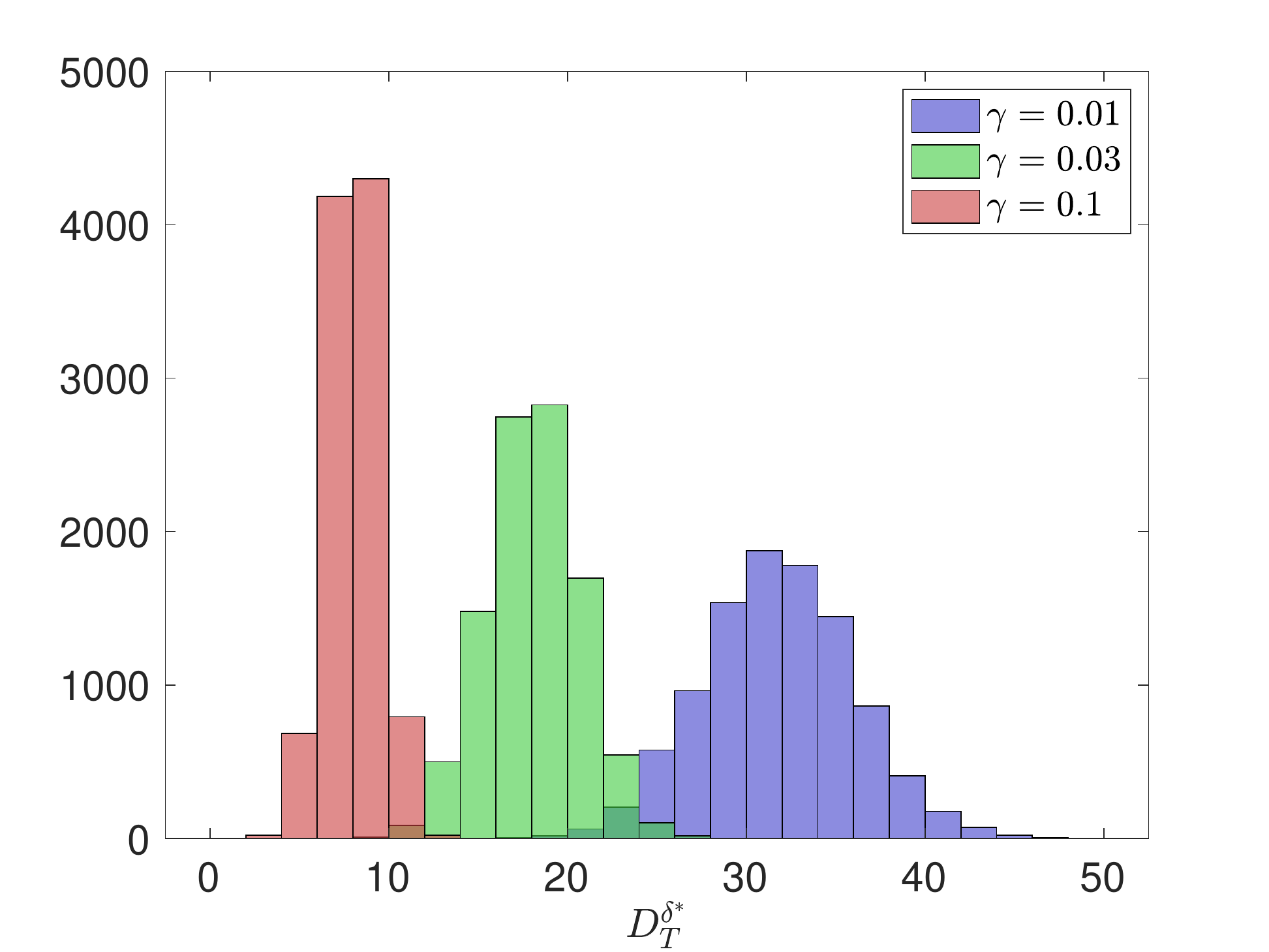}
\includegraphics[width=0.36\textwidth]{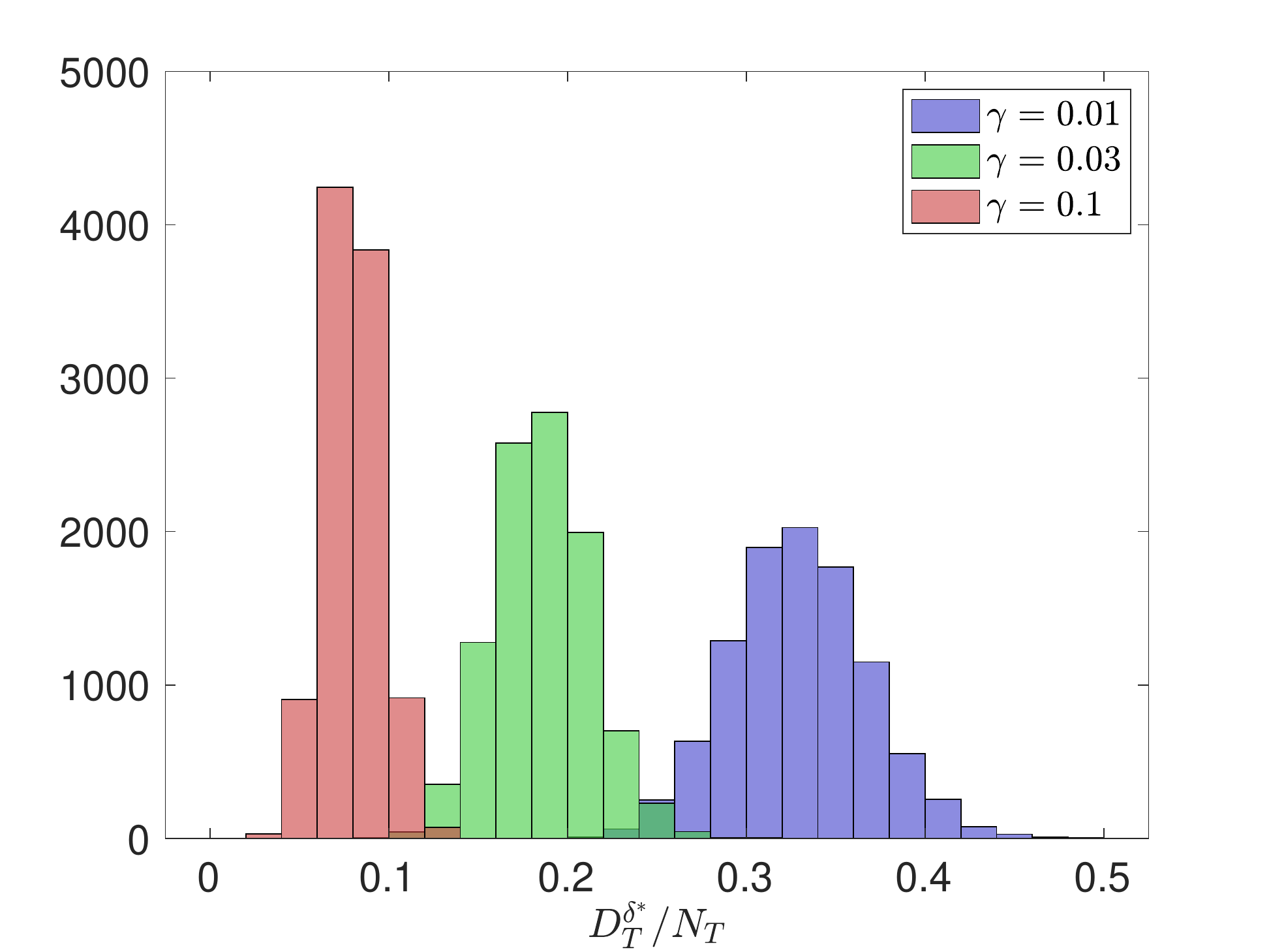}

\caption{Top left panel: Histogram of the cost $C^{\delta^*}_T$ of the strategy. Top right panel: Histogram of the extra cost per filled trade $C^{\delta^*}_T /  (N_T - D^{\delta^*}_T)$. Bottom left panel: Histogram of the number of misses $D^{\delta^*}_T$. Bottom right panel: Histogram of percentage of misses $D^{\delta^*}_T/N_T$. }
\label{Fig:Histograms_DiffGammas}
\end{center}
\end{figure}

The tradeoff between higher fill ratios and costs of walking the book are clear. An agent who seeks very high fill ratios, i.e., high values of   $(N_T-D_T^{\delta^*})/N_T$,  employs very high values of the penalty parameters in the performance criterion.  Other agents  may prefer to swap price improvements for price deteriorations in their overall trading strategy. For example, in the 10,000 simulations we discuss,  when $\gamma\approx 0.0693$ the average cost of filled trades,  $ C^{\delta^*}_T /  (N_T - D^{\delta^*}_T) $,   is zero and the average rate of missed trades,  $ D^{\delta^*}_T/N_T$ is  0.1048.

Finally, a naive strategy employed by liquidity takers is to send MLOs with no discretion to walk the LOB, see \cite{CarteaLeo2018}.  Here, the expected ratio of missed trades to number of attempts and the expected cost of the strategy for an agent who sends all MLOs with no discretion to walk the LOB is $\mathbb{E}[D^{0}_T/N_T]=0.5797 $ and  $\mathbb{E}[C^{0}_T]=-29.25$, respectively. The expected cost is negative because the strategy does not accrue costs from walking the book, but may receive price improvements.

\subsubsection{Optimal vs fixed discretion to walk the LOB}
We compare the results of a strategy with $\alpha=0$ and $\gamma>0$ with those of a fixed discretion latency-optimal strategy (i.e., $\alpha >0$ and $\gamma =0$). Recall that when $\gamma=0$ the optimal strategy is independent of the number of misses, so the agent sends  all MLOs with discretion  $\delta^* = \alpha$, see \eqref{simple control gamma zero}.

The top panels in Figure \ref{Fig:Risk Metric} show  the probability that the number of missed trades is less than 10\% of trade attempts, i.e.,  $\mathbb P [D^{\delta^*}_T < 0.1\, N_T]$,   and the expected cost of the strategy, i.e.,   $\mathbb E[C^{\delta^*} _T]$, when the agent sends orders with a fixed discretion to walk the LOB, i.e., $\gamma = 0$ and $\alpha\in[0,\,2.5]$. Similarly, the bottom panels  show   the probability that the number of missed trades is less than 10\% of trade attempts, i.e.,  $\mathbb P [D^{\delta^*}_T < 0.1\,N_T]$   and the expected cost of the strategy, i.e.,  $\mathbb E[C^{\delta^*} _T]$ for $\gamma\in[0.02,\,0.16]$ and $\alpha=0$. The orange circle in each picture shows the lowest expected terminal cost $\mathbb E[C^{\delta^*} _T]$ for which  $\mathbb P [D^{\delta^*}_T < 0.1\, N_T]\geq 0.95$.  The expected terminal cost of the  fixed discretion latency-optimal strategy with $\alpha= 1.91$ is approximately 9.52, and  the expected cost obtained with the latency-optimal strategy, with $\gamma= 0.127$, is approximately 5.93. 

Also,  the expected number of misses when $\gamma=0$ and $\alpha=1.9125$ (orange circle point in the top panels) is $\mathbb{E}\left[D^{\delta^*}_T\right]=6.39$, and when $\alpha=0$ and $\gamma=0.1220$ (orange circle point in the bottom panels) we obtain $\mathbb{E}\left[D^{\delta^*}_T\right]=5.41$.

Thus, an agent who does not expect to miss more than 10\% of the trades with high probability  may prefer a latency-optimal optimal strategy with $\gamma>0$ and $\alpha=0$ than a strategy that sends MLOs with a fixed discretion during the entire trading window.

\begin{figure}[H]
\begin{center}
\includegraphics[width=0.35\textwidth]{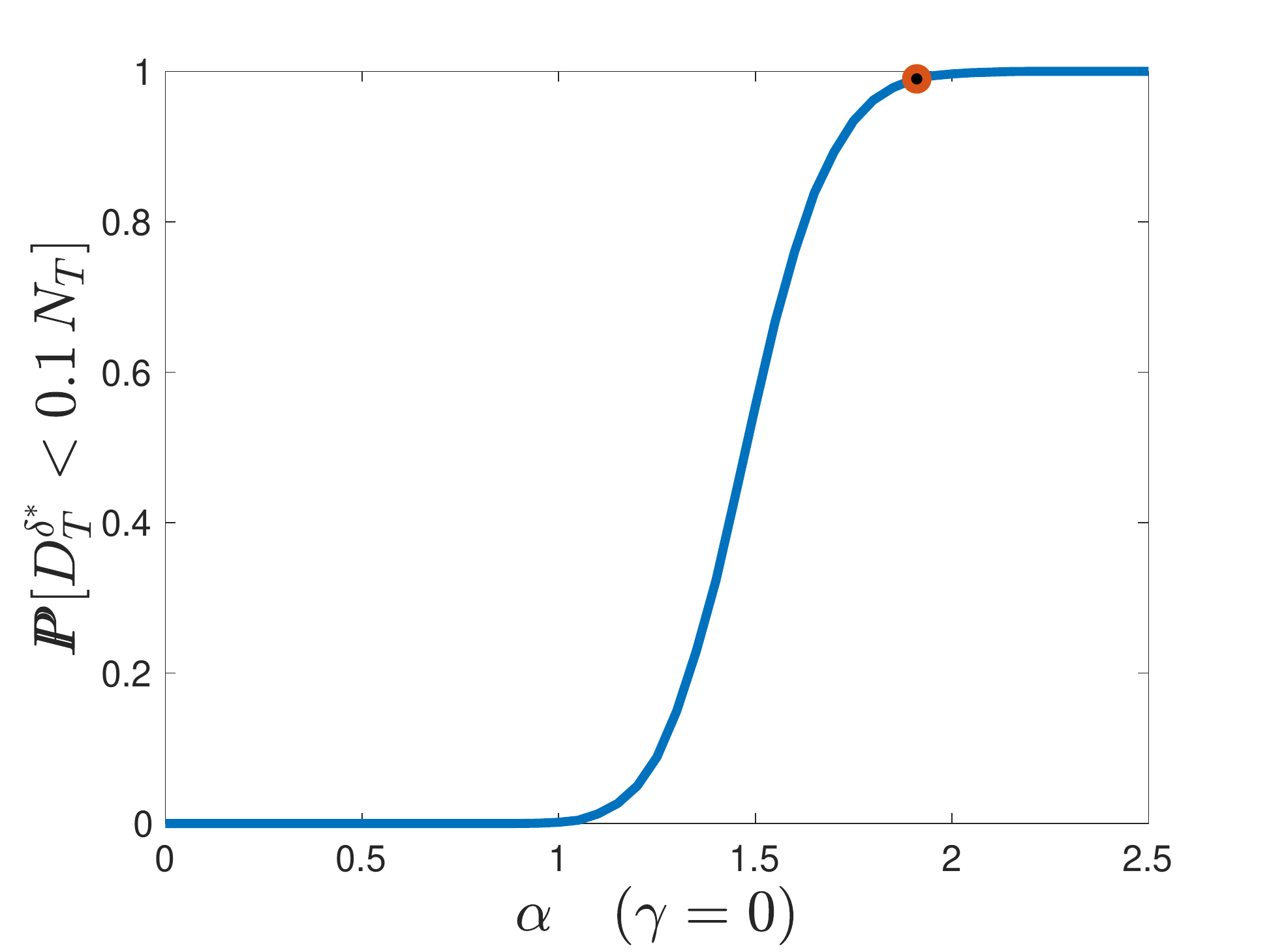}
\includegraphics[width=0.35\textwidth]{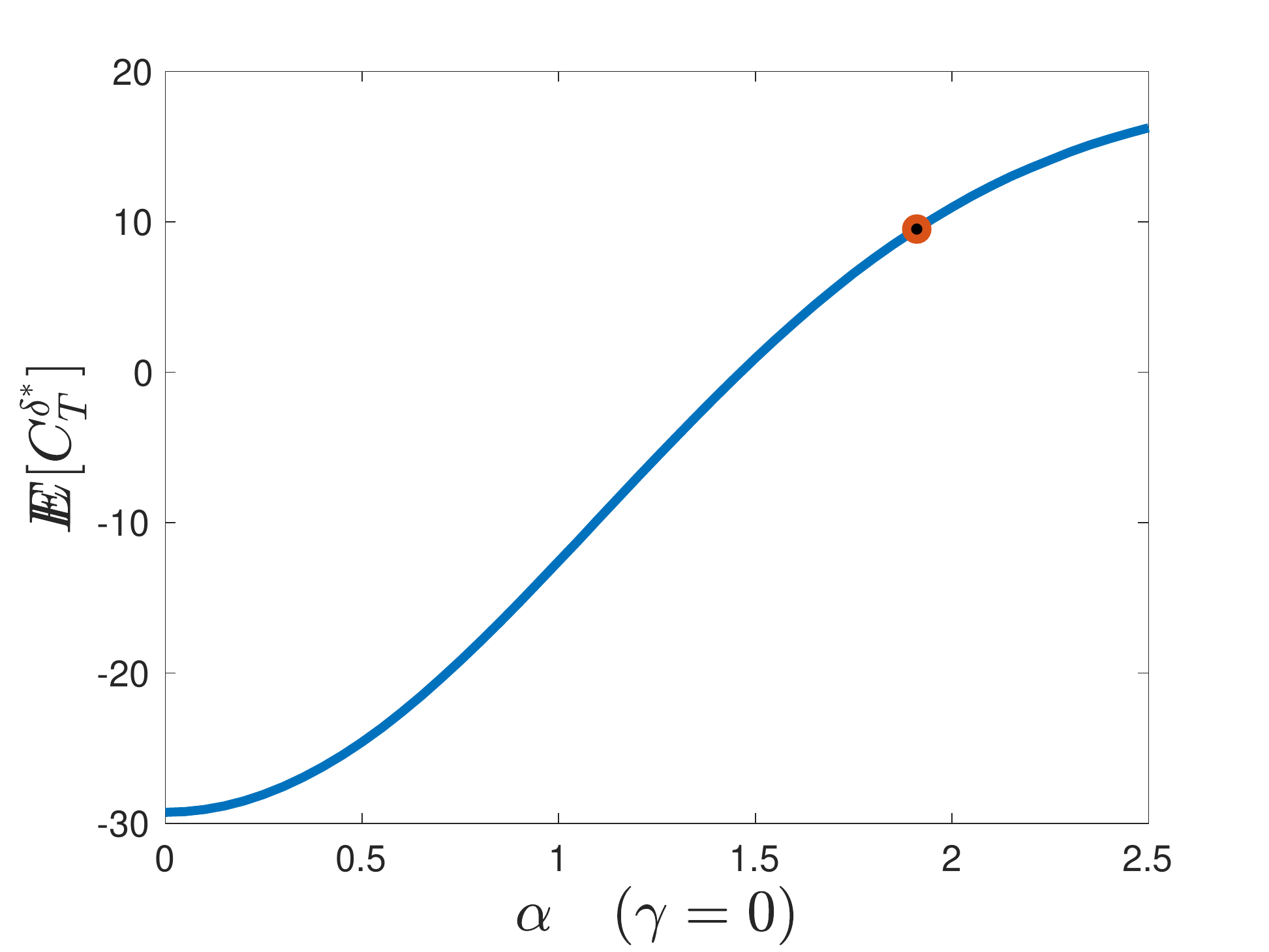}
\includegraphics[width=0.35\textwidth]{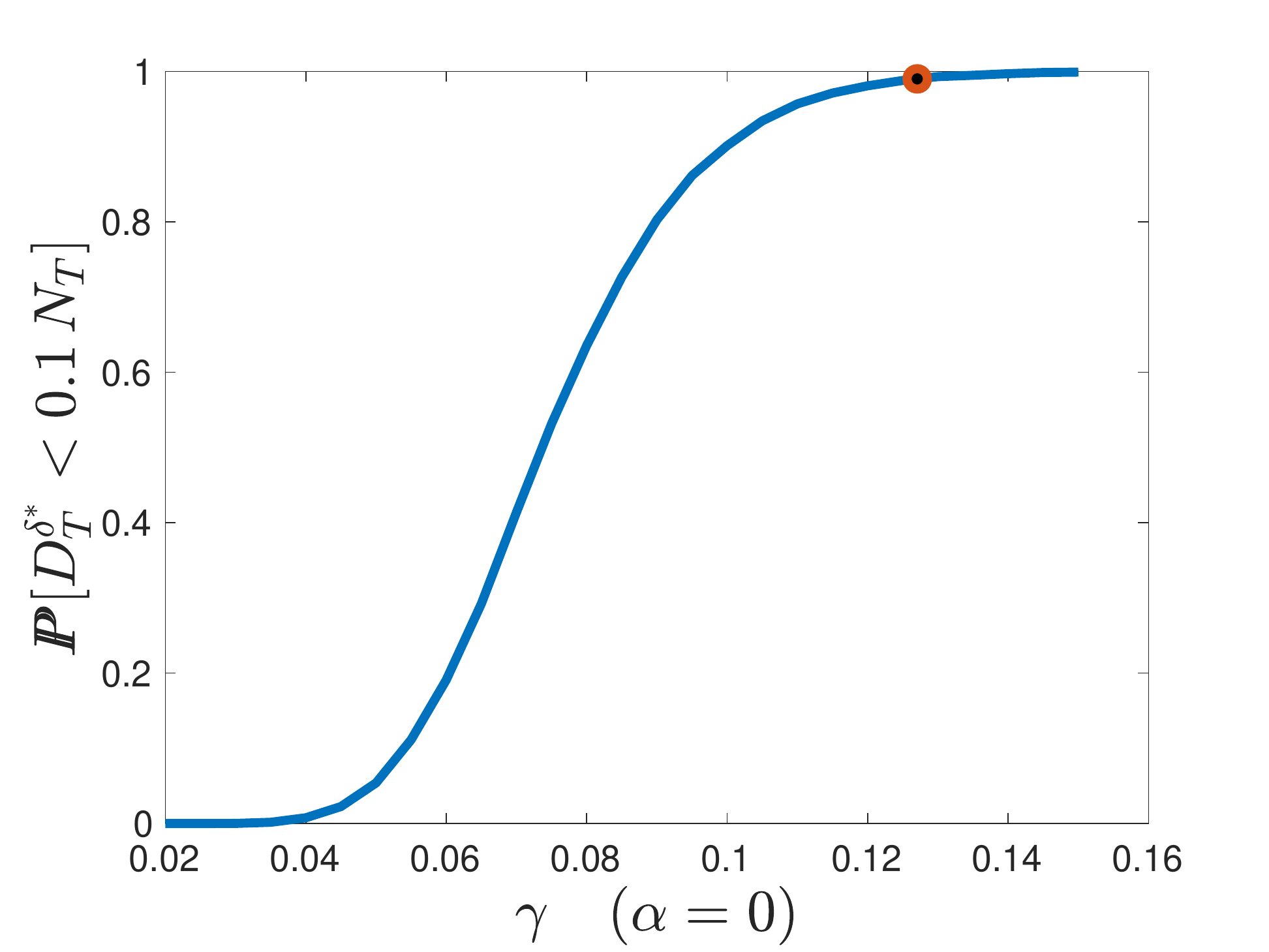}
\includegraphics[width=0.35\textwidth]{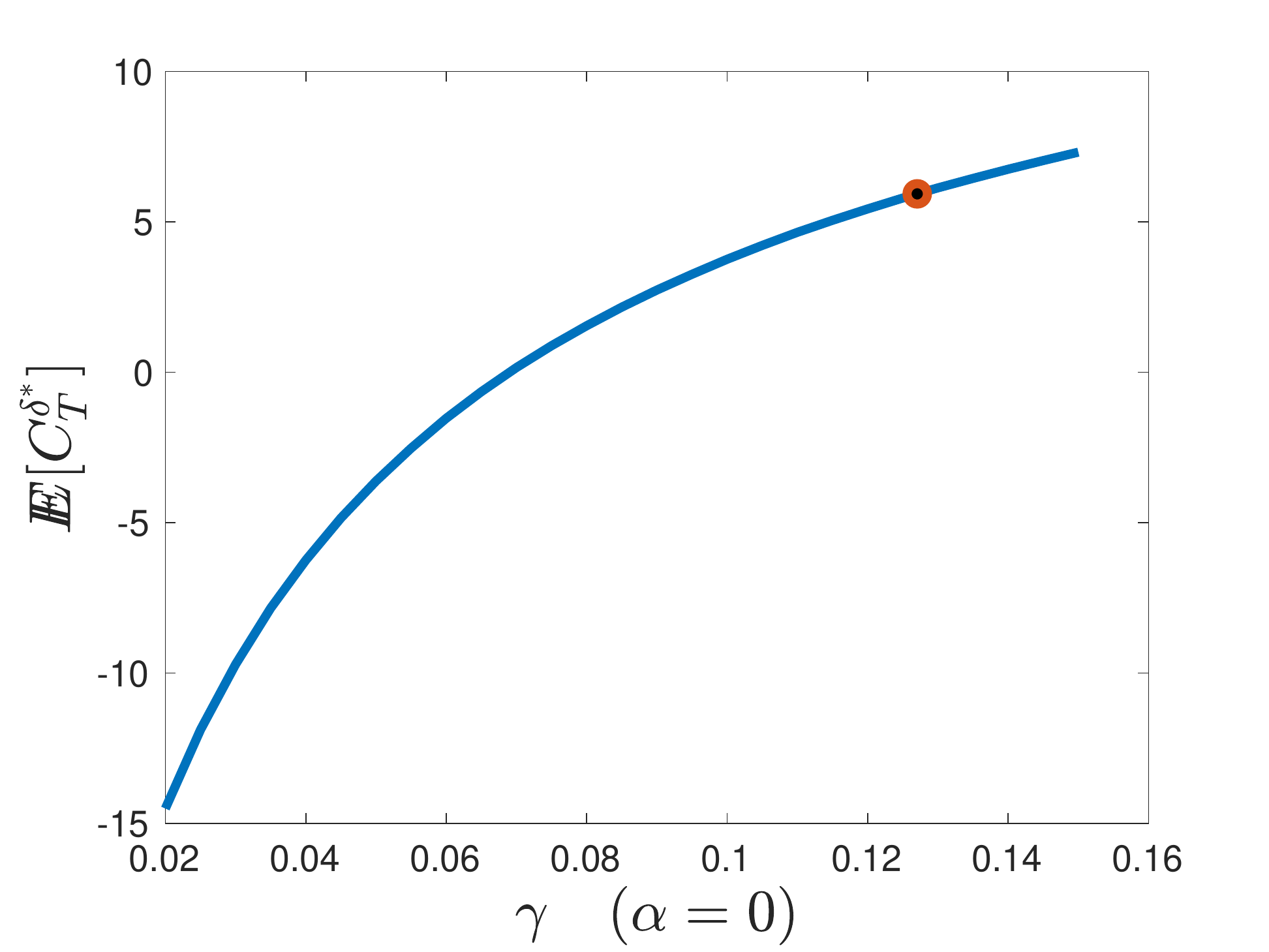}
\caption{Top panel  shows $\mathbb{P} (D^{\delta^*}_T<0.1\,N_T )$ and  $\mathbb{E} [C^{\delta^*}_T ]$ when $\gamma=0$ and for $\alpha\in[0,2.5]$, recall that $\delta^*=\alpha$ when $\gamma=0$, see \eqref{simple control gamma zero}. Similarly, bottom panel  shows $\mathbb{P} (D^{\delta^*}_T<0.1\,N_T )$ and  $\mathbb{E}[C^{\delta^*}_T]$ when $\alpha=0$  and $\gamma\in[0.02,0.16]$. In all pictures, the orange circle  marks the lowest value of $\mathbb{E}\left[C_T\right]$ when $\mathbb{P}\left(D_T<0.1\,N_T\right)\geq 0.99$. Other model parameters: $\lambda=100$, $\alpha=0$, and $Z\sim \mathbf N(0.2,1)$ for all trades.}
\label{Fig:Risk Metric}
\end{center}
\end{figure}

\subsection{Pinned arrival rates}

In this section, we assume  the arrival intensity of the agent's MLOs is
\begin{equation}\label{eqn:lambda star epsilon}
\lambda^{\star}_t=\frac{M-N_{t^-}}{T-t+\epsilon}\,,%\lambda\,.
\end{equation}
where $M>0$ is a positive integer,   $\epsilon>0$ and recall that $N_t$ denotes the number of trade attempts. The intensity $\lambda^{\star}_t$  is bounded by $\bar{\lambda}=M/\epsilon$,  which is a condition we require in the latency-optimal strategy we derived above, and if $\epsilon=0$, the intensity guarantees that $ N_T=M$, see \cite{conforti2016bridges} and \cite{hoyle2010}.

Now, use the Markov property of $\delta^*$ to write $\delta^*=h(t,D_{t^-},N_{t^-})$, where the function $h$ satisfies the PIDE
\begin{align*}
0=\partial_t h(t,D,N)&+\left(\int_{h(t,D,N)}^{\infty} \frac{M-N}{T-t+\epsilon}\,\phi_t(z)\,\diff z\right)\,\left(h(t,D+1,N+1)-h(t,D,N)\right)\\
 &+\left(\int_{-\infty}^{h(t,D,N)} \,\frac{M-N}{T-t+\epsilon}\, \phi_t(z)\,\diff z\right)\,\left(h(t,D,N+1)-h(t,D,N)\right)\,,
\end{align*}
with
\begin{align*}
h(t,D,M)=2\,\gamma\,D+\gamma+\alpha \qquad\text{and}\qquad h(T,D,N)=2\,\gamma\,D+\gamma+\alpha\,.
\end{align*}

Figure \ref{Fig:DeltaStar h function pinned}  shows the optimal discretion to walk the LOB for various values of missed trades and target number of trades $M=100$. The interpretation is similar to that of Figure \ref{Fig:DeltaStar h function}.

\begin{figure}[H]
\begin{center}
\includegraphics[width=0.32\textwidth]{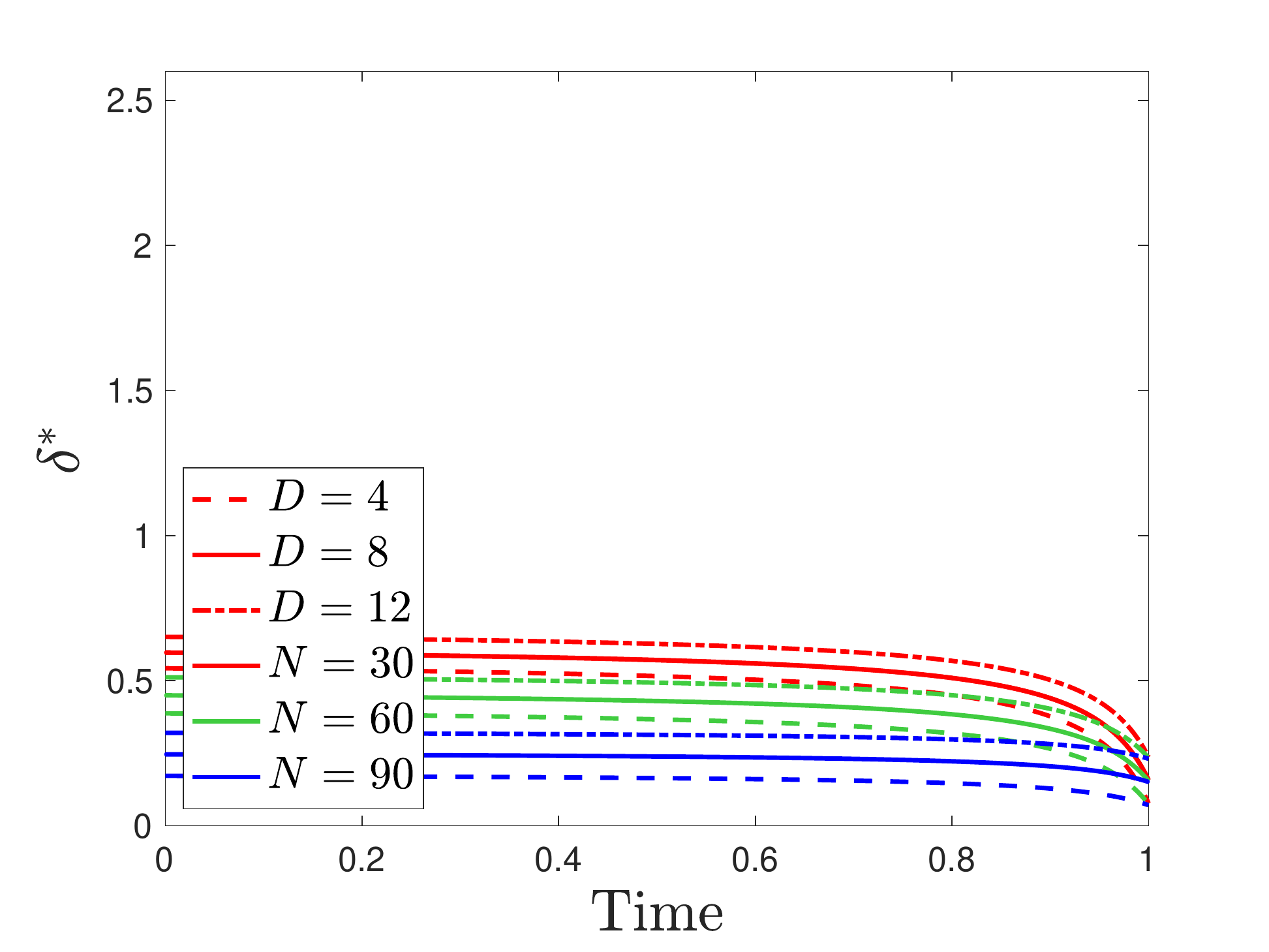}
\includegraphics[width=0.32\textwidth]{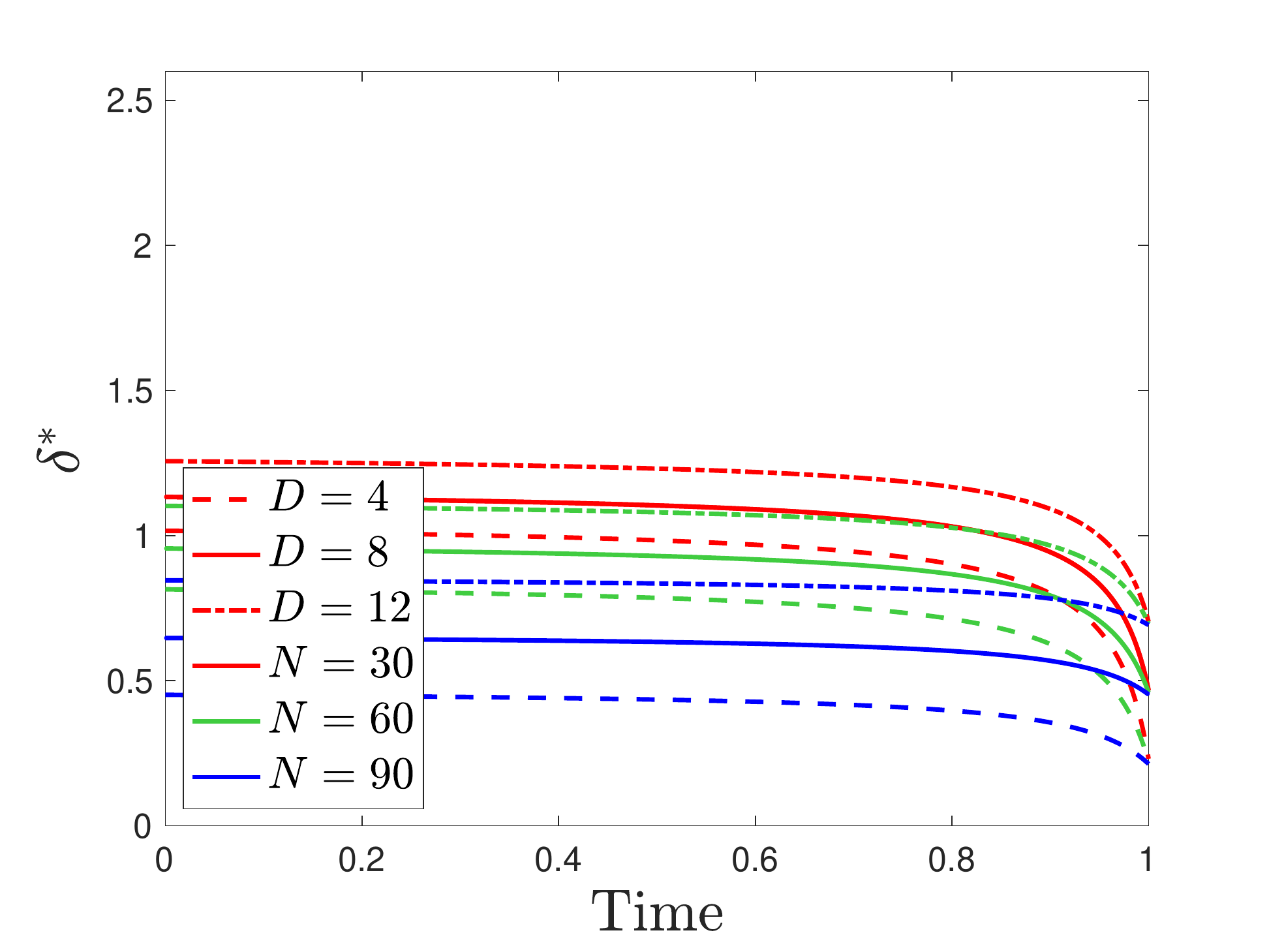}
\includegraphics[width=0.32\textwidth]{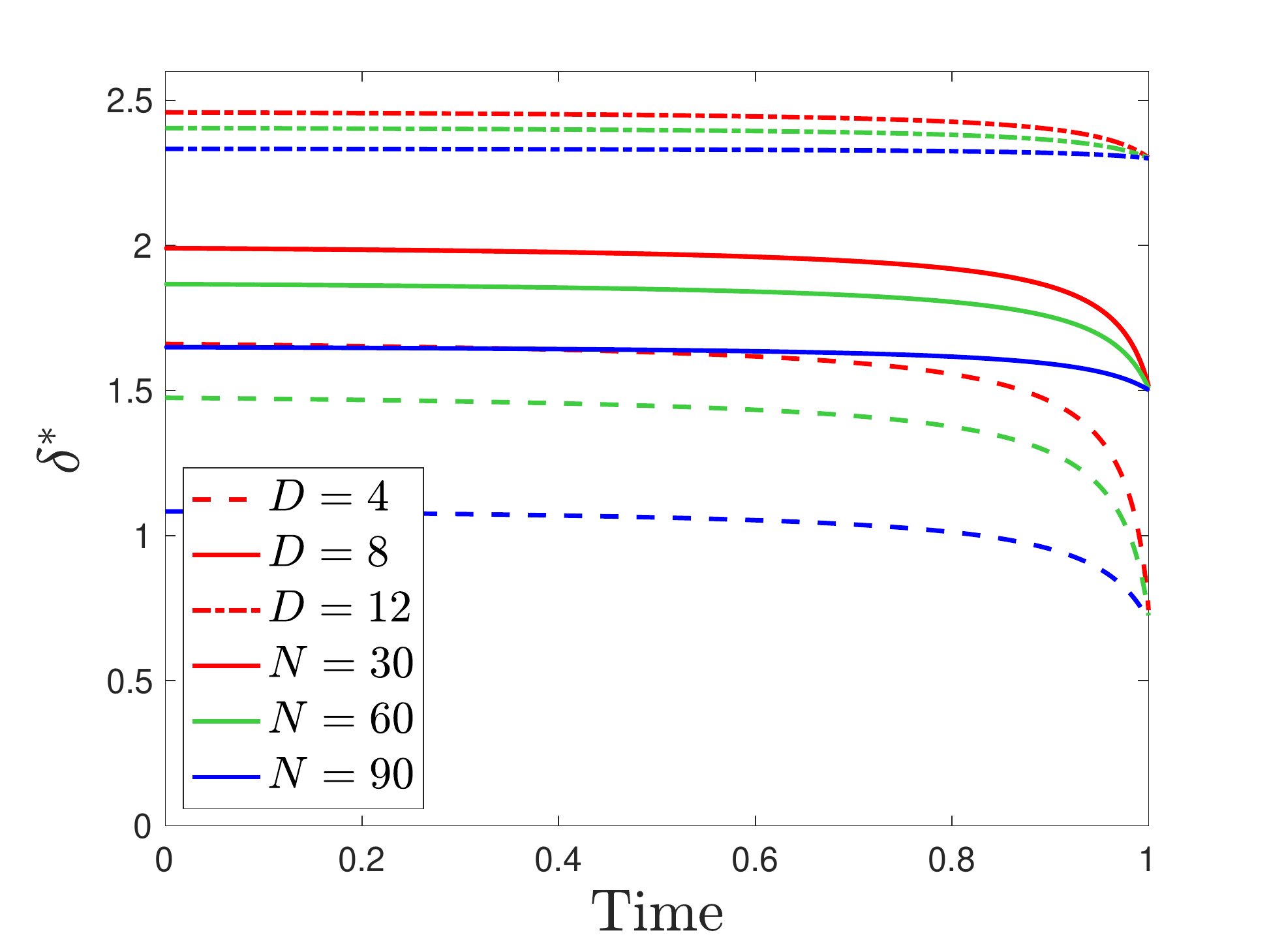}
\caption{Optimal strategy $\delta^*$ for various values of $\gamma$, number of misses, and number of attempts. From left to right, penalty parameter is $\gamma=0.01$, $\gamma=0.03$, and $\gamma=0.1$. Dotted line  $N_t=30$, solid line $N_t=60$, and  dot-dash line $N_t=90$. Blue lines $D_t^{\delta^*}=4$, green lines  $D_t^{\delta^*}=8$, red lines $D_t^{\delta^*}=12$. The remaining parameters are: $M=100$, $\alpha=0$, $\epsilon=0.1$, $Z\sim \mathbf N(0.2,1)$.}
\label{Fig:DeltaStar h function pinned}
\end{center}
\end{figure}

We perform 10,000 simulations with the same parameters as above and use the arrival rate of the  MLOs as in \eqref{eqn:lambda star epsilon} with $\epsilon = 0.1$. Figures   \ref{Fig:Sample Paths Pinned} and  \ref{Fig:Histograms_DiffGammas_Pinned} report the results, which have a similar interpretation to that of Figures  \ref{Fig:Paths_Simulation}   and \ref{Fig:Histograms_DiffGammas}, respectively.

\begin{figure}[H]
\centering
\includegraphics[width=0.75\textwidth]{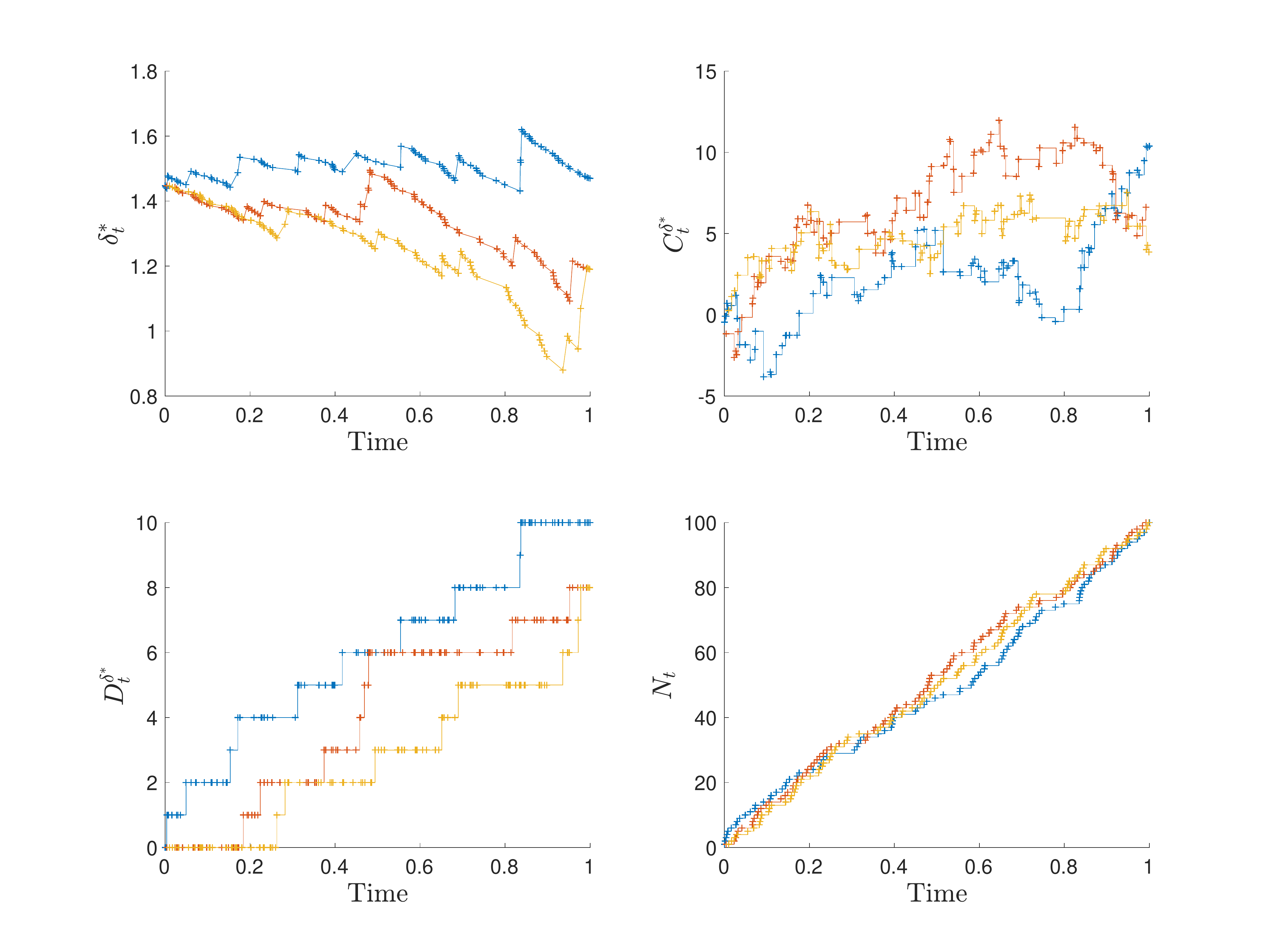}
\caption{Sample paths for the optimal discretion $\delta^*$ (top left panel), number of missed trades $D^{\delta^*}$ (lower left panel), cost of strategy $C^{\delta^*}$ (top right panel), and number of trade attempts $N$ (lower right panel) for three simulations of the MPP. Parameters: $\alpha=0$, $\gamma = 0.07$, $\epsilon=0.1$, $M =100$, $T=1$. }\label{Fig:Sample Paths Pinned}
\end{figure}

\begin{figure}[H]
\begin{center}
\includegraphics[width=0.36\textwidth]{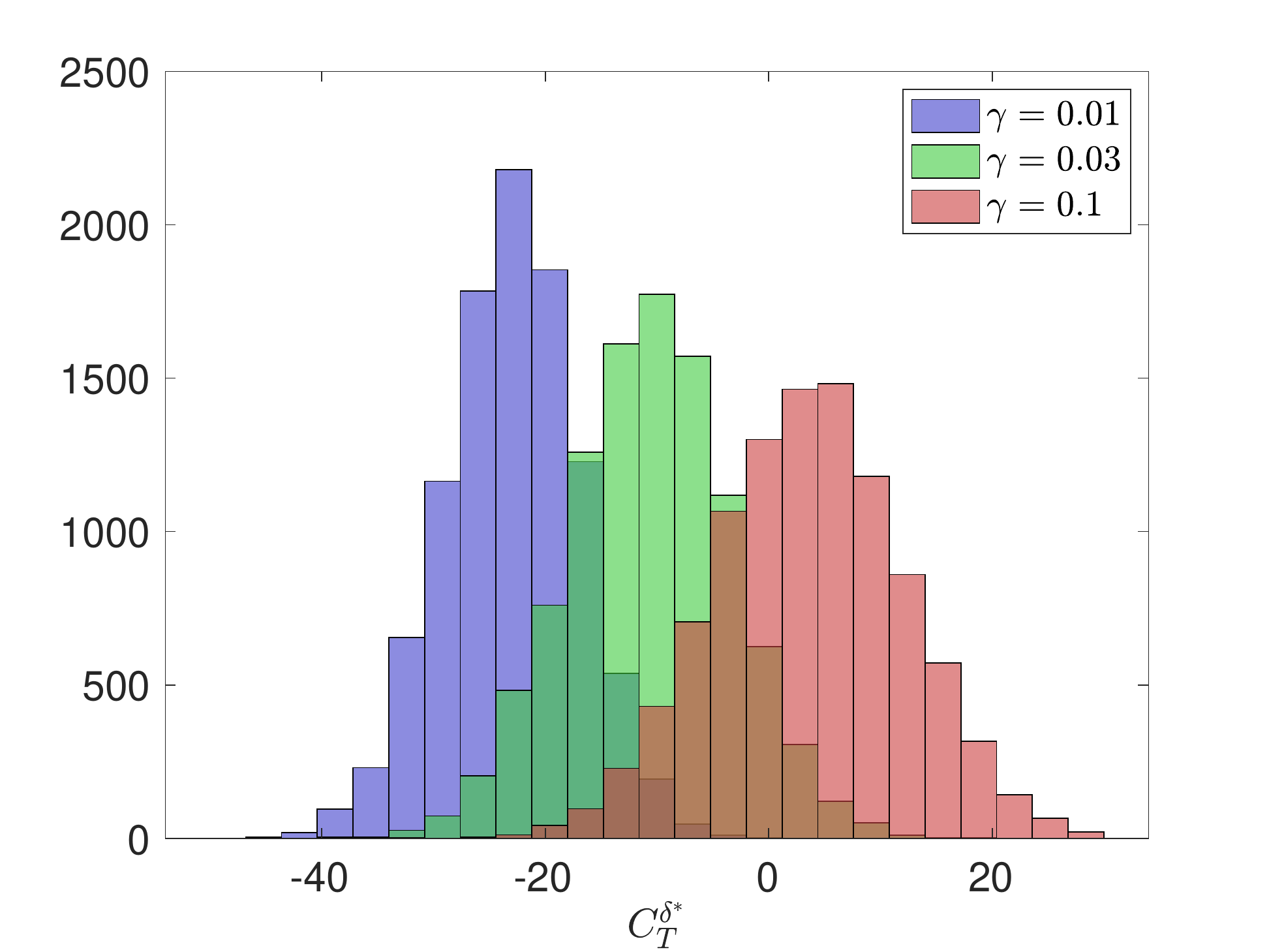}
\includegraphics[width=0.36\textwidth]{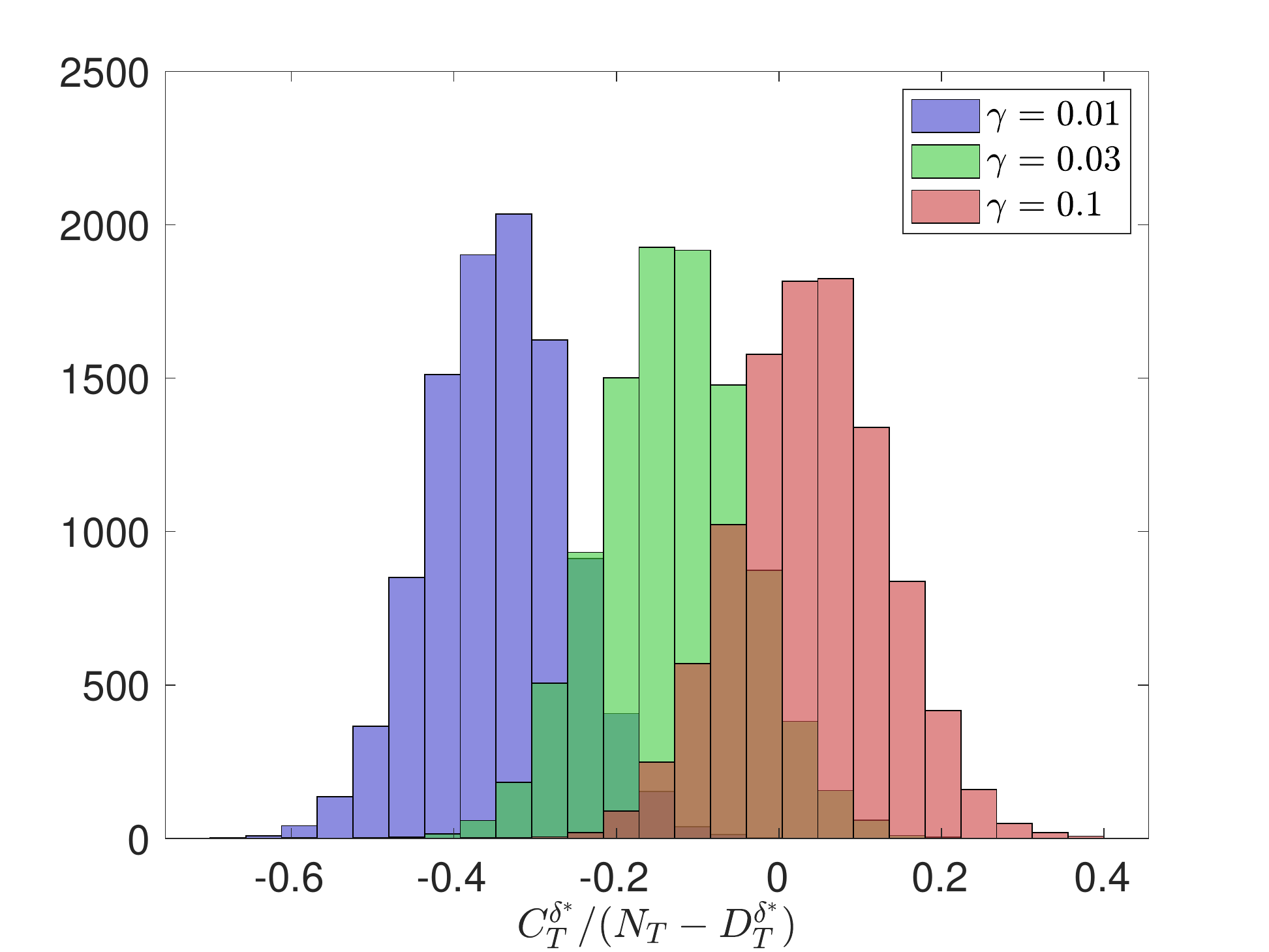}
\includegraphics[width=0.36\textwidth]{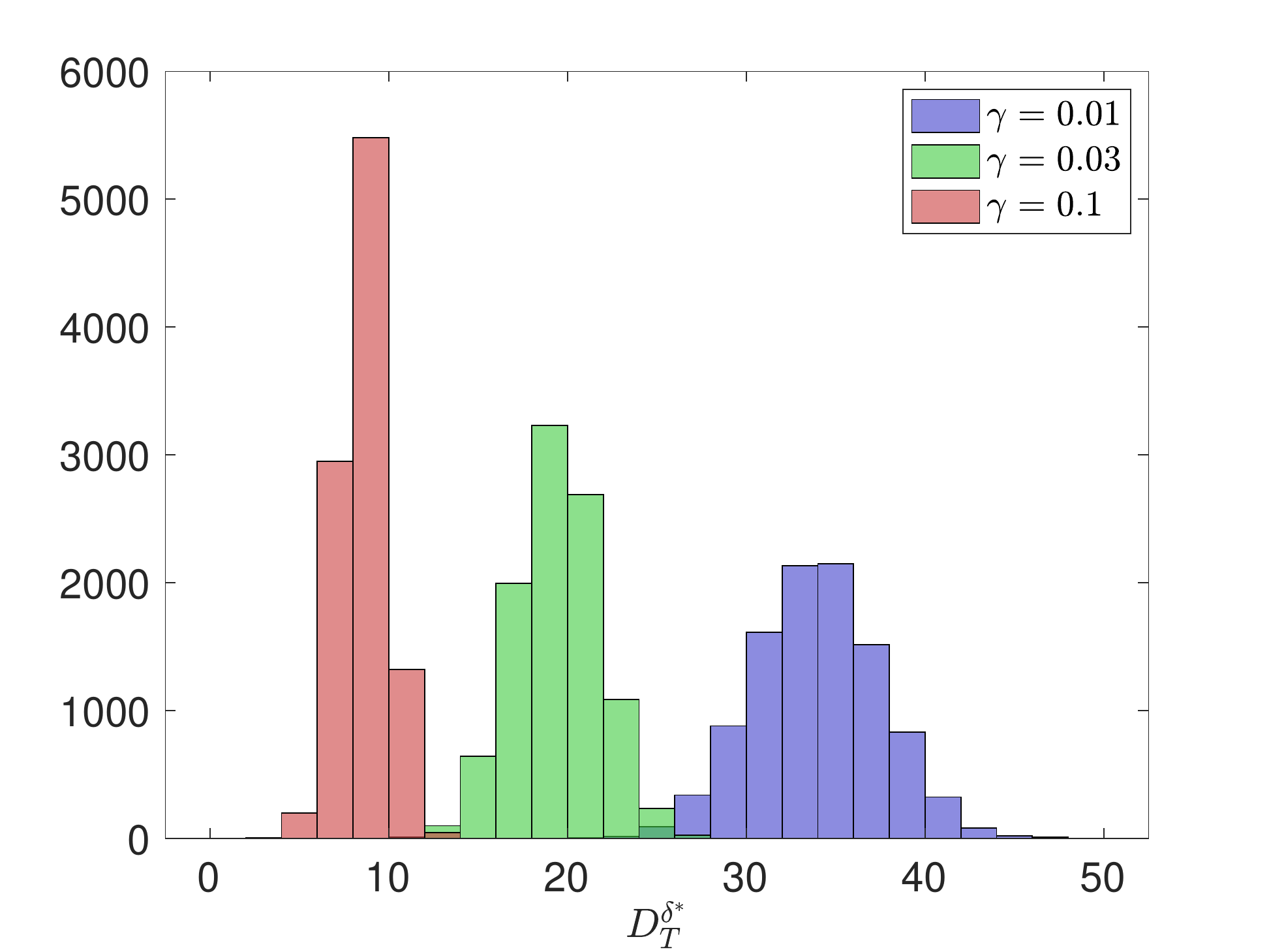}
\includegraphics[width=0.36\textwidth]{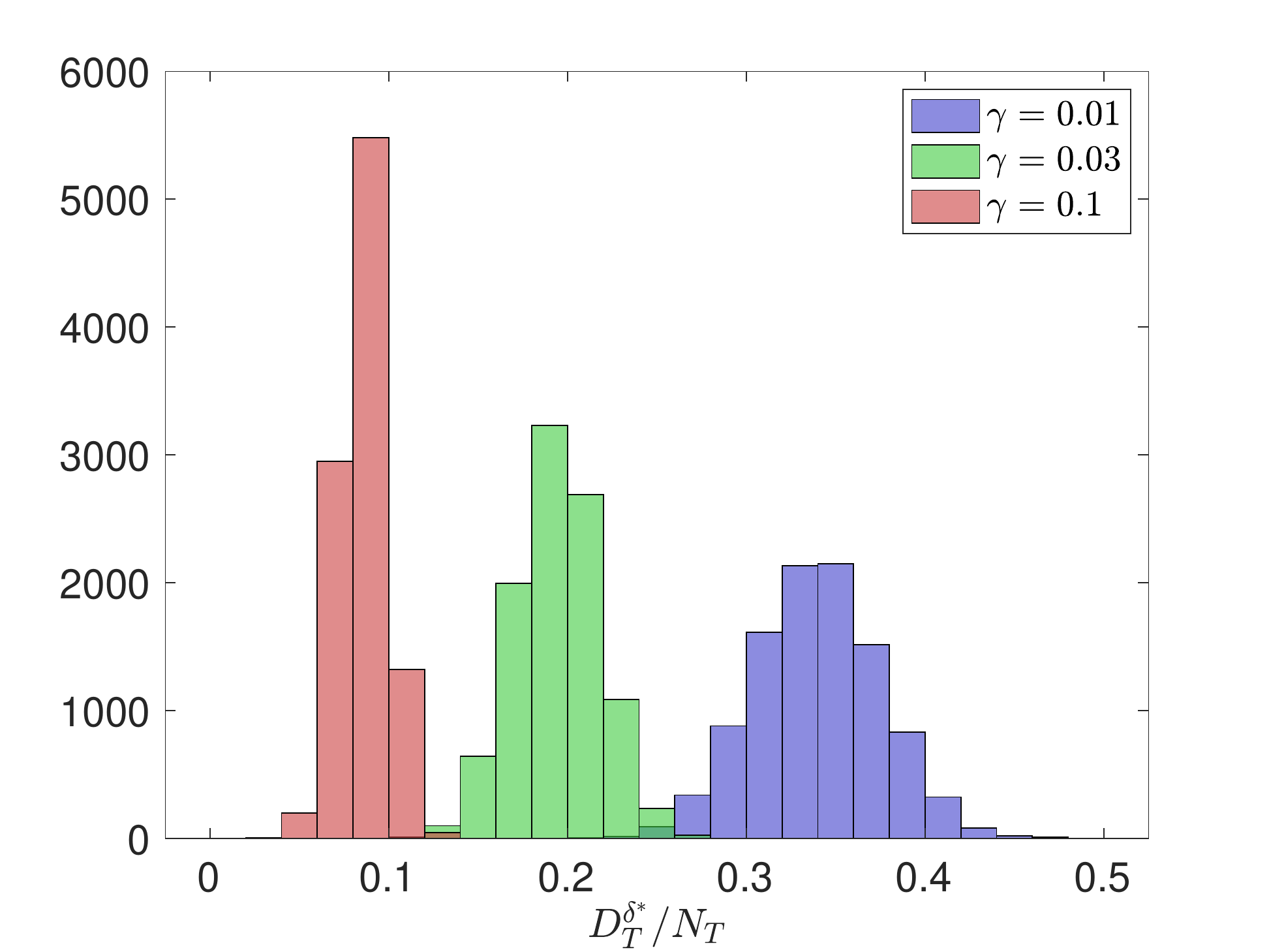}
\caption{Top left panel: Histogram of the cost $C^{\delta^*}_T$ of the strategy. Top right panel: Histogram of the extra cost per filled trade $C^{\delta^*}_T /  (N_T - D^{\delta^*}_T)$. Bottom left panel: Histogram of the number of misses $D^{\delta^*}_T$. Bottom right panel: Histogram of percentage of misses $D^{\delta^*}_T/N_T$. }
\label{Fig:Histograms_DiffGammas_Pinned}
\end{center}
\end{figure}

\section{Conclusions}\label{sec:conclusions}

With few exceptions, the literature on algorithmic trading assumes that latency in the marketplace is zero. This is not accurate, and the effects of latency on the efficacy of liquidity making and taking strategies are economically significant. In this paper we proposed a model to improve the marksmanship of the orders sent by liquidity takers when, due to latency, the limit order book is a moving target. 

We showed how a liquidity taker chooses the price limit  of marketable orders when there is latency in the marketplace. The optimal strategy balances the tradeoff between the costs of walking the book and the number of missed trades over a trading horizon. We modelled the effects of latency as a marked point process that captures the interaction between liquidity taking orders and the limit orders resting in the book. We characterized the optimal price limit of marketable orders as a solution to a FBSDE, which, to the best of our knowledge, is new and, as the extant literature does not have  uniqueness and existence results, we prove both.

The strategy developed here may be implemented as another layer  of any liquidity taking strategy (especially those that follow a stochastic trading schedule) that incorrectly assumes zero latency. Our framework can be applied in other contexts too. In its most general form, we solve a problem in which the agent decides how much she is willing to pay to absorb a stochastic shock to achieve an objective or complete a task. For example, market makers in  foreign exchange markets  with `last look' can employ the framework developed in this paper. The last look feature allows liquidity makers to reject trades, so they are not picked off by faster liquidity taking traders, see \cite{Oomen17} and \cite{Cartea15LL}. Specifically, with our framework,  a foreign exchange market maker can obtain the optimal tolerance that maximizes the number of incoming marketable orders she is willing to fill  while minimizing losses to the fast traders who snipe her stale quotes in the LOB.

%\newpage
%\section*{References}
%\bibliographystyle{elsarticle-num}
%\bibliographystyle{elsarticle-harv.bst}
\bibliographystyle{apalike}

\bibliography{References}

\newpage
\let\normalsize\small
\appendix
\small

\section{Proof of Lemma \ref{Gat:CostLPandQP} }\label{proof of lemma}
We prove the lemma  in three parts. First we work out the {G\^ateaux derivative of the cost function}. We use \eqref{eqn:jc} to write
\begin{align*}
\frac{1}{\epsilon}\,\left\{\mathbb{E}\left[C^{\delta+\epsilon\,w}_T\right]-\mathbb{E}\left[C^{\delta}_T\right]\right\}&=\frac{1}{\epsilon}\,\left\{\mathbb{E}\left[\int_0^T \int_{\mathbb{R}} z\,\left(\Ffct(\delta_t+\epsilon\,w_t-z)-\Ffct(\delta_t-z)\right)\,\phi_t(\diff z)\,\diff A_t\right]\right\}\\
&=\frac{1}{\epsilon}\,\left\{\mathbb{E}\left[\int_0^T \int_{\delta_t}^{\delta_t+\epsilon\,w_t} z\,\phi_t(\diff z)\,\diff A_t\right]\right\}\\
&=\mathbb{E}\left[\int_0^T \frac{1}{\epsilon}\left\{ \int_{\delta_t}^{\delta_t+\epsilon\,w_t} z\,\phi_t(\diff z)\right\}\,\diff A_t\right]\,.
\end{align*}
Then, by the dominated convergence theorem and the fundamental theorem of calculus, we have
\begin{align*}
\langle \mathcal{D}\,J^{\text{C}}(\delta),w\rangle&=\lim_{\epsilon\to 0}\frac{J^{\text{C}}(\delta+\epsilon \,w)-J^{\text{C}}(\delta)}{\epsilon}\\
&=\lim_{\epsilon\to 0}\frac{1}{\epsilon}\,\left\{\mathbb{E}\left[C^{\delta+\epsilon\,w}_T\right]-\mathbb{E}\left[C^{\delta}_T\right]\right\}=\mathbb{E}\left[\int_0^T \delta_t\,w_t\,\phi_t(\delta_t)\,\diff A_t\right]\,.
\end{align*}

\noindent Next we work out the {G\^ateaux derivative of the linear penalty}. Note that
\begin{align*}
\frac{1}{\epsilon}\,\left\{\mathbb{E}\left[D^{\delta+\epsilon\,w}_T\right]-\mathbb{E}\left[D^{\delta}_T\right]\right\}&=  \frac{1}{\epsilon}\,\left\{\mathbb{E}\left[\int_0^T\int_{\mathbb{R}}\left(G(\delta_t+\epsilon\,w_t-z)-G(\delta_t-z)\right)\,\rcompr(\diff z,\,\diff t)\right]\right\}\\
&=\mathbb{E}\left[\int_0^T \frac{1}{\epsilon}\,\left\{ \int^{\delta_t}_{\delta_t+\epsilon\,w_t} \phi_t(\diff z)\right\}\,\diff A_t\right]\,,
\end{align*}
therefore, we have
\begin{align*}
\lim_{\epsilon\to 0}\frac{1}{\epsilon}\,\bigg\{\mathbb{E}\left[D^{\delta+\epsilon\,w}_T\right]-\mathbb{E}\left[D^{\delta}_T\right]\bigg\}&=-\mathbb{E}\left[\int_0^T w_t\,\phi_t(\delta_t)\,\diff A_t\right]\,.
\end{align*}

\noindent Finally, we work out the {G\^ateaux derivative of the quadratic penalty}. We write 
\begin{align*}
\frac{1}{\epsilon}\,\Bigg\{\mathbb{E}\left[\left(D^{\delta+\epsilon\,w}_T\right)^2\right]-\mathbb{E}\left[\left(D^{\delta}_T\right)^2\right]\Bigg\}&= \frac{2}{\epsilon}\,\Bigg\{\mathbb{E}\left[\int_0^T\int_{\mathbb{R}}D^{\delta+\epsilon\,w}_{t^-}\,G(\delta_t+\epsilon\,w_t-z)\,\rcompr(\diff z,\,\diff t)\right]  \\
&\quad\quad\quad-\mathbb{E}\left[\int_0^T\int_{\mathbb{R}}D^{\delta}_{t^-}\,G(\delta_t-z)\,\rcompr(\diff z,\,\diff t)\right]\Bigg\}\\
&\quad+ \frac{1}{\epsilon}\,\Bigg\{\mathbb{E}\left[\int_0^T\int_{\mathbb{R}}G(\delta_t+\epsilon\,w_t-z)-G(\delta_t-z)\,\rcompr(\diff z,\,\diff t)\right]\Bigg\}\,.
\end{align*}
Subtract and add
\begin{equation*}
\frac{2}{\epsilon}\,\bigg\{\mathbb{E}\left[\int_0^T\int_{\mathbb{R}}D^{\delta+\epsilon\,w}_{t^-}\,G(\delta_t-z)\,\rcompr(\diff z,\,\diff t)\right]\bigg\}
\end{equation*}
to the right-hand side of the equation above and write
\begin{align*}
&\frac{1}{\epsilon}\,\Bigg\{\mathbb{E}\left[\left(D^{\delta+\epsilon\,w}_T\right)^2\right]-\mathbb{E}\left[\left(D^{\delta}_T\right)^2\right]\Bigg\}
\\
&\quad= \frac{2}{\epsilon}\,\Bigg\{\mathbb{E}\left[\int_0^T\int_{\mathbb{R}} \left(D^{\delta+\epsilon\,w}_{t^-}\,G(\delta_t+\epsilon\,w_t-z)-D^{\delta+\epsilon\,w}_{t^-}\,G(\delta_t-z)\right)\,\rcompr(\diff z,\,\diff t)\right]  \\
&\quad\quad\quad\quad+\mathbb{E}\left[\int_0^T\int_{\mathbb{R}} \left(D^{\delta+\epsilon\,w}_{t^-}\,G(\delta_t-z)-D^{\delta}_{t^-}\,G(\delta_t-z)\right)\,\rcompr(\diff z,\,\diff t)\right]\Bigg\}\\
&\quad\quad+ \frac{1}{\epsilon}\,\Bigg\{\mathbb{E}\left[\int_0^T\int_{\mathbb{R}} \left(G(\delta_t+\epsilon\,w_t-z)-G(\delta_t-z)\right)\,\rcompr(\diff z,\,\diff t)\right]\Bigg\}\\
&\quad= 2\,\Bigg\{{\mathbb{E}\left[\int_0^T\int_{\mathbb{R}}\left(D^{\delta+\epsilon\,w}_{t^-}\,\frac{G(\delta_t+\epsilon\,w_t-z)-G(\delta_t-z)}{\epsilon}\right)\,\rcompr(\diff z,\,\diff t)\right]}\tag{QP1}  \\
&\quad\quad\quad\quad+{\mathbb{E}\left[\int_0^T\int_{\mathbb{R}}\left(\frac{D^{\delta+\epsilon\,w}_{t^-}-D^{\delta}_{t^-}}{\epsilon}\,G(\delta_t-z)\right)\,\rcompr(\diff z,\,\diff t)\right]}\tag{QP2}\Bigg\}\\
&\quad\quad+ {\mathbb{E}\left[\int_0^T\int_{\mathbb{R}}\left(\frac{G(\delta_t+\epsilon\,w_t-z)-G(\delta_t-z)}{\epsilon}\right)\,\rcompr(\diff z,\,\diff t)\right]}\tag{QP3}\,.
\end{align*}

Next, take the limit of  QP1, QP2, and QP3 as $\epsilon$ approaches zero.  The limit of QP1 is given by 
\begin{align*}
\lim_{\epsilon\to 0}\text{QP1}&=\lim_{\epsilon\to 0}\mathbb{E}\left[\int_0^T\int_{\mathbb{R}}D^{\delta+\epsilon\,w}_{t^-}\,\frac{G(\delta_t+\epsilon\,w_t-z)-G(\delta_t-z)}{\epsilon}\,\rcompr(\diff z,\,\diff t)\right]\\
&=\lim_{\epsilon\to 0}\mathbb{E}\left[\int_0^T\,D^{\delta+\epsilon\,w}_{t^-}\,\frac{1}{\epsilon}\,\left\{\int_{\delta_t+\epsilon\,w_t}^{\delta_t} \phi_t(\diff z)\,\right\}\diff A_t\right]\\
&=-\mathbb{E}\left[\int_0^T D^{\delta}_{t^-}\,w_t\,\phi_t(\delta_t)\,\diff A_t\right]\,.
\end{align*}
The last equality follows from the dominated convergence theorem and because $\lim_{\epsilon\to 0}D^{\delta+\epsilon\,w}_{t^-}=D^{\delta}_{t^-}$ almost surely.

\noindent The limit of QP2 is given by 
\begin{align*}
&\lim_{\epsilon\to 0}\text{QP2}
\\
&\quad=\lim_{\epsilon\to 0}\mathbb{E}\left[\int_0^T\int_{\mathbb{R}}\frac{D^{\delta+\epsilon\,w}_{t^-}-D^{\delta}_{t^-}}{\epsilon}\,G(\delta_t-z)\,\rcompr(\diff z,\,\diff t)\right]\\
&\quad=\lim_{\epsilon\to 0}\mathbb{E}\left[\int_0^T\int_{\mathbb{R}}G(\delta_t-z)\,\left(\int_0^{t^-}\int_{\mathbb{R}}\frac{G(\delta_s+\epsilon\,w_s-z')-G(\delta_s-z')}{\epsilon}\,\rmes(\diff z',\,\diff s)\right)\,\rcompr(\diff z,\,\diff t)\right]\\
&\quad=\lim_{\epsilon\to 0}\mathbb{E}\left[\int_0^T\int_{\mathbb{R}}\frac{G(\delta_t+\epsilon\,w_t-z)-G(\delta_t-z)}{\epsilon}\,\left(\int_{t}^{T}\int_{\mathbb{R}}G(\delta_s-z)\,\rcompr(\diff z',\,\diff s)\right)\,\rmes(\diff z,\,\diff t)\right]\\
%&=\lim_{\epsilon\to 0}\mathbb{E}\left[\int_0^T\int_{\mathbb{R}}\frac{G(\delta_t+\epsilon\,w_t-z)-G(\delta_t-z)}{\epsilon}\,\mathbb{E}_{t}\left[\int_{t}^{T}\int_{\mathbb{R}}G(\delta_s-z)\,\rcompr(\diff z',\,\diff s)\right]\,\rmes(\diff z,\,\diff t)\right]\\
&\quad=\lim_{\epsilon\to 0}\mathbb{E}\left[\int_0^T\int_{\mathbb{R}}\frac{G(\delta_t+\epsilon\,w_t-z)-G(\delta_t-z)}{\epsilon}\,\mathbb{E}_{t^-}\left[\int_{t}^{T}\int_{\mathbb{R}}G(\delta_s-z)\,\rcompr(\diff z',\,\diff s)\right]\,\rmes(\diff z,\,\diff t)\right]\\
&\quad=\lim_{\epsilon\to 0}\mathbb{E}\left[\int_0^T\int_{\mathbb{R}}\frac{G(\delta_t+\epsilon\,w_t-z)-G(\delta_t-z)}{\epsilon}\,\mathbb{E}_{t^-}\left[\int_{t}^{T}\int_{\mathbb{R}}G(\delta_s-z)\,\rcompr(\diff z',\,\diff s)\right]\,\rcompr(\diff z,\,\diff t)\right]\\
&\quad=\lim_{\epsilon\to 0}\mathbb{E}\left[\int_0^T\,\frac{1}{\epsilon}\,\int_{\delta_t+\epsilon\,w_t}^{\delta_t}\,\phi_t(\diff z)\,\mathbb{E}_{t^-}\left[\int_{t}^{T}\int_{\mathbb{R}}G(\delta_s-z)\,\rcompr(\diff z',\,\diff s)\right]\,\diff A_t\right]\\
&\quad=-\mathbb{E}\left[\int_0^T\,w_t\,\phi_t(\delta_t)\,\mathbb{E}_{t^-}\left[\int_{t}^{T}\int_{\mathbb{R}}G(\delta_s-z)\,\rcompr(\diff z',\,\diff s)\right]\,\diff A_t\right]\,.
\end{align*}

Finally, the limit of QP3 is given by 
\begin{align*}
\lim_{\epsilon\to 0}\text{QP3}&=\lim_{\epsilon\to 0}\mathbb{E}\left[\int_0^T\int_{\mathbb{R}}\frac{G(\delta_t+\epsilon\,w_t-z)-G(\delta_t-z)}{\epsilon}\,\rcompr(\diff z,\,\diff t)\right]\\
&=\lim_{\epsilon\to 0}\mathbb{E}\left[\int_0^T\,\frac{1}{\epsilon}\int_{\delta_t+\epsilon\,w_t}^{\delta_t}\phi_t(\diff z)\,\diff A_t\right]\\
&=-\mathbb{E}\left[\int_0^T w_t\,\phi_t(\delta_t)\,\diff A_t\right]\,,
\end{align*}
which concludes the proof.

\section{Bounded G\^ateaux derivative}\label{BddGD}
Let $S=\max\{1,\,\{\phi_t(z)\}_{0\leq t \leq T \,, z\in \mathbb{R}}\}<\infty$ and   $\delta,\,w\,\in\,\mcA$.  Let  $\eta_t=\max\{\delta_t,\,w_t,\,N_{t^-}\}$,  which is predictable because each process is predictable,  and note that $\mathbb{E}\left[\sup_{0\leq t\leq T}(\eta_t)^2\right]\leq 4\,\mathbb{E}\left[\sup_{0\leq t\leq T}(\delta_t)^2\right]+4\,\mathbb{E}\left[\sup_{0\leq t\leq T}(w_t)^2\right]+4\,\mathbb{E}\left[\sup_{0\leq t\leq T}(N_{t^-})^2\right]<\infty$. Then 
\begin{eqnarray*}
\abs{\langle \mathcal{D}\,J(\delta),w\rangle}&\leq& \abs{\mathbb{E}\left[\int_0^T  \delta_t\,w_t\,\phi_t(\delta_t)\,\diff A_t\right]}+2\,\gamma\,\abs{\mathbb{E}\left[\int_0^T \phi_t(\delta_t)\,w_t\,\left(\int_t^T\int_{\mathbb{R}}G(\delta_s-z')\rcompr(\diff z',\,\diff s)\right)  \diff A_t\right]}\\
&& 2\,\gamma\,\abs{\mathbb{E}\left[\int_0^T \phi_t(\delta_t)\,w_t\,D^{\delta}_{t^-}\, \diff A_t\right]}+(\gamma+\alpha)\,\abs{\mathbb{E}\left[\int_0^T  \phi_t(\delta_t)\,w_t\,\diff A_t\right]}\\
&\leq &S\,\bar{\lambda}\,T\,\mathbb{E}\left[\sup_{0\leq t\leq T}(\eta_t)^2\right]+2\,\gamma\,S\,T^2\,\bar{\lambda}\mathbb{E}\left[\sup_{0\leq t\leq T}\abs{\eta_t}\right]\\
&&+2\,\gamma\,S\,T\,\bar{\lambda}\,\mathbb{E}\left[\sup_{0\leq t\leq T}(\eta_t)^2\right]+(\gamma+\lambda)\,S\,T\,\mathbb{E}\left[\sup_{0\leq t\leq T}\abs{\eta_t}\right]\\
&<&\infty\,.
\end{eqnarray*}

\section{Second G\^ateaux derivative}\label{SecondGD}
The first G\^ateaux derivative of the functional $J$ is given by
\begin{align*}
{\langle \mathcal{D}\,J(\delta),w\rangle}&= \mathbb{E}\left[\int_0^T  w_t\,\phi_t(\delta_t)\,\left(\delta_t-2\,\gamma\,\left(\int_t^T\int_{\mathbb{R}}G(\delta_s-z')\,\rcompr(\diff z',\,\diff s)\right)-2\,\gamma\,D^{\delta}_{t^-}-(\gamma+\alpha) \right)\,\diff A_t\right]\\
&= \mathbb{E}\left[\int_0^T  w_t\,\phi_t(\delta_t)\,\left(\delta_t-2\,\gamma\,\mathbb{E}_{t^-}\left[D^{\delta}_{T}\right]-(\gamma+\alpha) \right)\,\diff A_t\right]\,.
\end{align*}
Let $\delta,\,w,\,\nu\in\mcA$. The second G\^ateaux derivative of $J(\delta)$ in the directions $w$ and $\nu$, is defined as
\begin{align*}
{\langle \mathcal{D}^2\,J(\delta),w,\nu\rangle}&= \lim_{\epsilon\to 0}\frac{{\langle \mathcal{D}\,J(\delta+\epsilon\,\nu),w\rangle}-{\langle \mathcal{D}\,J(\delta),w\rangle}}{\epsilon}\,,
\end{align*}
which converges to
\begin{eqnarray*}
{\langle \mathcal{D}^2\,J(\delta),w,\nu\rangle}&=& \mathbb{E}\left[\int_0^T w_t\,\nu_t \,\phi_t'(\delta_t)\,\left(\delta_t-2\,\gamma\,\mathbb{E}_{t^-}\left[D^{\delta}_T\right]-\gamma-\alpha\right)\,\diff A_t\right]\\
&& +\mathbb{E}\left[\int_0^T w_t\,\phi_t(\delta_t)\,\left(\nu_t+2\,\gamma\,\mathbb{E}_{t^-}\left[\int_0^T \phi_s(\delta_s)\,\nu_s\,\diff A_s\right]\right)\,\diff A_t\right]\,.
\end{eqnarray*}

\end{document}